\newif\ifsubmit     
\newif\ifllncs      
\newif\ifexabs      
\newif\ifblind      

\submittrue

\ifllncs
  \documentclass[runningheads,a4paper]{llncs}

  \pagestyle{plain}
\else
  \documentclass[letterpaper,11pt,pdfa]{article}
  \usepackage[in]{fullpage}
\fi

\usepackage{iftex}
\ifPDFTeX
  \usepackage[utf8]{inputenc}
  \usepackage[noTeX]{mmap}
  \usepackage[T1]{fontenc}
\fi
\ifLuaTeX
  \usepackage{luatex85}
  \usepackage[noTeX]{mmap}
\fi

\ifllncs
\else

\fi

\usepackage{mdframed}
\usepackage{amsfonts}
\usepackage{amssymb}
\usepackage{amsthm}
\usepackage{color}

\usepackage{appendix}
\usepackage{algorithm}
\usepackage{algpseudocode}
\usepackage{braket}
\usepackage{comment}
\usepackage{url}
\usepackage{bbm}
\usepackage{multicol}
\usepackage{mdframed}
\usepackage{multirow}
\usepackage{amsmath}

\usepackage[bookmarks]{hyperref}
\usepackage[nameinlink]{cleveref}
\hypersetup{
  colorlinks,
  allcolors=blue
}

\ifllncs

  \spnewtheorem{claim}{Claim}{\bfseries}{\rmfamily}
  \crefname{claim}{claim}{claims}
  \Crefname{claim}{Claim}{Claims}
\else
  \newtheorem{theorem}{Theorem}[section]
  \newtheorem{definition}[theorem]{Definition}
  \newtheorem{remark}[theorem]{Remark}
  \newtheorem{lemma}[theorem]{Lemma}
  \newtheorem{corollary}[theorem]{Corollary}

\fi
  
  \newtheorem*{remark*}{Remark}

  \newtheorem*{theorem*}{Theorem}
  \newtheorem*{lemma*}{Lemma}
  
  \newenvironment{customtheorem}[1]
    {\innercustomtheorem}
    {\endinnercustomtheorem}
  
  \newenvironment{customlemma}[1]
    {\innercustomlemma}
    {\endinnercustomlemma}
\usepackage[style=alphabetic,minalphanames=3,maxalphanames=4,maxnames=99,backref=true]{biblatex}

  \DeclareFieldFormat{eprint:iacr}{Cryptology ePrint Archive: \href{https://ia.cr/#1}{\texttt{#1}}}
  \DeclareFieldFormat{eprint:iacrarchive}{Cryptology ePrint Archive: \href{https://eprint.iacr.org/archive/#1}{\texttt{#1}}}
  \addbibresource{bib.bib}

  \AtEveryBibitem{
    \clearlist{address}
    \clearfield{date}
    \clearfield{isbn}
    \clearfield{issn}
    \clearlist{location}
    \clearfield{month}
    \clearfield{series}
 
    \ifentrytype{book}{}{
      \clearlist{publisher}
      \clearname{editor}
    }
  }

\usepackage{appendix}
\usepackage{algorithm}
\usepackage{algpseudocode}
\usepackage{braket}
\usepackage{comment}
\usepackage{url}
\usepackage{multicol}
\usepackage{tikz}
\usetikzlibrary{arrows,automata,positioning}
\usepackage{caption}
\usepackage[caption=false]{subfig}
\usepackage[font=small,labelfont=bf]{caption}
\usepackage{comment}
\usepackage{pifont}
\usetikzlibrary{patterns}
\usetikzlibrary{crypto.symbols}



\usepackage{enumitem}
\setlist[description]{noitemsep}
\setlist[enumerate]{noitemsep}
\setlist[itemize]{noitemsep}

\usepackage{soul, xcolor, xparse}
\makeatletter
  \ExplSyntaxOn
    \cs_new:Npn \white_text:n #1
    {
      \fp_set:Nn \l_tmpa_fp {#1 * .01}
      \llap{\textcolor{white}{\the\SOUL@syllable}\hspace{\fp_to_decimal:N \l_tmpa_fp em}}
      \llap{\textcolor{white}{\the\SOUL@syllable}\hspace{-\fp_to_decimal:N \l_tmpa_fp em}}
    }
    \NewDocumentCommand{\whiten}{ m }
    {
      \int_step_function:nnnN {1}{1}{#1} \white_text:n
    }
  \ExplSyntaxOff
  
  \NewDocumentCommand{ \varul }{ D<>{5} O{0.2ex} O{0.1ex} +m } {%
    \begingroup
    \setul{#2}{#3}%
    \def\SOUL@uleverysyllable{%
      \setbox0=\hbox{\the\SOUL@syllable}%
      \ifdim\dp0>\z@
      \SOUL@ulunderline{\phantom{\the\SOUL@syllable}}%
      \whiten{#1}%
      \llap{%
        \the\SOUL@syllable
        \SOUL@setkern\SOUL@charkern
      }%
      \else
      \SOUL@ulunderline{%
        \the\SOUL@syllable
        \SOUL@setkern\SOUL@charkern
      }%
      \fi}%
    \ul{#4}%
    \endgroup
  }
\makeatother


\usepackage{mleftright}

\newcommand{\As}{\mathcal{A}}

\newcommand{\Ks}{\mathcal{K}}

\newcommand{\cA}{\mathcal{A}}

\newcommand{\cB}{\mathcal{B}}
\newcommand{\cC}{\mathcal{C}}

\newcommand{\ketbra}[2]{\ket{#1}\!\bra{#2}}

\mdfdefinestyle{figstyle}{ 
  linecolor=black!7, 
  backgroundcolor=black!7, 
  innertopmargin=10pt, 
  innerleftmargin=25pt, 
  innerrightmargin=25pt, 
  innerbottommargin=10pt 
}


\renewcommand{\kappa}{\ell}


\newcommand{\poly}{{\sf poly}}

\newcommand{\revise}[1]{{\color{red} #1}}


\newcommand{\sponge}{\mathsf{Sp}}

\ifsubmit
    \newcommand{\qipeng}[1]{}
    \newcommand{\minki}[1]{}
    \newcommand{\takashi}[1]{}
    \newcommand{\alex}[1]{}
    \newcommand{\aaram}[1]{}
\else
    \newcommand{\qipeng}[1]{{\color{red} Qipeng: #1}}
    \newcommand{\minki}[1]{{\color{olive} Minki: #1}}
    \newcommand{\takashi}[1]{{\color{orange} Takashi: #1}}
    \newcommand{\alex}[1]{{\color{green} Alex: #1}}
    \newcommand{\aaram}[1]{{\color{cyan} Aaram: #1}}
\fi
\newcommand{\forwardquery}[1]{{\color{lightgray} Related to forward-query case, currently not needed or wrong\\#1}}

\newcommand{\tok}[1]{\underset{#1}{\to}}


%
{\innercustomthm}
{\endinnercustomthm}


\title{
    Quantum Lifting for Invertible Permutations and Ideal Ciphers
}

\ifllncs
\author{
}
\institute{
}
\else
\author{Alexandru Cojocaru \thanks{University of Edinburgh}
\and Minki Hhan \thanks{The University of Texas at Austin}
\and Qipeng Liu \thanks{UC San Diego}
\and Takashi Yamakawa \thanks{NTT Social Informatics Laboratories}
\and Aaram Yun \thanks{Ewha Womans University}
}
\fi

\date{}

\begin{document}

\maketitle

\begin{abstract}
In this work, we derive the first lifting theorems for establishing security in the quantum random permutation and ideal cipher models. These theorems relate the success probability of an arbitrary quantum adversary to that of a classical algorithm making only a small number of classical queries.

By applying these lifting theorems, we improve previous results and obtain new quantum query complexity bounds and post-quantum security results. Notably, we derive tight bounds for the quantum hardness of the double-sided zero search game and establish the post-quantum security for the preimage resistance, one-wayness, and multi-collision resistance of constant-round sponge, as well as the collision resistance of the Davies-Meyer construction.
\end{abstract}

\ifllncs
\else
\tableofcontents
\fi

\section{Introduction}

The random permutation model (RPM) and the ideal cipher model (ICM) are idealized models that provide simplified analyses for cryptographic constructions based on block ciphers and hash function designs,
which often lack rigorous security foundations. Similar to the random oracle model (ROM), both RPM and ICM capture generic attacks --- those that treat the underlying cryptographic primitives as black boxes. Such models often offer insight into the best possible attacks for many natural applications\footnote{While several studies have established separations between idealized models~\cite{goldwasser2003security,canetti2004random,black2006ideal} and the standard model, these separations are often contrived (except for the very recent work \cite{khovratovich2025prove}) and rely on specific structures that can only be exploited by non-black-box attacks. 
}. This approach is commonly referred to as the so-called RPM/ICM methodology~\cite{coretti2018non}:

\begin{center}
{\it \textbf{RPM/ICM methodology.} For “natural” applications of hash functions and block
ciphers, the concrete security proven in the RPM/ICM is the right bound even in the
standard model, assuming the “best possible” instantiation for the idealized component
(permutation or block cipher) is chosen.  }  
\end{center}

In the random permutation model (RPM), every party has access to $\pi$ and $\pi^{-1}$ for a uniformly chosen random permutation $\pi$. In the ideal cipher model (ICM), each party has oracle access to $E_K(\cdot)$ and $E_K^{-1}(\cdot)$, where each $E_K(\cdot)$ is an independent random permutation.  
A wide range of constructions and proofs have been developed within the RPM/ICM framework, leading to significant successes. Many of these constructions have been standardized by the National Institute of Standards and Technology, including the Even-Mansour cipher (AES), 
the Davies-Meyer hash function (SHA-1/2 and MD5), the sponge construction (SHA-3).

\paragraph{Quantum RPM/ICM.} While RPM/ICM provides a precise characterization of generic attacks, it does not account for potential quantum attacks: i.e., the ability to compute the public function in superposition. In the quantum random permutation model (or QRPM), a quantum algorithm can query the following unitaries for one unit of cost:
\begin{align*}
    U_{\pi} \ket {x} \ket y &= \ket {x} \ket {y \oplus \pi(x)}, \text{ and }  \\
    U_{\pi^{-1}} \ket {x} \ket y &= \ket {x} \ket {y \oplus \pi^{-1}(x)}.
\end{align*}
Similarly, in the quantum ideal cipher model (or QICM), a quantum algorithm can query the function $E(\cdot,  \cdot)$ in superposition (i.e., both the key $k$ and the input $x$):
\begin{align*}
    U_{E} \ket K \ket {x} \ket y &= \ket K \ket {x} \ket {y \oplus E_K(x)}, \text{ and } \\
    U_{E^{-1}} \ket K \ket {x} \ket y &= \ket K \ket {x} \ket {y \oplus E^{-1}_K(x)}.
\end{align*}

Since the proposal of the quantum random oracle model (QROM)~\cite{AC:BDFLSZ11}, the quantum idealized models have received a lot of attention because it  characterizes the ``best-possible'' quantum generic attacks. Many tools have been developed in the QROM~\cite{ambainis2019quantum,zhandry2019record,YZ21} and almost all important constructions in the ROM, including the Fiat-Shamir transformation~\cite{DFMS19,liu2019revisiting} and the Fujisaki-Okamoto transformation~\cite{TarghiU16,HHK17,JZCWM18}, 
have their security shown in the QROM. 

However, the situation becomes far less clear in the QRPM and QICM. There are already two major differences between the QROM and QRPM/QICM:  
\begin{enumerate}
    \item Random functions exhibit perfect independence; that is, the outputs for all inputs are pairwise independent. In contrast, while the outputs of a random permutation are close to independent, the weak correlations between them completely invalidate or complicate almost all methods used in the QROM. This issue persists even when an algorithm has oracle access only to the forward permutation $\pi$ and not to its inverse $\pi^{-1}$.
    \item In both QRPM and QICM, an algorithm has quantum access not only to the original permutation $\pi$ but also to its inverse $\pi^{-1}$. The ability to query the backward oracle $\pi^{-1}$ destroys independence, making it significantly more challenging to extend QROM-based arguments to these settings.
\end{enumerate}

Various approaches have been proposed to establish security in QPRM and QICM, but each comes with its own limitations. When an algorithm has oracle access only to the forward oracle of a random permutation, one can leverage the indistinguishability between random permutations and random functions~\cite{yuen2013quantum,zhandry2013note} to argue security in QROM. This approach introduces a small additive loss in the security analysis but nonetheless preserves the overall security guarantees.  

However, when an algorithm has oracle access to both the forward and inverse oracles, it can immediately distinguish a random permutation from a random function, rendering the above method ineffective. To address this, one line of work attempts to extend Zhandry's compressed oracle technique~\cite{rosmanis2021tight,alagic2022post,DBLP:conf/asiacrypt/Unruh23,alagic2024post,MMW24}, while another introduces novel techniques~\cite{DBLP:conf/asiacrypt/HosoyamadaY18,asiacrypt/Zhandry21,alagic2023two,CP24,CPZ24} for specific constructions.  

These approaches, however, are often problem-specific and challenging to generalize. For example, \cite{MMW24} employs a strictly monotone factorization to represent permutations, establishing a lower bound for finding $(x, \pi(x))$ in a relation $R$. While their method is theoretically generalizable, it quickly becomes complex as the underlying problem grows in complexity. Similarly, \cite{CP24} introduces a symmetrization technique to prove the one-wayness of the single-round sponge construction, but this method does not easily extend to multiple rounds or to other properties of the sponge construction.

The difficulty described above in analyzing the QRPM/QICM is not merely a technical limitation. Indeed, we can construct cryptographic schemes that are secure in the (classical) RPM/ICM but insecure in the QRPM/QICM.\footnote{\cite{YZ24} gave cryptographic schemes that are secure in the (classical) ROM but insecure in the QROM. This can be extended to a separation between the QRPM/QICM and RPM/ICM by instantiating a random oracle using a permutation-based hash function that is indifferentiable from a random oracle (e.g., sponge construction~\cite{sponge,BDP+08}).} In fact, a similar gap also exists between the QROM and the ROM. Nonetheless, \cite{YZ21} has proposed a ``lifting theorem'' which, albeit with some loss, upgrades a proof in the ROM to one in the QROM.
This leads to the following natural question: 


\begin{center}  
    {\it Is there a general theorem that seamlessly lifts any classical RPM/ICM proof to a proof in QRPM/QICM?}  
\end{center}  
We answer this question affirmatively, and reprove numerous results from previous works within the QRPM/QICM framework 
as well as obtain new ones
 using simple arguments.

\subsection{Our Results}

\noindent{\bf Quantum Lifting Theorem for Interactive Search Games in RPM/ICM.} Our central result proposes a novel lifting theorem for \emph{interactive search games} in RPM/ICM that relates the success probability of an arbitrary quantum algorithm with the success probability of a classical algorithm performing a much smaller number of queries. More concretely, our main results for QRPM can be stated as follows:

\begin{theorem}[Quantum Lifting Theorem on Random Permutation]
\label{thm:quantum_lifting_informal}
Let $\mathcal{G}$ be an (interactive) search game with a  challenger $\cC$ that performs at most $k$ classical queries to the invertible random permutation $\pi : X \rightarrow X$, and let $\cA$ be an algorithm that performs $q$ quantum queries to $\pi$.
Then there exists an adversary $\cB$ making at most $k$ classical queries to $\pi$
such that:
\begin{align*}
\Pr[\cB \textrm{ wins } \mathcal{G}] \geq 
\frac{\left(1 - \frac{k^2}{|X|}\right)}{(8q+1)^{2k}}
\Pr[\cA \text{ wins } \mathcal{G}].
\end{align*} 
\end{theorem}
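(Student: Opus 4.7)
I would adapt the QROM-to-ROM lifting framework of \cite{YZ21} to the permutation and ideal-cipher setting, using a compressed permutation oracle in the spirit of \cite{DBLP:conf/asiacrypt/Unruh23} to deal simultaneously with the weak correlations built into $\pi$ and with the availability of the inverse oracle $\pi^{-1}$. The strategy has three conceptual steps: (i) form a hybrid in which both $\cA$ and $\cC$ interact with a compressed permutation oracle $O_D$ rather than with the real $\pi$; (ii) classicalize this hybrid by measuring $O_D$ at the positions $\cC$ is about to touch; and (iii) have $\cB$ use its $k$ permitted classical queries to $\pi$ to turn these measurement outcomes into real oracle answers.

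\textbf{Compressed-oracle hybrid.} Concretely, $O_D$ maintains a quantum register encoding a partial permutation with at most $q$ defined entries and perfectly simulates $(\pi,\pi^{-1})$ against $\cA$'s $q$ quantum queries. I would then route $\cC$'s $k$ classical queries through $O_D$ as well: on a classical query $x$, inspect $D$, and if $x$ is absent, sample the answer uniformly from the currently unused outputs and insert the pair. A standard random-permutation-vs-random-function argument bounds the statistical distance between this hybrid and the real game by $k^2/|X|$, so the winning probability in the hybrid is at least $(1-k^2/|X|)\Pr[\cA \text{ wins}]$; this accounts exactly for the combinatorial prefactor in the theorem.

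\textbf{Classical extraction.} To obtain the classical $\cB$, I would replace the quantum simulator by a ``measure-and-commit'' reduction. $\cB$ plays the real game against the real $\cC$, and whenever $\cC$ is about to make a classical query, $\cB$ performs a single-entry measurement on $O_D$ to produce a candidate input-output pair, spends one of its $k$ permitted queries to the real $\pi$ to learn the correct answer, and forwards that answer to $\cC$. The compressed permutation oracle assigns amplitude at least $1/\sqrt{8q+1}$ to any single consistent database entry after $q$ queries; the constant $8$ (as opposed to the $2$ that appears in the random-function case of \cite{YZ21}) reflects the fact that the bijection constraint couples forward and inverse updates and inflates the relevant norm bound. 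Applied across $2k$ extractions --- loosely, one each for the input and output sides of $\cC$'s $k$ classical queries --- this amounts to an aggregate loss of $(8q+1)^{2k}$, matching the stated bound.

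\textbf{Main obstacle.} The technical crux will be the measurement lemma itself: proving that the compressed permutation oracle puts amplitude at least $1/\sqrt{8q+1}$ on any single target database entry after $q$ quantum queries. Unlike the QROM case of \cite{YZ21}, the database is a partial bijection rather than an unordered collection of independent pairs, so the bijectivity constraint entangles forward and inverse updates; the Zhandry-style progress/norm inequalities must be rederived under these constraints while keeping the resulting constant bounded by $8$. Once this lemma is in place, chaining it across the $2k$ extractions and handling the classical randomness of $\cB$ is routine and closely mirrors the QROM lifting of \cite{YZ21}.
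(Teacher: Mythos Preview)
Your plan diverges substantially from the paper's proof, and the divergence is at exactly the point you flag as the ``main obstacle'': a clean compressed-oracle framework for invertible permutations that supports both forward and inverse queries and yields a per-entry amplitude bound of the kind you postulate does not exist in the literature. The works you cite (\cite{DBLP:conf/asiacrypt/Unruh23,MMW24}) obtain only problem-specific and rather lossy bounds; the paper's introduction explicitly notes that compressed-oracle extensions to permutations ``quickly become complex'' and are ``challenging to generalize.'' Your claimed $1/\sqrt{8q+1}$ amplitude lemma and the ``$2k$ extractions'' accounting are not supported by any existing result, and it is unclear how you would even formulate the database-measurement step when the database is a partial bijection rather than a list of independent pairs.

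The paper avoids compressed oracles entirely. Instead it works directly with the \emph{reprogrammed} permutation $\pi[\vec{x}^*\!\to\!\vec{y}^*]$ and proves a measure-and-reprogram lemma by a state-decomposition argument in the style of \cite{DFM20,YZ21}. The new idea needed for permutations is a \emph{hit/miss} dichotomy: reprogramming $\pi$ at $x^*\!\to\! y^*$ necessarily also swaps the value at $x^{\sf miss}:=\pi^{-1}(y^*)$, so the simulator must guess, for each of the $k$ target pairs, not only \emph{which} query first touches it and whether reprogramming happens before or after (as in the QROM), but also whether that query is a hit ($x^*$ or $y^*$) or a miss ($\pi^{-1}(y^*)$ or $\pi(x^*)$). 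After normalizing $\cA$ to $2q$ alternating forward/backward queries, this gives at most $(8q+1)^k$ terms in the decomposition of the final state; Cauchy--Schwarz then yields the $(8q+1)^{2k}$ loss. The $(1-k^2/|X|)$ prefactor arises not from a permutation-vs-function hybrid but from a ``goodness'' condition ensuring the $k$ reprogrammings do not interfere, which holds for random $\pi,\pi^*$ except with probability $k^2/|X|$. Your interactive setup is handled by wrapping $\cA$ and $\cC$ into a single oracle algorithm $S_{\sf interact}$ and applying the same lemma, exactly as in \cite{YZ21}.
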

\noindent Note that in almost all games, the challenger $\cC$ is efficient and thus makes only a polynomial number of queries. Consequently, we can safely assume that $1 - \frac{k^2}{|X|} \geq \frac{1}{2}$,  which does not affect the asymptotic order of the success probability in the search game.  

We demonstrate the power and simplicity of our lifting theorem through the example of double-sided zero search~\cite{DBLP:conf/asiacrypt/Unruh23}. In this game $\mathcal G_{\sf double\text{-}sided}$, a random permutation $\pi: \{0,1\}^{2n} \to \{0,1\}^{2n}$, as well as its inverse $\pi^{-1}$ are given, and the goal is to find $x, y$ such that $\pi(x) = y$, where both $x$ and $y$ have at least $n$ leading zeros. Clearly, $k = 1$, as the challenger makes only a single oracle query to the random permutation.  

For a single-classical-query algorithm $\cA$, the probability that $\cB$ wins $\mathcal G_{\sf double\text{-}sided}$ is at most $2/2^n$.
This follows from two possibilities: either the algorithm queries a valid pair or it correctly guesses one. The probabilities of these events sum to $1/2^n + 1/2^n = 2/2^n$.  
Applying our main theorem (\Cref{thm:quantum_lifting_informal}), we obtain the following bound for any $q$-query quantum algorithm in QRPM.
\begin{align*}
    \Pr[\cA \text{ wins } \mathcal G_{\sf double\text{-}sided}] \leq 2(8 q + 1)^2 \Pr[\cB \text{ wins } \mathcal G_{\sf double\text{-}sided}] = O(q^2/2^n).
\end{align*}
This bound is tight to that by~\cite{CP24}. 
Furthermore, this result can be easily generalized to the game $\mathcal G_R$ to find a pair $(x, y)$ in an arbitrary relation $R$: 
\begin{align*}
    \Pr[\cA \text{ wins } \mathcal G_R] \leq (8 q + 1)^2 \Pr[\cB \text{ wins } \mathcal G_R] = O(r_{\sf max} \cdot q^2/2^{2 n})
\end{align*}
where $r_{\sf max}/2^{2n}$ is the probability that a single query reveals a pair $(x,y) \in R$. 
This improves the bound $\tilde{O}(r_{\sf max} q^3/2^{2n})$ by~\cite{MMW24} and is also tight to  Grover's search. 

\medskip

We also prove a similar lifting theorem in the QICM. 
\begin{theorem}[Quantum Lifting Theorem on Ideal Ciphers]
\label{thm:quantum_icm_lifting_informal}
Let $\mathcal{G}$ be an (interactive) search game with a challenger $\cC$ that performs at most $k$ classical queries to an ideal cipher oracle $E:\Ks\times X\to X$, and let $\cA$ be an algorithm that performs $q$ quantum queries to the ideal cipher.
Then there exists an adversary $\cB$ making at most $k$ classical queries to $E$
such that: 
\begin{align*}
\Pr[\cB \textrm{ wins } \mathcal{G}] \geq 
\frac{\left(1 - \frac{k^2}{|X|}\right)}{(8q+1)^{2k}}
\Pr[\cA \text{ wins } \mathcal{G}].
\end{align*} 
\end{theorem}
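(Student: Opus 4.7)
The plan is to mirror the proof of \Cref{thm:quantum_lifting_informal} (the random-permutation version), exploiting the observation that an ideal cipher $E$ is structurally nothing but a product of $|\Ks|$ independent uniformly random permutations $\{E_K\}_{K \in \Ks}$ on $X$. Whatever compressed-oracle simulator and extraction argument underlies the random-permutation lifting theorem can be instantiated on each key fiber separately, because a quantum query to $U_E$ or $U_{E^{-1}}$ on input $(K, x, y)$ acts nontrivially only on the fiber over $K$. In particular, the simulator's state becomes a database of triples $(K, x, y)$ whose restriction to any fixed $K$ records a partial bijection of $X$, and the ``measurement damage'' estimates from the single-permutation analysis apply to each fiber in isolation.

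Concretely, I would construct the classical adversary $\cB$ by internally simulating both the challenger $\cC$ and the quantum adversary $\cA$, answering all of $\cA$'s (at most $q$) quantum queries with this product simulator. When $\cC$ makes its $i$-th classical query of the form $(K_i, x_i)$ (forward or inverse) to $E$, $\cB$ forwards the query to its own external ideal-cipher oracle, obtains the true answer, and uses it to extract (and simultaneously replant) a record in the simulator's database exactly as in the proof of \Cref{thm:quantum_lifting_informal}, applied to the fiber over $K_i$. Each such extraction step costs at most a factor $(8q+1)^2$ in success probability, so over all $k$ classical challenger queries the multiplicative loss compounds to $(8q+1)^{2k}$, matching the statement.

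The main obstacle I anticipate is verifying that the additive $k^2/|X|$ collision term from the single-permutation case continues to be the correct ``collision budget'' in the multi-permutation setting. This should go through with only bookkeeping: distinct keys give rise to independent permutations, so collisions across keys are impossible in the product oracle, and only queries that share a key can conflict. Summing the per-key collision probabilities over the fibers touched by the $k$ classical queries yields a bound no worse than $k^2/|X|$ (and strictly smaller whenever the queries are spread over several keys). Once this is in place, every remaining quantitative estimate transports verbatim from the random-permutation argument to the ideal cipher setting, producing exactly the bound in the theorem.
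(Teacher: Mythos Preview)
Your high-level plan---treat the ideal cipher as a family of independent permutations indexed by $K\in\Ks$ and adapt the single-permutation argument fiber by fiber---is exactly what the paper does, and your observation that the $k^2/|X|$ collision budget can only improve when the challenger's queries are spread across keys is correct (the paper indeed remarks that the bad probability is maximized when all $K_i^*$ coincide). So the architecture is right.

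The gap is in what you assume the single-permutation argument \emph{is}. You describe it as a compressed-oracle simulator with a database of partial injections and an extraction/replanting step when $\cC$ queries. That is not the mechanism behind \Cref{thm:quantum_lifting_informal}; in fact the paper stresses that compressed-oracle techniques for invertible permutations are exactly what it is trying to avoid, because they are delicate and do not obviously yield the clean $(8q+1)^{2}$-per-point loss you invoke. The actual engine is a \emph{measure-and-reprogram} argument in the style of Don--Fehr--Majenz--Schaffner, adapted to permutations: the simulator $S[\cA,\pi,\pi^*]$ samples a random internal permutation $\pi$, guesses for each of the $k$ target points an index $v_j\in[2q]\cup\{\bot\}$, a hit/miss bit $b_j$, and a before/after bit $c_j$, measures $\cA$'s $v_j$-th query, and swap-reprograms the stateful oracle $O$ from $\pi$ toward $\pi^*$ at that point. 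A state-decomposition identity writes $\cA$'s final state under the fully reprogrammed oracle as a signed sum of at most $(8q+1)^k$ branches indexed by $(\vec v,\vec b,\vec c)$, and Cauchy--Schwarz converts this into the $(8q+1)^{2k}$ multiplicative loss. The ideal-cipher version is the same simulator with keyed reprogramming $E[x\xrightarrow{K}y]$ and keyed hit/miss inputs $(K_j^{\sf hit},x_j^{\sf hit})$, $(K_j^{\sf miss},x_j^{\sf miss})$; no database, no lazy sampling, no extraction. Once you replace your compressed-oracle picture with this measure-and-reprogram simulator, everything else in your proposal carries through essentially verbatim.
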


\paragraph{Other Applications in QRPM/QICM.}

Beyond the generalized double-sided search, as previously mentioned, our lifting theorems also have many applications in the random permutation and ideal cipher models.


\paragraph{Sponge construction.}
The sponge construction~\cite{sponge} is a permutation-based hashing algorithm that underlies SHA-3. In the classical setting, it is known to be indifferentiable from a random oracle~\cite{BDP+08}, intuitively meaning it is as secure as a random oracle in the RPM. However, little is known about its security in the post-quantum setting. 
The current state of the art is that the \emph{single-round} sponge satisfies preimage-resistance and one-wayness~\cite{CP24,MMW24}, and also achieves \emph{reset indifferentiability} (even with advice) under a certain parameter regime~\cite{asiacrypt/Zhandry21,CPZ24},\footnote{
Specifically, $r\le c$ where
$r$ is the rate and  $c$ is the capacity.} implying that in this regime it is as secure as a random oracle against quantum adversaries. In contrast, no results were previously known for the multi-round sponge.

Using our lifting theorem, we reduce the post-quantum security of sponge to its classical counterpart. As a result, we obtain non-trivial security bounds for preimage-resistance, one-wayness, and (multi-)collision-resistance of \emph{constant-round} sponge. Although these bounds are not tight, this work represents the first non-trivial security result for multi-round sponge constructions.

\paragraph{Davies-Meyer and PGV hash functions.}
The Davies-Meyer construction~\cite{Winternitz84} is a block-cipher-based hashing algorithm that underlies SHA-1, SHA-2, and MD5. In the classical setting, it is proven to satisfy one-wayness and collision-resistance in the ICM~\cite{Winternitz84,BRS02,BRSS2010}. In the quantum setting, although it is shown to be one-way in the QICM~\cite{DBLP:conf/asiacrypt/HosoyamadaY18}, its collision-resistance remained an open question. 

By applying our lifting theorem, we prove that the Davies-Meyer construction satisfies collision-resistance in the QICM, albeit our result is not tight. 
We remark that our analysis is not specific to the Davies-Mayer construction, and applicable to, say, any of the PGV-hash functions~\cite{PGV93,BRS02,BRSS2010}.

\subsection{Technical Overview}

In this section, we revisit the approaches from \cite{YZ21}, outlining the barriers in QRPM/QICM and the novel ideas to overcome them. Given the similarities between QRPM and QICM, we focus on QRPM in this overview, deferring the details of QICM to the main body.

\paragraph{The Lifting Theorem in the (Q)ROM.}
We begin by reviewing the idea behind the lifting theorem in QROM from~\cite{YZ21}, which builds on the measure-and-reprogram lemma first introduced in~\cite{DFMS19} and later improved in~\cite{DFM20}.  
To make the ideas more intuitive, we focus on the function inversion problem: given a random oracle $H: \{0,1\}^m \to \{0,1\}^n$, the goal is to find an $x$ such that $H(x) = 0^n$.  
Before proceeding, we introduce the notation for a reprogrammed oracle: given a random oracle $H$, we define $H[x \to y]$ as the function that behaves identically to $H$ except that it outputs $y$ on input $x$.  

We will first look at a lemma that reduces a classical $q$-query algorithm to another classical query algorithm with a much smaller number of queries; this will provide some intuition for the lifting theorem.
Consider a \emph{classical} $q$-query algorithm $\cA$. Let $x^*$ be the final outcome of $\cA$ under a random oracle $H$. There are two possibilities: (i) $x^*$ is queried by $\cA$ among one of the $q$ queries; or (ii) $x^*$ is never queried.
Let $i^* \in \{1, 2, \ldots, q, \bot\}$ be the index of that query where $\bot$ indicates that $x^*$ is never queried. 
The key observation in~\cite{YZ21} is that, the algorithm cannot distinguish these two cases for $y^*\in \{0,1\}^n$: 
\begin{enumerate}
    \item $\cA$ has oracle access to $H[x^* \to y^*]$ for all these $q$ queries; 
    \item $\cA$ has oracle access to $H$ for the first $i^*-1$ queries, then $H [x^* \to y^*]$ for the rest of the queries.
\end{enumerate}
This is simply because without querying on $x^*$, the algorithm will have identical views on the transcript (and thus the computation). With the above observation, \cite{YZ21} defines the following simulator $S[\cA, H, H^*]$ that only makes one query to $H^*$: $S[\cA, H, H^*]$ randomly guesses a uniform $i^* \gets \{1, 2, \ldots, q, \bot\}$. If $i^* = \bot$, it returns what $\cA^H$ returns. Otherwise, it runs $\cA$ with a random oracle $H$ for the first $i^*-1$ steps, and runs the rest of the computation with $H [x \to H^*(x)]$ where $x$ is the input of the $i^*$-th query.

They show the following \emph{classical} measure-and-reprogram lemma: 
\begin{lemma}[Measure and Reprogram Lemma]\label{lem:rom_measure_and_reprogram_informal}
Let $H$ and $H^*$ be any two functions (not random functions).
Let $\cA$ be an arbitrary classical algorithm equipped with $q$ classical
queries to the oracle $H$.

Let $x^* \in X$ be an input and $y^* = H^*(x^*)$. 
Then there exists a simulator algorithm $S$ that given oracle access to $H$, $H^*$, making at most one query to $H^*$, such that for any $\cA$, and for any {output} $z$ (can be arbitrarily dependent on $H, H^*, x^*$), simulates the output of $\cA$ having oracle access to  $H[x^*\rightarrow y^*]$ (the reprogrammed version of $H$) with probability: 
\begin{align*}
    \Pr \left[S[\cA, H, H^*] \text{ outputs } z  \right] 
     \ge
 \frac{1}{(q+1)}
    \Pr \left[\cA^{H[x^*\rightarrow y^*]} \text{ outputs } z \right].
\end{align*} 
\end{lemma}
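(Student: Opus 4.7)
The plan is to first reduce to the case of a deterministic $\cA$ by fixing its internal randomness. Since the inequality to be shown is pointwise in $H, H^*, x^*$ (and the target $z$ is allowed to depend on them), averaging over $\cA$'s coins at the end will yield the claim for randomized $\cA$. So I assume throughout that $\cA$ is deterministic; then both the real execution $\cA^{H[x^*\to y^*]}$ and any fixed guess of the simulator $S$ produce deterministic transcripts.

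For the real execution, define $i^*_{\sf real}\in\{1,\ldots,q,\bot\}$ to be the index of the \emph{first} query at which $\cA$ asks its oracle the point $x^*$, or $\bot$ if no such query occurs. The guess $j$ of the simulator is uniform over $\{1,\ldots,q,\bot\}$, so it equals $i^*_{\sf real}$ with probability $\frac{1}{q+1}$. The core claim I would establish is: for every $j$, when $S$ guesses $j$ and simultaneously $i^*_{\sf real}=j$, the transcript produced by $S$ is identical to the transcript of $\cA^{H[x^*\to y^*]}$; in particular the final output agrees.

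I would verify this by a short induction on the query index. If $j=\bot$, then by definition of $i^*_{\sf real}$ the real execution never queries $x^*$, so $H[x^*\to y^*]$ and $H$ return the same answer on every query $\cA$ actually makes; hence the simulator's run of $\cA^H$ traces out the same transcript. If $j\in\{1,\ldots,q\}$, then for queries $1,\ldots,j-1$ the real execution avoids $x^*$, so $H[x^*\to y^*]$ and $H$ agree on those queries; by determinism, $\cA$'s first $j-1$ queries in the simulator (answered by $H$) match those of the real execution. Consequently, the $j$-th query $x$ asked by $\cA$ in the simulator equals $x^*$; the simulator then issues its single query $H^*(x)=y^*$ and from that point on answers with $H[x\to y^*]=H[x^*\to y^*]$, which is exactly the real oracle. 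So the two transcripts coincide from query $j$ onward as well. Summing over $j$,
\begin{align*}
\Pr[S\ \text{outputs}\ z]
\;\ge\;\frac{1}{q+1}\sum_{j\in\{1,\ldots,q,\bot\}}\Pr\bigl[\cA^{H[x^*\to y^*]}\ \text{outputs}\ z\ \text{and}\ i^*_{\sf real}=j\bigr]
\;=\;\frac{1}{q+1}\Pr\bigl[\cA^{H[x^*\to y^*]}\ \text{outputs}\ z\bigr].
\end{align*}

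The main conceptual step, which is also the only subtle one, is the observation that $S$ need not know $x^*$ in advance: by guessing the query index $j$ and adopting the $j$-th query as the reprogramming point, $S$ automatically hits the correct $x^*$ whenever $j=i^*_{\sf real}$, because the first $j-1$ queries of $\cA$ in the simulator's world are identical to those in the real world. Everything else (the union over $j$, the uniform guessing factor $\frac{1}{q+1}$, and the reduction to deterministic $\cA$) is bookkeeping. The genuine obstacle when one later tries to upgrade this to the quantum setting is precisely that this deterministic ``first query to $x^*$'' is not well-defined, which is what the measure-and-reprogram lemma of \cite{DFMS19,DFM20} is designed to handle; in the present classical lemma, no such difficulty arises.
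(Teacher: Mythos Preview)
Your proof is correct and follows essentially the same approach as the paper's sketch: guess the index $i^*$ of the first query that touches $x^*$ (with probability $\frac{1}{q+1}$), observe that the first $i^*-1$ queries are identical under $H$ and $H[x^*\to y^*]$ by determinism, and conclude that the simulator reproduces the real transcript whenever the guess is correct. The paper presents this lemma only as background from~\cite{YZ21} and gives the same informal argument you formalize here.
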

By taking expectation over $H, H^*$,  and summing over all $z$ such that $z = x^*$ and $H^*(z) = 0^n$, the left-hand side is equal to the success probability of a one-classical-query algorithm $\cB$ finding a pre-image: 
\begin{align*}
    \sum_{x^*} \mathbb{E}_{H, H^*}\left[ \Pr[S[\cA, H, H^*] \to x^*\text{ s.t. } H^*(x^*) = 0^n] \right] = \Pr[\cB \text{ finds a pre-image of }0^n].
\end{align*}
Here we simply treat $S$ as the one query algorithm who simulates $H$ itself and makes a single oracle query to $H^*$.

The right-hand side  is equal to the success probability of a $q$-classical-query algorithm $\cA$ finding a pre-image, since 
\begin{align*}
    \sum_{x^*} \mathbb{E}_{H, H^*} \left[ \Pr[\cA^{H [x^* \to y^*]} \to x^* \text{ s.t. } H^*(x^*) = 0^n] \right] = \Pr[\cA \text{ finds a pre-image of }0^n].
\end{align*}
Using the measure-and-reprogram lemma and the linearity of expectation, we have: 
\begin{align*}
    \Pr[\cB \text{ finds a pre-image of }0^n] \geq \frac{1}{(q+1)} \Pr[\cA \text{ finds a pre-image of }0^n]. 
\end{align*}
Finally, since the winning probability of $\cB$ is at most $O(1/N)$, we establish an upper bound for the success probability of $\cA$ as $O(q/N)$. 

The above inequality is the basic form of the lifting theorem.
The similar idea applies to the quantum setting, as well as a general (interactive) search game. 

\paragraph{The Classical Lifting in the RPM.}
The above approach fails even in the classical RPM/ICM setting. To illustrate this issue, consider a simple setting where an algorithm has no oracle access to the inverse of a permutation $\pi: \{0,1\}^n \to \{0,1\}^n$ and its goal is to find an $x$ such that both $x$ and $\pi(x)$ have enough leading zeros (the double-sided search problem). 
For any \emph{classical} $q$-query algorithm $\cA$, similar to the argument in ROM, there must exist some $i^* \in \{1, 2, \ldots, q, \bot\}$ such that the final output $x^*$ is either queried as the $i^*$-th query or never queried at all (when $i^* = \bot$). Following the approach of~\cite{YZ21}, we consider the following simulator $S[\cA, \pi, \pi^*]$:  
\begin{itemize}  
    \item It randomly selects $i^* \gets \{1, 2, \ldots, q, \bot\}$.  
    \item If $i^* = \bot$, it returns whatever $\cA^\pi$ outputs.  
    \item Otherwise, it runs $\cA$ with $\pi$ for the first $i^*-1$ steps, then completes the remaining computation using $\pi [x \to \pi^*(x)]$, where $x$ is the input of the $i^*$-th query.  
\end{itemize}  
Ideally, we would like to argue that $S[\cA, \pi, \pi^*]$ behaves similarly to $\cA^{\pi^*}$, up to a multiplicative loss of $(q+1)$, which is coming from guessing a correct $i^*$. However, this no longer holds due to the weak dependence inherent in permutations. Even if $S$ correctly guesses $i^*$, it remains possible that an earlier query (before the $i^*$-th query) returns $\pi^*(x)$; i.e., some input $x'$ under permutation $\pi$ evaluates to $\pi(x') = \pi^*(x)$, which makes $\pi [x \to \pi^*(x)]$ is not even a permutation. In this case, even if $x$ itself is only queried at the $i^*$-th step, the algorithm can still detect an inconsistency --- two distinct queries producing the same output contradicts the structure of a permutation. This issue becomes even more pronounced when access to the inverse oracle is provided, making such inconsistencies easier to detect.  

Our first contribution is to identify this issue as well as provide a solution to enable the classical reprogram lemma in the RPM/ICM. For a forward oracle query, we call a query $x$ ``hit'' (or just $x^{\sf hit}$) with respect to the final outcome $x^*$ if $x = x^*$, just as in the random oracle case; we call a query $x$ ``miss'' (or just $x^{\sf miss}$)  with respect to $x^*$ if $\pi(x) = \pi^*(x^*)$. When we reprogram a permutation $\pi$ with $[x^* \to \pi^*(x^*)]$, we will maintain its injective structure. We define the reprogrammed permutation as:
\begin{align*}
    \pi [x^* \to \pi^*(x^*)] (x) = \begin{cases}
        \pi^*(x^*) & \text{ if } x = x^{\sf hit} \\
        \pi(x^*) & \text{ if } x = x^{\sf miss} \\
        \pi(x) & \text{ otherwise}
    \end{cases}.
\end{align*}
In other words, when we only hardcode $[x^* \to \pi^*{(x^*)}]$, it violates the injective structure of the permutation. Thus, we will have to find the element with more than one preimage and reprogram that as well.\footnote{The idea of reprogramming a permutation by swapping two outputs is also used in \cite{alagic2022post}.} This corresponds to \Cref{fig:miss-and-hit}, where to maintain the injectiveness, we have to reprogram both $x^{\sf hit}$ and $x^{\sf miss}$. This corresponds to removing the two solid edges $x^{\sf hit}$ to $\pi^*(x^*)$ and $x^{\sf miss}$ to $\pi(x^*)$ in ~\Cref{fig:miss-and-hit} and adding two edges as in \Cref{fig:miss-and-hit-after}.

\begin{figure}[h]
    \centering
\begin{tikzpicture}[node distance=2cm, thick, ->]
    \node (A) {$\pi^{-1}(\pi^*(x^*))=x^{\sf miss}$};
    \node (B) [right=of A] {$y^* = y^{\sf hit} = \pi^*(x^{*})$};
    \node (C) [above=of B] {$x^{*} = x^{\sf hit}$};
    \node (D) [right=of C] {$\pi(x^*) = y^{\sf miss}$};
    \draw (A) -- node[midway, above] {$\pi$} (B);
    \draw[dashed] (C) -- node[midway, right] {$\pi^*$} (B);
    \draw (C) -- node[midway, above] {$\pi$} (D);
\end{tikzpicture}
    \caption{An illustration of hit and miss inputs regarding $x^*, y^*$}
    \label{fig:miss-and-hit}
\end{figure}
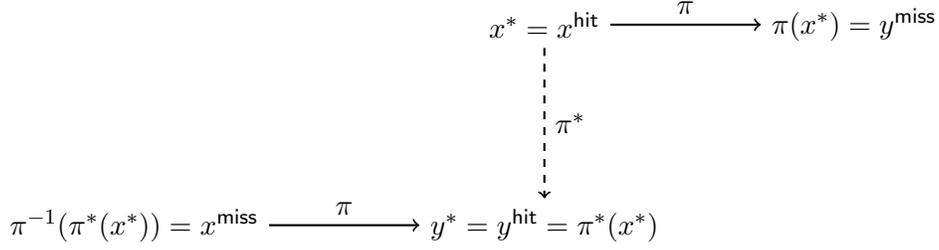

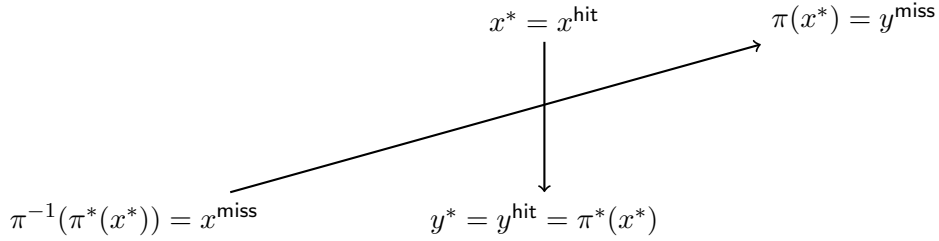
\begin{figure}[h]
    \centering
\begin{tikzpicture}[node distance=2cm, thick, ->]
    \node (A) {$\pi^{-1}(\pi^*(x^*))=x^{\sf miss}$};
    \node (B) [right=of A] {$y^* = y^{\sf hit} = \pi^*(x^{*})$};
    \node (C) [above=of B] {$x^{*} = x^{\sf hit}$};
    \node (D) [right=of C] {$\pi(x^*) = y^{\sf miss}$};
    \draw (C) -- node[midway, right] {} (B);
    \draw (A) -- (D);
\end{tikzpicture}
    \caption{An illustration of mappings in the reprogrammed permutation $\pi [x^* \to \pi^*(x^*)]$}
    \label{fig:miss-and-hit-after}
\end{figure}

Similarly, for a backward query to $\pi^{-1}$, we call a query $y$ ``hit'' (or just $y^{\sf hit}$) with respect to the final outcome $y^* = \pi^*(x^*)$ if $y^{\sf hit} = y^*$. We call a query $y$ ``miss'' (or just $y^{\sf miss}$) if $y = \pi(x^*)$ as in~\Cref{fig:miss-and-hit}.

Giving the above definition, we consider the following simulator that not only guesses the index $i^*$ but also whether the $i^*$-th query is a hit or a miss query. $S[\cA, \pi, \pi^*]$ is defined as:
\begin{enumerate}
    \item $S$ samples $(i^*, b) \gets (\{1,2,\ldots,q\} \times \{0,1\}) \cup \{(\bot, \bot)\}$
    \item If $i^* = \bot$, it returns whatever $\cA^\pi$ outputs;
    \item Otherwise, it runs $\cA$ using $\pi$ (both the forward and the inverse) for the first $i^*-1$ queries; for the $i^*$-th query:
    \begin{itemize}
        \item If $b = 0$ (indicating a hit query):
            \begin{itemize}
                \item A forward query on input $x$: compute $y = \pi^*(x)$ and reprogram $\pi [x \to y]$.
                \item A backward query on input $y$: compute $x = {\pi^*}^{-1}(y)$ and reprogram $\pi [x \to y]$.
            \end{itemize}
        \item If $b = 1$ (indicating a miss query):
        \begin{itemize}
                \item A forward query on input $x$: compute $y = \pi({\pi^*}^{-1}(\pi(x)))$ and reprogram $\pi [x \to y]$.
                \item A backward query on input $y$: compute $x = \pi^{-1}({\pi^*}(\pi^{-1}(y)))$ and reprogram $\pi [x \to y]$.
            \end{itemize}
    \end{itemize}
    Then it runs the rest of $\cA$'s computation under $\pi[x \to y]$ (and its inverse).
\end{enumerate}

We show that, by defining both hit and miss queries, it captures the first query that ``touches'' the final outcome $x^*$ --- either by querying the miss query or the hit query. 
This modified simulator allows us to establish a similar reprogram lemma in the classical RPM. For example, we can show that for any $q$-query $\cA$ for the double-sided search problem, there always exists a one-query $\cB$ such that
\begin{align*}
    \Pr[\cB \text{ wins}] \geq \frac{1}{(2q+1)} \Pr[\cA \text{ wins}]. 
\end{align*}
This can be further generalized to any search game, for which we will discuss the quantum setting.

\paragraph{State Decomposition.}

With the above idea in the classical setting, we are now ready to generalize it to the quantum setting and give a lifting theorem for QRPM. 
To explain the high level idea for our measure-and-reprogram lemma in the QRPM setting, we start with the following example.
Let $\cA$ be a $q$-quantum-query algorithm that solves the double-sided search game, and let $x^*, y^*$ be some pair. Consider $\cA$ with oracle access to $\pi [x^* \to y^*]$ and its inverse, whose computation can be written as:
\begin{align*}
    U_{q+1} \, O_{\pi [x^* \to y^*]} \, U_q \, \cdots O_{(\pi [x^* \to y^*])^{-1}} \, U_2 \, O_{\pi [x^* \to y^*]} \, U_1 \, \ket 0.
\end{align*}
Without loss of generality, we assume the first query is a forward query. We start by considering the state up to the first query: $O_{\pi[{x}^* \rightarrow {y}^*]} U_1 \ket 0$. We insert an additional identity operator and have,
\begin{align*}
    &O_{\pi[{x}^* \rightarrow {y}^*]} U_1 \ket 0 = O_{\pi[{x}^*\rightarrow {y}^*]} \, {I} \, U_1 \, \ket 0 \\
     &= O_{\pi[{x}^* \rightarrow {y}^*]} \left(I -\ketbra {{x^*}^{\sf hit}}{{x^*}^{\sf hit}} +  \ketbra {{x^*}^{\sf hit}}{{x^*}^{\sf hit}} 
     -  \ketbra {{x^*}^{\sf miss}}{{x^*}^{\sf miss}}  +  \ketbra {{x^*}^{\sf miss}}{{x^*}^{\sf miss}} 
     \right) U_1 \ket 0 \\
     &= \underbrace{O_{\pi[{x}^* \rightarrow {y}^*]} \left(I - \ketbra {{x^*}^{\sf hit}}{{x^*}^{\sf hit}} - \ketbra {{x^*}^{\sf miss}}{{x^*}^{\sf miss}}\right) U_1 \ket 0}_{(i)} + \underbrace{O_{\pi[{x}^* \rightarrow {y}^*]} \ketbra {{x^*}^{\sf hit}}{{x^*}^{\sf hit}} U_1 \ket 0}_{(ii)} \\
     &+ \underbrace{O_{\pi[{x}^* \rightarrow {y}^*]} \ketbra {{x^*}^{\sf miss}}{{x^*}^{\sf miss}} U_1 \ket 0}_{(iii)}.
\end{align*}
Here ${x^*}^{\sf hit}$ and ${x^*}^{\sf miss}$ are defined according to $\pi$ as in \Cref{fig:miss-and-hit}.

The first term $(i)$ equals to
\begin{align*}
    & O_{\pi[{x}^* \rightarrow {y}^*]} \left(I - \ketbra {{x^*}^{\sf hit}}{{x^*}^{\sf hit}} - \ketbra {{x^*}^{\sf miss}}{{x^*}^{\sf miss}}\right) U_1 \ket 0 \\
    = & O_{\pi} \left(I - \ketbra {{x^*}^{\sf hit}}{{x^*}^{\sf hit}} - \ketbra {{x^*}^{\sf miss}}{{x^*}^{\sf miss}}\right) U_1 \ket 0 \\
    = & O_{\pi} U_1 \ket 0 - O_{\pi} \ketbra {{x^*}^{\sf hit}}{{x^*}^{\sf hit}} U_1 \ket 0   - O_{\pi} \ketbra {{x^*}^{\sf miss}}{{x^*}^{\sf miss}} U_1 \ket 0. 
\end{align*}
This is because on inputs that are neither hit nor miss inputs, $O_\pi$ and $O_{\pi [x^* \to y^*]}$ are identical. Combining with other terms, we have that the state after the first query is equal to:
\begin{align*}
     O_{\pi[{x}^* \rightarrow {y}^*]} U_1 \ket 0 & = O_{\pi} U_1 \ket 0 - O_{\pi} \ketbra {{x^*}^{\sf hit}}{{x^*}^{\sf hit}} U_1 \ket 0   - O_{\pi} \ketbra {{x^*}^{\sf miss}}{{x^*}^{\sf miss}} U_1 \ket 0 \\ 
     & + O_{\pi[{x}^* \rightarrow {y}^*]} \ketbra {{x^*}^{\sf hit}}{{x^*}^{\sf hit}} U_1 \ket 0  + O_{\pi[{x}^* \rightarrow {y}^*]} \ketbra {{x^*}^{\sf miss}}{{x^*}^{\sf miss}} U_1 \ket 0. 
\end{align*}
These five terms can be interpreted in the following way:
\begin{enumerate}
    \item $O_{\pi} U_1 \ket 0$: the first term corresponds to the case that we do not measure the first query and use $\pi$ for this query.
    \item $O_{\pi} \ketbra {{x^*}^{\sf hit}}{{x^*}^{\sf hit}} U_1 \ket 0$ and $O_{\pi} \ketbra {{x^*}^{\sf miss}}{{x^*}^{\sf miss}} U_1 \ket 0$: both terms correspond to the case that we measure the first query and it is either a hit or miss query; but we still use $\pi$ for the first query. 
    Looking ahead, these two terms will correspond to $\ket{\phi_{1,0,1}}$ and $\ket{\phi_{1,1,1}}$ in the final decomposition defined below. 
    \item $O_{\pi[{x}^* \rightarrow {y}^*]} \ketbra {{x^*}^{\sf hit}}{{x^*}^{\sf hit}} U_1 \ket 0$ and $O_{\pi[{x}^* \rightarrow {y}^*]} \ketbra {{x^*}^{\sf miss}}{{x^*}^{\sf miss}} U_1 \ket 0$: both terms correspond to the case that we measure the first query and it is either a hit or miss query; we use the reprogrammed $\pi [x^* \to y^*]$ for the first query.
    Looking ahead, these two terms will correspond to $\ket{\phi_{1,0,0}}$ and $\ket{\phi_{1,1,0}}$ in the final decomposition defined below. 
\end{enumerate}

The similar argument extends to the second query, where we will decompose the component $O_\pi U_1 \ket 0$. Similar to the first query case, this decomposition introduces four more terms. By  decomposing the state to the last query, eventually we will have $(4q+1)$ terms. Among them:
\begin{itemize}
    \item There is one term $\ket {\phi_\bot}$ for the case that we do not measure any query and run $\cA$ under $\pi$.
    \item There are $q$ terms $\ket{\phi_{i,0, 0}}$ for the case that we measure the $i$-th query for $i \in \{1, 2, \ldots, q\}$, it is ${x^*}^{\sf hit}$ (or ${y^*}^{\sf hit}$, if it is a backward query). The remaining queries (including the $i$-th query) are under $\pi [x^* \to y^*]$.
    \item There are $q$ terms $\ket{\phi_{i, 1, 0}}$ for the case that we measure the $i$-th query for $i \in \{1, 2, \ldots, q\}$, it is ${x^*}^{\sf miss}$ (or ${y^*}^{\sf miss}$, if it is a backward query). The remaining queries (including the $i$-th query) are under $\pi [x^* \to y^*]$.
    \item Similarly, we have $2q$ terms $\ket {\phi_{i, b, 1}}$ for $b \in \{0, 1\}$. They stand for the case that we measure the $i$-th query for $i \in \{1, 2, \ldots, q\}$, it is either a miss or hit (depending on $b$). The $i$-th query is under $\pi$ and the remaining queries are under $\pi [x^* \to y^*]$.
\end{itemize}

By induction on the state decomposition, we can show that the original computation is equal to the summation of all these $4q+1$ terms:
\begin{align*}
    U_{q+1} \, O_{\pi [x^* \to y^*]} \, U_q \, \cdots O_{(\pi [x^* \to y^*])^{-1}} \, U_2 \, O_{\pi [x^* \to y^*]} \, U_1 \, \ket 0 = \ket{\phi_\bot} +  \sum_{i=1}^q \sum_{b, c \in\{0,1\}} (-1)^c \ket{\phi_{i, b, c}}.
\end{align*}
Thus for a projection $\Pi$ (representing the winning condition), from the above identity and Cauchy-Schwarz, we have
\begin{align*}
    & \|\Pi \, U_{q+1} \, O_{\pi [x^* \to y^*]} \, U_q \, \cdots \, O_{\pi [x^* \to y^*]} \, U_1 \, \ket 0\|^2 \\
    \leq & (4q+1) \cdot \left( \| \Pi \ket{\phi_\bot} \|^2 +  \sum_{i=1}^q \sum_{b, c \in\{0,1\}} \|\Pi \ket{\phi_{i, b, c}} \|^2\right).
\end{align*}
Here each term $\| \Pi \ket{\phi_\bot} \|^2$ or $\|\Pi \ket{\phi_{i, b, c}} \|^2$ has an operational meaning: the probability that the final outcome is in $\Pi$, where the execution will use oracles specified by $\bot$ or $i, b, c$.
We give a formal description of the simulator below.

\paragraph{The Quantum Lifting Theorem in the QRPM.}


We define the following simulator in the quantum setting for RPM. $S[\cA, \pi, \pi^*]$ is defined as:
\begin{enumerate}
    \item $S$ samples $(i, b, c) \gets (\{1,2,\ldots,q\} \times \{0,1\} \times \{0,1\}) \cup \{(\bot, \bot, \bot)\}$, where $b$ stands for if it is ``hit'' or ``miss'' and $c$ stands for whether the reprogramming happens before or after the query.
    \item If $i = \bot$, it returns whatever $\cA^\pi$ outputs;
    \item Otherwise, it runs $\cA$ using $\pi$ (both the forward and the inverse) for the first $i-1$ queries; for the $i$-th query, $S$ measures the input register :
    \begin{itemize}
        \item If $b = 0$ (indicating a hit query):
            \begin{itemize}
                \item A forward query on input $x$: compute $y = \pi^*(x)$.
                \item A backward query on input $y$: compute $x = {\pi^*}^{-1}(y)$.
            \end{itemize}
        \item If $b = 1$ (indicating a miss query):
        \begin{itemize}
                \item A forward query on input $x$: compute $y = \pi({\pi^*}^{-1}(\pi(x)))$.
                \item A backward query on input $y$: compute $x = \pi^{-1}({\pi^*}(\pi^{-1}(y)))$.
            \end{itemize}
    \end{itemize}
    Then for the remaining queries,
    \begin{itemize}
        \item If $c = 0$, it answers all $\cA$'s remaining queries using $\pi [x \to y]$.
        \item If $c = 1$, it answers $\cA$'s $i$-th query using $\pi$ and the remaining queries using $\pi [x\to y]$.
    \end{itemize}
    $S$ outputs whatever $\cA$ outputs.
\end{enumerate}
If the first step samples $(i,b,c)$, then the probability that the simulator produces $z$ is
at least 
$\|\Pi \ket{\phi_{i, b, c}} \|^2$ (or $\|\Pi \ket{\phi_{\bot}} \|^2$ if $(i,b,c)=(\bot,\bot,\bot)$).
Building on the previous discussion, we conclude that for any outcome \( z \), the probability of the simulator producing \( z \) is at least \( \frac{1}{(4q+1)^2} \) times the probability of \( \mathcal{A} \) producing \( z \).

This simulator can be easily generalized to any number of reprogramming.
In the simulator, we will choose $k$ coordinates instead of one to measure and reprogram, one for each final output $x^*_i$; similarly to the $k=1$ case, there are $(4q+1)$ such possibilities for each $x^*_i$. 
One subtlety is that, when we sequentially reprogram a permutation multiple times, these reprogramming may interfere with each other, which makes the analysis more involved. To avoid this, we introduce the "goodness" condition, which ensures that such interference does not occur.    
(See \Cref{def:good} for the formal definition of the goodness). Fortunately, we show that the goodness condition is satisfied except for an exponentially small probability.

Now we propose the measure-and-reprogram lemma.
In order to describe the formal measure-and-reprogram result, we need to additionally introduce the following notions. We will denote the reprogrammed permutation on $k$ pairs $p_1 = (x_1, y_1)$, ..., $p_k = (x_k, y_k)$ by $\pi[x_1\rightarrow y_1]...[x_k\rightarrow y_k]$.

\begin{lemma}[Measure and Reprogram Lemma, Informal]\label{lem:measure_and_reprogram_informal}
Let $\pi$ and $\pi^*$ be two fixed permutations (not random permutations).
Let $\cA$ be an arbitrary quantum algorithm equipped with $q$ quantum queries to the oracle $\pi, \pi^{-1}$.

Let $\vec{x}^* = (x_1^*, ..., x_k^*) \in X^k$ be any $k$-vector of inputs and $\vec{y}^* = (y_1^*, ..., y_k^*) = (\pi^*(x_1^*), ..., \pi^*(x_k^*))$,
such that the $k$-tuple $(x_1^*, y_1^*), ..., (x_k^*, y_k^*)$ is ``good''\footnote{For simplicity, we do not state the formal definition of ``good'' in the introduction.} with respect to $\pi$. 
Then there exists a simulator algorithm $S$ that given oracle access to $\pi$, $\pi^*$ and $\cA$, for any $z$ (can be arbitrarily dependent on $\pi, \pi^*, \vec{x}^*$), simulates the output of $\cA$ having oracle access to  $\pi[x^*_1\rightarrow y^*_1]...[x^*_k\rightarrow y^*_k]$ (the  reprogrammed version of $\pi$) with probability: 
\begin{align*}
    \Pr_{\pi, \pi^*} \left[S[\cA, \pi, \pi^*] \text{ outputs } z  \right] 
     \ge
 \frac{1}{(8q+1)^{2k}}
    \Pr_{\pi, \pi^*} \left[\cA^{\pi[x^*_1\rightarrow y^*_1]...[x^*_k\rightarrow y^*_k]} \text{ outputs } z \right].
\end{align*} 
\end{lemma}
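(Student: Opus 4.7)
The plan is to prove the lemma by a state-decomposition argument, rigorizing the sketch in the technical overview. I would first fix $\pi$ and $\pi^*$ and argue the inequality pointwise, since the expectations over the random permutation then follow by linearity and by conditioning on whether the tuple $((x_1^*,y_1^*),\ldots,(x_k^*,y_k^*))$ is good. Write $\tilde{\pi} := \pi[x_1^*\to y_1^*]\cdots[x_k^*\to y_k^*]$ and denote by $\ket{\Psi}$ the final state of $\cA$ when interacting with $O_{\tilde{\pi}}$ and $O_{\tilde{\pi}^{-1}}$. The goal is to exhibit a decomposition $\ket{\Psi} = \sum_{\alpha} (-1)^{c_\alpha}\ket{\phi_\alpha}$ over at most $(4q+1)^k$ branches $\alpha$, each of which is reproduced by the corresponding branch of the simulator $S$ with probability $1/(4q+1)^k$; Cauchy--Schwarz and uniform averaging over branches then yield the claimed $1/(4q+1)^{2k}$ loss, which I loosen to $1/(8q+1)^{2k}$.

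For the base case $k=1$, I would introduce at each query index $t \in \{1,\ldots,q\}$ the hit/miss projectors on the input register of the $t$-th query: for a forward query, $P_{\mathrm{hit}}^{(t)} = \ketbra{x^*}{x^*}$ and $P_{\mathrm{miss}}^{(t)} = \ketbra{\pi^{-1}(y^*)}{\pi^{-1}(y^*)}$; for a backward query, the analogous projectors onto $y^*$ and $\pi(x^*)$. Goodness forces $x^* \neq \pi^{-1}(y^*)$, so these two projectors are orthogonal. The crucial oracle identity
\[
O_{\tilde{\pi}}\bigl(I - P_{\mathrm{hit}}^{(t)} - P_{\mathrm{miss}}^{(t)}\bigr) = O_{\pi}\bigl(I - P_{\mathrm{hit}}^{(t)} - P_{\mathrm{miss}}^{(t)}\bigr)
\]
expresses that the reprogrammed and original oracles agree off the hit/miss support. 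I would then telescope: at each query $t$ insert $I = (I - P_{\mathrm{hit}}^{(t)} - P_{\mathrm{miss}}^{(t)}) + P_{\mathrm{hit}}^{(t)} + P_{\mathrm{miss}}^{(t)}$ and use the identity above to rewrite the off-support part as an $O_\pi$-branch, producing four new branches per query (hit/miss $\times$ reprogram-on/off). Induction on $t$ yields $4q+1$ terms: one $\bot$-branch using $O_\pi$ throughout, and for each $(t,b,c)\in\{1,\ldots,q\}\times\{0,1\}\times\{0,1\}$ a branch where the $t$-th query is projected onto hit ($b=0$) or miss ($b=1$) with the reprogramming toggled at that query ($c=0$ uses $O_{\tilde\pi}$ at query $t$, $c=1$ uses $O_\pi$). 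A single Cauchy--Schwarz on $\|\Pi_z\ket{\Psi}\|^2$ combined with the simulator's uniform choice among the $4q+1$ branches gives the $1/(4q+1)^2$ loss.

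For general $k$, I would peel off one reprogramming at a time. Set $\tilde{\pi}_j := \pi[x_1^*\to y_1^*]\cdots[x_j^*\to y_j^*]$ and apply the base-case decomposition to $(\tilde{\pi}_{k-1},\tilde{\pi}_k)$ with respect to $(x_k^*,y_k^*)$, then recurse inside each resulting branch on $(\tilde{\pi}_{k-2},\tilde{\pi}_{k-1})$, and so on down to $(\pi,\tilde{\pi}_1)$. Goodness, formalized as asserting that the hit and miss inputs associated with the $k$ pairs are pairwise distinct and that every $\tilde{\pi}_j$ is a valid permutation, guarantees that the hit/miss projectors across levels commute and that the base-case oracle identity lifts to every intermediate $\tilde{\pi}_j$. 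Composing the simulators across the $k$ levels gives one that samples $k$ independent triples $(t_j,b_j,c_j)$ from a set of size $4q+1$, and the losses multiply to $(4q+1)^{2k} \le (8q+1)^{2k}$.

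The main obstacle is the careful bookkeeping of goodness through the $k$-fold induction. Without it, reprogramming the $j$-th pair can move an input that was off-support for the $\ell$-th pair (with $\ell<j$) onto a hit or miss support, breaking the oracle identity that makes the telescoping go through. I would therefore spend the bulk of the formal argument showing that goodness makes all relevant supports disjoint across all $k$ levels and that, consequently, every intermediate oracle $O_{\tilde{\pi}_j}$ agrees with $O_{\tilde{\pi}}$ off the union of supports used in later reprogrammings, so that the branches produced by the simulator are exactly the ones appearing in the decomposition of $\ket{\Psi}$. A secondary technical point is verifying that the simulator can compute the required miss input on the fly using only its access to $\pi,\pi^{-1},\pi^*,(\pi^*)^{-1}$ via the formulas $y = \pi((\pi^*)^{-1}(\pi(x)))$ and $x = \pi^{-1}(\pi^*(\pi^{-1}(y)))$ from the overview, which is immediate.
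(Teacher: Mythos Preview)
Your overall scaffold—state decomposition plus Cauchy--Schwarz—matches the paper, and your $k=1$ base case is essentially identical to theirs. But two points deserve attention.

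First, a minor one: you treat each query as either forward or backward when writing down the hit/miss projectors, and obtain $4q+1$ branches. In the model of the lemma, a single query to $U_\pi$ may be in superposition over the direction bit $b$. The paper handles this by first putting $\cA$ in normal form (\Cref{lem:quantum_algo}), replacing $q$ queries to $U_\pi$ by $2q$ alternating queries to $O_\pi$ and $O_{\pi^{-1}}$; this is where the $8q+1$ (rather than $4q+1$) genuinely comes from, not from a gratuitous loosening. Your plan as written does not account for direction superpositions.

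Second, and more substantive: your induction on $k$ does not compose as cleanly as you suggest. After you peel off the outermost pair $(x_k^*,y_k^*)$, a typical branch $\ket{\phi_\alpha}$ is the final state of an execution that uses $\tilde\pi_{k-1}$ up to some query $t_k$ and $\tilde\pi_k$ thereafter (with a projector inserted at $t_k$). This is \emph{not} an execution with the single oracle $\tilde\pi_{k-1}$, so you cannot simply ``recurse on $(\tilde\pi_{k-2},\tilde\pi_{k-1})$'' by reapplying your $k=1$ base case. You would need a more general base case that applies to algorithms whose oracle is itself a time-varying partial reprogramming of $\pi$, and you would need the oracle identity $O_{\pi'}(I-P_{\mathrm{hit}}-P_{\mathrm{miss}})=O_{\pi''}(I-P_{\mathrm{hit}}-P_{\mathrm{miss}})$ to hold for the appropriate pairs $(\pi',\pi'')$ at every query; goodness does ultimately supply this (via \Cref{cla:claim}), but the bookkeeping is more than what you describe. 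The paper sidesteps this entirely by inducting on the query index $t$ while tracking all $k$ pairs simultaneously: at the $(t{+}1)$-th query it inserts projectors only for those $j$ \emph{not yet reprogrammed in the current branch}, so each branch's oracle at each step is a well-defined partial reprogramming $\pi[\vec{x}^*_J\to\vec{y}^*_J]$, and \Cref{cla:claim} directly gives the needed oracle identity. This also naturally enforces that the measured query indices $v_j$ are distinct, matching the simulator of \Cref{def:quantum_simulator}. Your layered recursion would instead produce a simulator sampling $k$ independent triples without this distinctness constraint, which is harmless for the bound but does not align with the simulator the paper actually analyzes.
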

Here, we have \( (8q+1) \) instead of \( (4q+1) \) because we consider the most general algorithm, which can make superposition queries to \( \pi \) and \( \pi^{-1} \) within a single query. To handle this, we decompose each query into two: one querying only \( \pi \) and the other querying only \( \pi^{-1} \). This transformation introduces an additional multiplicative factor of \( 2 \).

Finally, for any game $\mathcal{G}$, by summing over all valid $\vec{x}^*, \pi(\vec{x}^*) \in X^k \times X^k$, when take expectation over $\pi, \pi^*$, we have for any $q$-quantum-query $\cA$, there exists a $k$-classical-query $\cB$ such that 
\begin{align*}
    \Pr[\cB \text{ wins } \mathcal{G}] \geq \frac{(1 - k^2/|X|)}{(8q+1)^{2k}}\Pr[\cA \text{ wins } \mathcal{G}].
\end{align*}
Here $(1 - k^2/|X|)$ 
comes from the probability that the goodness condition holds for uniform $\pi,\pi^*$. 
This completes the high level idea of our lifting theorem in the QRPM. 

\subsection{Concurrent Work}
A concurrent work by Alagic, Carolan, Majenz, and Tokat~\cite{cryptoeprint:2025/731} establishes quantum indifferentiability for the (multi-round) sponge construction. Although their bounds are still not tight, their results encompass ours in most settings, with a few exceptions, such as preimage-resistance in the single- and two-round cases, and collision-resistance in the single-round case. However, their approach is tailored specifically to sponge, whereas our lifting theorem applies more broadly to any permutation-based construction.

\subsection{Paper Organization}
In \Cref{sec:preparation} we introduce a series of intermediate results that are going to be used to prove the main lifting theorems. Part of the proofs of these results are deferred to \Cref{app:deferred}. \Cref{sec:classical_lifting} contains the classical lifting for permutations as a classical analogue of our main quantum lifting theorem, shown in \Cref{sec:quantum_lifting},
 while the extension of the quantum lifting to the interactive setting is proven in \Cref{app:interactive}.
The quantum lifting theorem in the ideal cipher model is shown in \Cref{app:ideal_cipher}. Finally, the applications of our quantum lifting theorems can be found in \Cref{app:applications}.



\section{Preparation for Lifting Theorem} \label{sec:preparation}

In this section, we introduce notations, definitions and easy lemmas, that are used in the proofs of the (classical and quantum) lifting theorem for permutations in \Cref{sec:classical_lifting,sec:quantum_lifting}. 
\subsection{Algorithms with Permutation Oracles}
\label{sec:prelim_algo_permutation}
We define basic notations for algorithms with oracle access to an invertible permutation. 

Let $\pi: X \to X$ be a permutation and $\pi^{-1}$ be its inverse.  
A $q$-query classical algorithm with permutation oracles $\pi$ can query both $\pi$ and $\pi^{-1}$, but in total $q$ times. A $q$-query quantum algorithm can query the following unitary in total $q$ times:
\begin{align*}
    U_\pi \ket b \ket x \ket y = \begin{cases}
        \ket b \otimes O_\pi \left(\ket x \ket {y}\right) & \text{ if } b = 0 \\
        \ket b \otimes O_{\pi^{-1}} \left(\ket x \ket {y} \right) & \text{ if } b = 1,
    \end{cases} 
\end{align*}
where $O_\pi, O_{\pi^{-1}}$ are the coherent computation for $\pi$ and $\pi^{-1}$, 
We will typically denote a quantum (or classical) query algorithm by $\cA$. By $\cA^{\pi}$ we mean that $\cA$ has quantum (or classical) access to the permutation $\pi$, as well as to its inverse, $\pi^{-1}$.

\begin{lemma}[Normal form]
\label{lem:quantum_algo}
    Let $\As$ be a quantum algorithm making at most $q$ quantum queries to a permutation $\pi$ (i.e., $U_\pi$). There always exists a quantum algorithm whose output is identical to that of $\As$, making at most $2q$ quantum queries to $O_\pi$ and $O_{\pi^{-1}}$, and of the following normal form:
    \begin{align*}
        O_{\pi^{-1}} U_{2q} O_{\pi} U_{2q-1} \cdots O_{\pi^{-1}} U_2 O_{\pi} U_1 \ket 0;
    \end{align*}
    i.e., making exactly $q$ queries to $O_\pi$ (for odd-numbered queries) and $q$ queries to $O_{\pi^{-1}}$ (for even-numbered queries).
\end{lemma}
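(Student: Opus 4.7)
The plan is to prove this normal-form lemma by an explicit simulation that replaces each controlled query $U_\pi$ with one $O_\pi$ query followed by one $O_{\pi^{-1}}$ query, separated by $\pi$-independent unitaries. First, I would observe that any quantum algorithm with $q$ oracle calls can be rewritten in the canonical form $\As = V_q \, U_\pi \, V_{q-1} \, U_\pi \cdots U_\pi \, V_0 \ket{0}$ for suitable unitaries $V_0, \ldots, V_q$ that do not query $\pi$; this follows from the principle of deferred measurement together with folding all internal computation between consecutive queries into a single $\pi$-independent unitary.

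The core step is to design a gadget that implements one application of $U_\pi$ on the query registers $(B, X, Y)$ using exactly one call to $O_\pi$ followed by one call to $O_{\pi^{-1}}$, with auxiliary ancilla registers initialized to $\ket{0}$. The idea is to use conditional-SWAP operations controlled by $b$: when $b=0$, route $(X, Y)$ into the "forward slot" (the input registers of $O_\pi$) and leave the "inverse slot" (the input registers of $O_{\pi^{-1}}$) in its initial state; when $b=1$, do the opposite. After applying $O_\pi$ and $O_{\pi^{-1}}$ on the two slots, undo the routing. To prevent the unused-slot residue from entangling with the control bit $b$, I would use the in-place permutation identity $O_{\pi^{-1}} \circ \mathrm{SWAP} \circ O_\pi \, \ket{x}\ket{0} = \ket{\pi(x)}\ket{0}$ (and its symmetric counterpart for $\pi^{-1}$) to ensure that the ancilla can be cleanly returned to $\ket{0}$ without any additional oracle queries.

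Once the gadget is in place, I would concatenate the $q$ gadgets and absorb all the $\pi$-independent routing unitaries into the neighboring $V_i$'s, obtaining the normal form
\[
 O_{\pi^{-1}} \, U_{2q} \, O_{\pi} \, U_{2q-1} \, \cdots \, O_{\pi^{-1}} \, U_2 \, O_{\pi} \, U_1 \, \ket{0}
\]
with exactly $q$ queries to $O_\pi$ at odd positions and $q$ queries to $O_{\pi^{-1}}$ at even positions, for a total of $2q$ queries.

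The main obstacle will be showing that the simulation produces the \emph{identical} output distribution to $\As$, i.e., that the gadget really implements $U_\pi$ on the main registers with ancillas decoupled. Naive routing strategies leave junk values such as $\pi(0)$ or $\pi^{-1}(0)$ in the unused-slot ancilla, and these values differ between the $b=0$ and $b=1$ branches, thereby destroying the coherence of $\As$'s superposition over $b$ (and potentially over $x$). The careful combination of the in-place SWAP trick with conditional routing, together with a choice of the unused-slot input that makes its oracle response predictable and independent of the main computation, is what makes the argument go through; checking this invariant across all branches is where the bulk of the detail will lie.
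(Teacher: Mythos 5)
Your overall architecture (canonical form, a per-query gadget using one $O_\pi$ and one $O_{\pi^{-1}}$, absorbing the routing unitaries into the $U_i$'s) is the right one, and you correctly identify the crux: the dummy query on the unused slot leaves residue that differs between the $b=0$ and $b=1$ branches and would decohere the control register. The paper itself dismisses this with a one-line appeal to ``dummy queries,'' so the gadget is exactly where the content lies. However, the mechanism you propose to clean the residue does not work. The identity $O_{\pi^{-1}}\circ\mathrm{SWAP}\circ O_\pi\ket{x}\ket{0}=\ket{\pi(x)}\ket{0}$ requires \emph{both} oracle calls to act on the \emph{same} register pair, so it cannot simultaneously serve as the real query on one slot and the cleanup of the other slot; and if you do spend both calls on it, you obtain the in-place functionality $\ket{x}\mapsto\ket{\pi(x)}$ rather than the XOR functionality $\ket{x}\ket{y}\mapsto\ket{x}\ket{y\oplus\pi(x)}$ of $U_\pi$ -- restoring the query register to $x$ (or, in the two-slot picture, erasing the $\pi(0)$ vs.\ $\pi^{-1}(0)$ deposited in a $\ket{0}$-initialized dummy slot) requires further oracle calls, since those values are not computable by $\pi$-independent unitaries. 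So with $\ket{0}$-initialized ancillas the gadget does not close at the stated query budget.

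The standard fix, which I believe is what ``dummy queries'' refers to, is to initialize the \emph{response} register of the unused slot in the uniform superposition $\ket{\hat{+}}:=|X|^{-1/2}\sum_{y}\ket{y}$. Since $O_\pi\bigl(\ket{x'}\otimes\ket{\hat{+}}\bigr)=\ket{x'}\otimes\ket{\hat{+}}$ for every $x'$ (the XOR by $\pi(x')$ permutes the uniform superposition onto itself), the dummy query acts as the exact identity and the ancilla returns to a branch-independent state. This makes the controlled queries $\ketbra{0}{0}\otimes O_\pi+\ketbra{1}{1}\otimes I$ and $\ketbra{1}{1}\otimes O_{\pi^{-1}}+\ketbra{0}{0}\otimes I$ each exactly implementable with a single plain query plus controlled swaps, and their composition is exactly $U_\pi$; concatenating then yields the normal form with no loss. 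With this replacement your proof goes through; without it, the coherence invariant you flag at the end genuinely fails.
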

\begin{proof}
    Every oracle access to $U_\pi$ can be replaced with one query access to $O_\pi$ and one query access to $O_{\pi^{-1}}$ (by introducing dummy queries). 
\end{proof}


\subsection{Reprogramming of Permutations}
Our lifting theorems are established using simulators that gradually reprogram a permutation given to an algorithm. We introduce notation for reprogramming of permutations and show their basic properties.

\begin{definition}
For a permutation $\pi:X\rightarrow X$ and $\vec{x}=(x_1,...,x_k)\in X^k$, we define  $\pi(\vec{x})=(\pi(x_1),...,\pi(x_k))$. 
\end{definition}

\begin{definition}[Reprogramming permutations]
    \label{def:perm_reprog}
    Let a permutation $\pi : X \rightarrow X$ and $(x, y) \in X \times X$ be an arbitrary pair. Then we denote the reprogramming of $\pi$ by $(x, y)$ as:
    \begin{equation}
        \pi[x \rightarrow y](z) = 
        \begin{cases}
                     y   & \text{ if } z = x\\
                    \pi(x) & \text{ if } z = \pi^{-1}(y) \\
                    \pi(z)       & \text{ if } z \not\in \{x, \pi^{-1}(y)\}
        \end{cases}
    \end{equation}
    If we denote $(x, y)$ by a pair $p$, we also use $\pi[p]$ for $\pi[x \to y]$.
    
    Similarly, for $k$ pairs $p_1 = (x_1, y_1), ..., p_k = (x_k, y_k)$ we can define the reprogramming of $\pi$ by $p_1, ..., p_k$, denoted by $\pi[x_1 \rightarrow y_1]...[x_k \rightarrow y_k]$ (or $\pi [p_1] \ldots [p_k]$) in a recursive manner by $\pi[x_1 \rightarrow y_1]...[x_k \rightarrow y_k]:=(\pi[x_1 \rightarrow y_1]...[x_{k-1}\to y_{k-1}])[x_k \rightarrow y_k]$.
    We often denote $\pi [x_1 \to y_1] \ldots [x_k \to y_k]$ by $\pi [\vec{x} \to \vec{y}]$ for brevity. 
\end{definition}

\begin{lemma}\label{lem:reprogramming_inverse}
    For any permutation $\pi$ and pairs $(x_1, y_1), \dots, (x_k, y_k)$, we have
    \[
    (\pi[x_1\to y_1]\dots[x_k\to y_k])^{-1}=\pi^{-1}[y_1\to x_1]\dots [y_k\to x_k].
    \]
\end{lemma}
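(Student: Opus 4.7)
The plan is to reduce the $k$-fold statement to the $k=1$ case by induction, and to prove the $k=1$ case either by direct case analysis or, more cleanly, by recognizing reprogramming as composition with a transposition.

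First I would handle the base case $k=1$: show that $(\pi[x\to y])^{-1}=\pi^{-1}[y\to x]$. The cleanest way is to observe that the definition in \Cref{def:perm_reprog} is exactly precomposition with the transposition $\tau=(x,\,\pi^{-1}(y))$ (understood as the identity when $x=\pi^{-1}(y)$, i.e.\ when $y=\pi(x)$). Concretely, $\pi[x\to y](z)=\pi(\tau(z))$, as one checks on the three cases $z=x$, $z=\pi^{-1}(y)$, $z\notin\{x,\pi^{-1}(y)\}$. Hence $(\pi[x\to y])^{-1}=\tau^{-1}\circ\pi^{-1}=\tau\circ\pi^{-1}$, since transpositions are involutions. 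On the other side, $\pi^{-1}[y\to x]$ is by definition the permutation obtained from $\pi^{-1}$ by swapping its outputs at positions $y$ and $(\pi^{-1})^{-1}(x)=\pi(x)$; equivalently it equals $\pi^{-1}\circ\sigma$ where $\sigma=(y,\pi(x))$. So the target identity reduces to the single equality of functions $\tau\circ\pi^{-1}=\pi^{-1}\circ\sigma$, which one verifies by evaluating both sides on $y$, on $\pi(x)$, and on any $z\notin\{y,\pi(x)\}$. (Alternatively, one can just tabulate $(\pi[x\to y])^{-1}$ directly from \Cref{def:perm_reprog}: it sends $y\mapsto x$, $\pi(x)\mapsto\pi^{-1}(y)$, and acts as $\pi^{-1}$ elsewhere, which matches $\pi^{-1}[y\to x]$ on the nose.)

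For the inductive step, set
\[
\pi' \;:=\; \pi[x_1\to y_1]\dots[x_{k-1}\to y_{k-1}].
\]
By the inductive hypothesis, $(\pi')^{-1}=\pi^{-1}[y_1\to x_1]\dots[y_{k-1}\to x_{k-1}]$. Applying the base case to the permutation $\pi'$ and the pair $(x_k,y_k)$ gives $(\pi'[x_k\to y_k])^{-1}=(\pi')^{-1}[y_k\to x_k]$. Combining these two identities yields
\[
\bigl(\pi[x_1\to y_1]\dots[x_k\to y_k]\bigr)^{-1} \;=\; \pi^{-1}[y_1\to x_1]\dots[y_k\to x_k],
\]
as desired. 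Note that the recursive form of \Cref{def:perm_reprog} is exactly what makes this induction go through, because each layer of reprogramming is applied to the permutation built up so far.

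The proof has essentially no obstacle: the only thing to be careful about is the degenerate case $x=\pi^{-1}(y)$ in the base case (where $\tau$ is the identity and the reprogramming is trivial), which is handled uniformly by the convention that a transposition of an element with itself is the identity. Everything else is bookkeeping on three cases, and the inductive step is purely formal.
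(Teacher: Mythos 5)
Your proof is correct and follows essentially the same route as the paper, which simply notes that the $k=1$ case suffices (via the recursive definition of multi-reprogramming) and skips the verification; you supply that verification explicitly. Your identification of reprogramming as composition with the transposition $(x\ \pi^{-1}(y))$ is sound (and is in fact the same device the paper uses in its appendix proof of \Cref{lem:commutativity_disjoint_pairs}), and your handling of the degenerate case $y=\pi(x)$ is fine.
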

\begin{proof}
    The proof is trivial: it suffices to check the statement for the case $k=1$, which we skip.
\end{proof}

\begin{definition}[Disjoint pairs]\label{def:disjoint}
    For $k$ pairs $p_1 = (x_1, y_1), ..., p_k = (x_k, y_k)$, we say they are disjoint pairs (or simply disjoint) if there is no duplicated $x$ or $y$ entries: for any $i<j$, we have $x_i\neq x_j$ and $y_i\neq y_j$. 
    We say $\vec{x}, \vec{y} \in X^k \times X^k$ are disjoint if $(x_1, y_1), \ldots, (x_k, y_k)$ are disjoint.
    
    Similarly, we say $\vec{x}$ (or $\vec{y}$) is disjoint if there are no duplicated entries.  
\end{definition}

\begin{lemma}[Commutativity of reprogramming for disjoint pairs]\label{lem:commutativity_disjoint_pairs}
    For every permutation $\pi: X \to X$, disjoint $\vec{x}, \vec{y} \in X^k \times X^k$, and any permutation $\sigma: [k] \to [k]$, we have:
    \begin{align*}
        \pi [x_1 \to y_1] \ldots [x_k \to y_k] = \pi [x_{\sigma(1)} \to y_{\sigma(1)}] \ldots [x_{\sigma(k)} \to y_{\sigma(k)}].
    \end{align*} 
\end{lemma}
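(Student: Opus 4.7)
The plan is to reduce the statement to the base case $k=2$ and then verify two-pair commutativity by a direct case analysis. Since the symmetric group $S_k$ is generated by adjacent transpositions, and since any sub-pair of a disjoint $k$-tuple is itself disjoint, it suffices to prove: for every permutation $\tilde\pi:X\to X$ and every pair of pairs $(a,b),(c,d)$ with $a\neq c$ and $b\neq d$,
\[
\tilde\pi[a\to b][c\to d] \;=\; \tilde\pi[c\to d][a\to b]. \qquad (\star)
\]
Applying $(\star)$ to intermediate permutations of the form $\pi[x_{i_1}\to y_{i_1}]\cdots[x_{i_j}\to y_{i_j}]$ lets me swap any two adjacent entries in the reprogramming sequence, so iterating realizes an arbitrary $\sigma\in S_k$.

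To prove $(\star)$, I would first reformulate a single reprogramming as left-composition with a transposition on the codomain: directly from \Cref{def:perm_reprog}, one checks $\tilde\pi[a\to b] = \tau_{b,\tilde\pi(a)}\circ\tilde\pi$, where $\tau_{u,v}$ swaps $u$ and $v$ in $X$ (and is the identity when $u=v$). Iterating,
\[
\tilde\pi[a\to b][c\to d] = \tau_{d,\,\tilde\pi[a\to b](c)}\circ\tau_{b,\,\tilde\pi(a)}\circ\tilde\pi,
\]
and symmetrically for the other ordering. Both sides of $(\star)$ agree with $\tilde\pi$ outside the set $S=\{a,c,\tilde\pi^{-1}(b),\tilde\pi^{-1}(d)\}$, so it suffices to compare their values on the at most four points of $S$. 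Disjointness excludes $b=d$ and $\tilde\pi(a)=\tilde\pi(c)$, so the remaining case split is over whether $\tilde\pi(a)=d$ or $\tilde\pi(c)=b$ (equivalently $a=\tilde\pi^{-1}(d)$ or $c=\tilde\pi^{-1}(b)$) holds. In the generic case where neither coincidence occurs, the two transpositions $\tau_{b,\tilde\pi(a)}$ and $\tau_{d,\tilde\pi(c)}$ have disjoint supports and commute on the nose; substituting into both sides of $(\star)$ yields equality immediately.

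I expect the main obstacle to be the bookkeeping in the interacting sub-cases, say $\tilde\pi(a)=d$: the first reprogramming $[a\to b]$ already relocates the value $d$, so the transposition induced by the subsequent $[c\to d]$ on the left differs from the one induced on the right, and one must verify that the two compositions nonetheless coincide. Concretely, in this sub-case both sides reduce to the same three-cycle on $\{b,d,\tilde\pi(c)\}$ (and the identity outside), which a short direct evaluation on the points of $S$ confirms. The remaining sub-cases ($\tilde\pi(c)=b$ only; both coincidences holding simultaneously; and the trivial cases where a reprogramming is a no-op because $\tilde\pi(a)=b$ or $\tilde\pi(c)=d$) are finite and each is dispatched by an analogous direct verification, completing the proof.
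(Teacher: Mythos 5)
Your proposal is correct and follows essentially the same route as the paper: reduce to the two-pair case via adjacent transpositions, express a single reprogramming as composition with a transposition (you compose on the left with a codomain transposition $\tau_{b,\tilde\pi(a)}$, the paper composes on the right with the domain transposition $(a\ \tilde\pi^{-1}(b))$ — a mirror-image of the same identity), and then split into the same four cases according to whether $\tilde\pi(a)=d$ and/or $\tilde\pi(c)=b$. The sub-case verifications you sketch (disjoint-support commutation in the generic case, a common three-cycle in the interacting cases) match the paper's computations.
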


The proof of this will be given in the Appendix, page~\pageref{lem:commutativity_disjoint_pairs_proof}.

\if0
Based on the above lemma, we give the following definition.
\begin{definition}
    For every permutation $\pi: X \to X$ and any disjoint $\vec{x}, \vec{y} \in X^k \times X^k$, we denote $\pi [x_1 \to y_1] \ldots [x_k \to y_k]$ by $\pi [\vec{x} \to \vec{y}]$.
\end{definition}
\fi



\subsubsection*{Good Tuples of Permutations}
In our lifting theorems, we consider a simulator that gradually reprograms a permutation $\pi$ according to disjoint pairs $p_1^* = (x_1^*, y_1^*), \ldots, p_k^* = (x_k^*, y_k^*)$. In its analysis, it is useful to ensure that, 
if one reprogramming changes the  value of the permutation on some input, no subsequent reprogramming changes that value. 
To guarantee this property, we define the \emph{goodness condition} for $(p_1^*, \dots, p^*_k)$ as follows:

\begin{definition}[Good tuples]
\label{def:good}
We say that a $k$-tuple of pairs $(p_1^*,...,p^*_k)$, where $p_1^* = (x_1^*, y_1^*), \ldots, p_k^* = (x_k^*, y_k^*)$, is good w.r.t.\ $\pi$ if the following hold:
\begin{itemize}
\item $p_1^*, \ldots, p_k^*$ are disjoint~(\Cref{def:disjoint}), and
\item $\pi(x^*_i)\ne y^*_j$ for any $i,j \in [k]$. 
\end{itemize} 
\end{definition}

\begin{definition}[Good pairs of permutations] \label{def:support_G}
For any distinct $\vec{x}^*=(x_1^*,...,x^*_k)$, 
let $G[\vec{x}^*]$  be the set consisting of all pairs $(\pi,\pi^*)$ such that the $k$-tuple
$(p_1^*=(x_1^*,\pi^*(x_1^*)),...,p_k^*=(x_k^*,\pi^*(x_k^*)))$ is good w.r.t.\ $\pi$. 
\end{definition}
Then the following lemmas are easy to prove.
\begin{lemma}[Reprogramming on good tuples]\label{lem:reprogram_good_tuples}
    Consider any permutation $\pi$ and $k$ pairs $p_1^\ast,\dots, p_k^\ast$ with $p_j^\ast=(x_j^\ast, y_j^\ast)$ for $j=1, \dots, k$.  Suppose the tuple of pairs $(p_1^\ast, \dots, p_k^\ast)$ is good w.r.t.\ $\pi$. Then we have: 
    \[
    \pi[\vec{x}^* \to \vec{y}^*](z)=
    \begin{cases}
    y_j^\ast & \text{if $z=x_j^\ast$ for some $j\in[k]$,}\\
    \pi(x_j^\ast) & \text{if $z=\pi^{-1}(y_j^\ast)$ for some $j\in[k]$,}\\
    \pi(z) & \text{otherwise.}
    \end{cases}
    \]
\end{lemma}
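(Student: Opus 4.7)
The plan is to prove the identity by induction on $k$. The base case $k=1$ is immediate from \Cref{def:perm_reprog}: the three branches of the claim read off directly as the three branches of the definition, with the ``$j\in[k]$'' indexing collapsing to the single pair $p_1^\ast$.

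For the inductive step, set $\pi' := \pi[x_1^\ast \to y_1^\ast] \cdots [x_{k-1}^\ast \to y_{k-1}^\ast]$. The first thing to observe is that $(p_1^\ast,\dots, p_{k-1}^\ast)$ is still good w.r.t.\ $\pi$, because disjointness restricts to any sub-tuple and the condition $\pi(x_i^\ast)\neq y_j^\ast$ for $i,j\in[k-1]$ is a consequence of the same for $i,j\in[k]$. Hence the induction hypothesis yields a clean three-case description of $\pi'$. Next, I would apply \Cref{def:perm_reprog} one more time to unfold $\pi'[x_k^\ast\to y_k^\ast](z)$ according to whether $z=x_k^\ast$, $z=(\pi')^{-1}(y_k^\ast)$, or neither, and verify agreement with the claim by case analysis on $z$.

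The enabling fact, which I expect to be the technical heart, is the identity $(\pi')^{-1}(y_k^\ast)=\pi^{-1}(y_k^\ast)$. To establish it, I would use the inductive description of $\pi'$: disjointness gives $y_k^\ast\neq y_j^\ast$ for $j<k$, and goodness gives $\pi^{-1}(y_k^\ast)\neq x_j^\ast$ for $j<k$ (otherwise $\pi(x_j^\ast)=y_k^\ast$). Therefore $\pi^{-1}(y_k^\ast)$ lands in the ``otherwise'' branch of $\pi'$, so $\pi'(\pi^{-1}(y_k^\ast))=y_k^\ast$. By the analogous argument, $\pi'(x_k^\ast)=\pi(x_k^\ast)$, since $x_k^\ast\neq x_j^\ast$ by disjointness and $x_k^\ast\neq \pi^{-1}(y_j^\ast)$ by goodness. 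With these two identities in hand, the case analysis is routine: for $z=x_j^\ast$ with $j<k$ we land in the outer ``otherwise'' branch and induction gives $\pi'(z)=y_j^\ast$; for $z=x_k^\ast$ we obtain $y_k^\ast$ directly; for $z=\pi^{-1}(y_j^\ast)$ with $j<k$ we again fall in the outer ``otherwise'' branch (using goodness to rule out collision with $x_k^\ast$ or $\pi^{-1}(y_k^\ast)$) and induction gives $\pi(x_j^\ast)$; for $z=\pi^{-1}(y_k^\ast)$ we land in the middle branch and the identity above yields $\pi'(x_k^\ast)=\pi(x_k^\ast)$; otherwise $z$ avoids every special point and the value is $\pi(z)$ throughout.

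The main obstacle is purely bookkeeping: checking that the goodness hypothesis eliminates every possible overlap between the sub-cases generated by the two nested invocations of \Cref{def:perm_reprog} (for instance, $x_k^\ast$ coinciding with some $\pi^{-1}(y_j^\ast)$, or $\pi^{-1}(y_k^\ast)$ coinciding with some $x_j^\ast$). Once these exclusions are recorded, the identity follows by direct substitution, and no deeper structural argument is needed.
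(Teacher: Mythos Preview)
Your proposal is correct and follows essentially the same approach as the paper: both arguments hinge on the identities $\pi'(x_k^\ast)=\pi(x_k^\ast)$ and $(\pi')^{-1}(y_k^\ast)=\pi^{-1}(y_k^\ast)$ (the paper isolates these as a helper lemma, \Cref{lem:good_tuple_properties}) and then finish by case analysis. One minor organizational difference: the paper treats the cases $z=x_j^\ast$ and $z=\pi^{-1}(y_j^\ast)$ by invoking commutativity of reprogramming (\Cref{lem:commutativity_disjoint_pairs}) to shuffle index $j$ into the last position, whereas your unified induction handles all $j$ directly and thus avoids that dependency.
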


The proof of this lemma is given in the Appendix, page~\pageref{lem:reprogram_good_tuples_proof}.


\begin{lemma}[Bad probability]\label{lem:bad_prob} 
Let $\pi$ be a (fixed) permutation and
$\vec{x}^*=(x^*_1,...,x^*_k)$ be a (fixed) distinct tuple.  
Then we have 
\[
\Pr_{\pi^*}[(\pi,\pi^*)\notin G[\vec{x}^*]]\le \frac{k^2}{|X|}
\]
\end{lemma}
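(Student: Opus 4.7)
The plan is to unfold the goodness condition of \Cref{def:good} applied to $(\pi,\pi^*) \in G[\vec{x}^*]$ and then apply a straightforward union bound over $k^2$ collision events, each of probability $1/|X|$ under a uniform $\pi^*$.

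First I would analyze when $(\pi,\pi^*) \notin G[\vec{x}^*]$. Setting $y_j^* := \pi^*(x_j^*)$, the tuple $(p_1^*, \ldots, p_k^*) = ((x_1^*, y_1^*), \ldots, (x_k^*, y_k^*))$ fails to be good w.r.t.\ $\pi$ if either (i) the pairs are not disjoint in the sense of \Cref{def:disjoint}, or (ii) there exist $i, j \in [k]$ with $\pi(x_i^*) = y_j^*$. Disjointness here means all $x_i^*$ are distinct and all $y_j^*$ are distinct. The first is guaranteed by hypothesis since $\vec{x}^*$ is a distinct tuple, and the second is automatic because $\pi^*$ is a permutation and distinct inputs yield distinct outputs. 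Hence the only source of badness is condition~(ii).

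Next I would apply a union bound:
\[
\Pr_{\pi^*}\bigl[(\pi,\pi^*)\notin G[\vec{x}^*]\bigr]
\le \sum_{i=1}^{k}\sum_{j=1}^{k} \Pr_{\pi^*}\bigl[\pi^*(x_j^*) = \pi(x_i^*)\bigr].
\]
For each individual term, since $\pi^*$ is a uniformly random permutation of $X$, the marginal distribution of $\pi^*(x_j^*)$ on any fixed input $x_j^*$ is uniform over $X$, so $\Pr_{\pi^*}[\pi^*(x_j^*) = \pi(x_i^*)] = 1/|X|$. Summing over the $k^2$ pairs $(i,j)$ yields the claimed bound $k^2/|X|$.

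The proof is essentially a one-line union bound once the reduction to condition~(ii) has been made explicit. There is no real obstacle; the only thing to be careful about is justifying why condition~(i) of goodness (disjointness of the pairs) holds automatically under the hypotheses, which reduces to observing that $\pi^*$ being a permutation forces the $y_j^*$'s to be distinct whenever the $x_j^*$'s are.
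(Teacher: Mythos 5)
Your proposal is correct and follows essentially the same argument as the paper: reduce badness to the event $\pi(x_i^*)=\pi^*(x_j^*)$ for some $i,j$, note each such event has probability $1/|X|$ by uniformity of $\pi^*(x_j^*)$, and union-bound over the $k^2$ pairs. Your explicit remark that disjointness of the $y_j^*$'s follows from $\pi^*$ being a permutation is a small point the paper leaves implicit, but the proofs are otherwise identical.
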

\begin{proof}
Recall that $(\pi,\pi^*)\in G[\vec{x}^*]$ means that  $(p_1^*=(x^*_1,y^*_1=\pi^*(x^*_1)),\ldots, p_k^*=(x^*_k,y^*_k=\pi^*(x^*_k)))$ is good with respect to $\pi$, i.e., 
$(p_1^*,...,p^*_j)$ is disjoint and  
 $\pi(x_i^*) \ne y_j^*$ for all $i,j \in [k]$.
Since $\vec{x}^*$ is assumed to be distinct, the disjointness of $(p_1^*,...,p^*_j)$ is always satisfied. 

The only condition that might fail is the other one. We aim to find an upper bound on the probability that $\pi(x_i^*) = y_j^*$ for some $i,j$.

For each pair $(i, j)$, consider the event $E_{i,j}$ that  $\pi(x_i^*) = y_j^*$ holds.
Since $\pi(x_i^*)$ is a fixed element in $X$, and $y_j^* = \pi^*(x_j^*)$ is uniformly random over $X$, the probability that $\pi(x_i^*) = y_j^*$ is:
\[
\Pr[E_{i,j}] = \frac{1}{|X|}.
\]

There are $k^2$ such pairs $(i, j)$. By the union bound, the probability that at least one of these events occurs is at most the sum of their individual probabilities:
\[
\Pr\left( \bigcup_{i \ne j} E_{i,j} \right) \leq \sum_{i,j} \Pr[E_{i,j}] = \frac{k^2}{|X|}.
\]
Therefore, the probability that $(\pi,\pi^*)\in G[\vec{x}^*]$ is at most $\dfrac{k^2}{|X|}$.
\end{proof}

\begin{lemma}[Uniformity of reprogrammed permutation]\label{lem:uniform} 
For any distinct $\vec{x}^*=(x_1^*,...,x^*_k)$, suppose that we uniformly take $(\pi,\pi^*)\leftarrow G[\vec{x}^*]$ and set $\vec{y}^*=\pi^*(\vec{x}^*)$. 
Then $\pi[\vec{x}^*\rightarrow \vec{y}^*]$ is distributed uniformly randomly. 
\end{lemma}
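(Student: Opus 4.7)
The plan is to establish uniformity by showing that the map
\[
\Phi: G[\vec{x}^*] \to \mathrm{Perm}(X), \quad (\pi,\pi^*) \mapsto \pi[\vec{x}^* \to \pi^*(\vec{x}^*)]
\]
has fibers of equal cardinality; since $(\pi,\pi^*)$ is drawn uniformly from $G[\vec{x}^*]$, this immediately implies that $\Phi(\pi,\pi^*)$ is uniform on $\mathrm{Perm}(X)$. So I fix an arbitrary target $\tilde{\pi} \in \mathrm{Perm}(X)$ and count $|\Phi^{-1}(\tilde{\pi})|$, showing the answer is independent of $\tilde{\pi}$.

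I will parameterize the preimage independently in the two coordinates. For $\pi^*$: by the first case of \Cref{lem:reprogram_good_tuples}, $\Phi(\pi,\pi^*) = \tilde{\pi}$ forces $\pi^*(x_j^*) = \tilde{\pi}(x_j^*)$ for each $j$. Since the $x_j^*$ are distinct and so are the $\tilde{\pi}(x_j^*)$, any permutation extension to the remaining $|X|-k$ inputs is admissible, yielding $(|X|-k)!$ choices. For $\pi$: setting $y_j^* := \tilde{\pi}(x_j^*)$ and $a_j := \pi^{-1}(y_j^*)$, \Cref{lem:reprogram_good_tuples} forces $\pi$ to agree with $\tilde{\pi}$ outside $\{x_1^*,\dots,x_k^*\}\cup\{a_1,\dots,a_k\}$ and to satisfy $\pi(x_j^*) = \tilde{\pi}(a_j)$ and $\pi(a_j) = \tilde{\pi}(x_j^*)$; hence $\pi$ is completely determined by the tuple $\vec{a} = (a_1,\dots,a_k)$.

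The next step is to identify the admissible tuples $\vec{a}$. The goodness condition forces the $a_j$'s to be distinct (since $\vec{y}^*$ is distinct and $\pi$ is injective) and disjoint from $\vec{x}^*$ (otherwise $a_j = x_i^*$ gives $\pi(x_i^*) = y_j^* = \pi^*(x_j^*)$, contradicting the second good condition). Conversely, for any tuple $\vec{a}$ of $k$ distinct entries in $X \setminus \vec{x}^*$, the $\pi$ defined by swapping $\tilde{\pi}$'s values on the $2k$ inputs in $\{x_j^*\} \cup \{a_j\}$ is a permutation; one checks directly that $(\pi,\pi^*) \in G[\vec{x}^*]$, and then \Cref{lem:reprogram_good_tuples} confirms $\Phi(\pi,\pi^*) = \tilde{\pi}$. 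This yields exactly $\tfrac{(|X|-k)!}{(|X|-2k)!}$ admissible $\pi$'s (when $|X| \geq 2k$; otherwise $G[\vec{x}^*] = \emptyset$ and the claim is vacuous). Combining,
\[
|\Phi^{-1}(\tilde{\pi})| \;=\; (|X|-k)! \cdot \frac{(|X|-k)!}{(|X|-2k)!},
\]
which is independent of $\tilde{\pi}$, completing the proof.

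The main obstacle is the converse direction in the $\pi$-count: one has to verify that each valid $\vec{a}$ really does yield a pair in $G[\vec{x}^*]$ whose image under $\Phi$ is $\tilde{\pi}$. This involves rechecking both clauses of the goodness condition for the constructed $\pi$ and then invoking \Cref{lem:reprogram_good_tuples} in the forward direction. Everything else is a routine counting argument.
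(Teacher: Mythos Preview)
Your proof is correct, but it follows a different route from the paper's. The paper gives a short symmetry argument: it observes that for any permutation $\sigma$, the map $(\pi,\pi^*)\mapsto(\sigma\pi,\sigma\pi^*)$ is a bijection of $G[\vec{x}^*]$ onto itself (the goodness condition is preserved under simultaneous left multiplication), and that $\sigma\bigl(\pi[\vec{x}^*\to\pi^*(\vec{x}^*)]\bigr)=(\sigma\pi)[\vec{x}^*\to(\sigma\pi^*)(\vec{x}^*)]$. Hence the law of $\pi[\vec{x}^*\to\vec{y}^*]$ is invariant under left multiplication by every $\sigma$, which forces it to be uniform. No fiber count is needed.

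Your approach is more explicit and combinatorial: you parameterize $\Phi^{-1}(\tilde\pi)$ by the free extension of $\pi^*$ off $\vec{x}^*$ and by the tuple $\vec{a}=\pi^{-1}(\vec{y}^*)$, and you check that the admissible $\vec{a}$'s are exactly the injections $[k]\hookrightarrow X\setminus\vec{x}^*$. This yields the exact fiber size $(|X|-k)!\cdot\tfrac{(|X|-k)!}{(|X|-2k)!}$, which the paper's argument does not give. The trade-off is that you have to verify by hand that each $\vec{a}$ produces a genuine preimage (permutation, goodness, and $\Phi=\tilde\pi$), whereas the paper's symmetry argument sidesteps all of that in two lines.
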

\begin{proof}
Observe that  for any permutation $\sigma$, $((x_1^*,\sigma\pi^*(x_1^*)),...,(x_k^*,\sigma\pi^*(x_k^*)))$ is good w.r.t. $\sigma \pi$ if and only if $((x_1^*,\pi^*(x_1^*)),...,(x_k^*,\pi^*(x_k^*)))$ is good w.r.t. $\pi$.  
This means that  for any $\sigma$, if we take uniform $(\pi,\pi^*)\leftarrow G[\vec{x}^*]$, then $(\sigma\pi,\sigma\pi^*)$ is also distributed uniformly on $G[\vec{x}^*]$. 

For any $\sigma$, we have the following equivalence of distributions: 
\begin{align*}
&\{\sigma\left(\pi[x^*_1\rightarrow \pi^*(x^*_1)]...[x^*_k\rightarrow \pi^*(x^*_k)]\right):(\pi,\pi^*)\leftarrow G[\vec{x}^*]\}\\
&\equiv\{(\sigma\pi)[x^*_1\rightarrow \sigma\pi^*(x^*_1)]...[x^*_k\rightarrow \sigma\pi^*(x^*_k)]:(\pi,\pi^*)\leftarrow G[\vec{x}^*]\}\\
&\equiv\{\pi[x^*_1\rightarrow \pi^*(x^*_1)]...[x^*_k\rightarrow \pi^*(x^*_k)]:(\pi,\pi^*)\leftarrow G[\vec{x}^*]\}
\end{align*}
where the first equivalence easily follows from the definition of reprogramming and the second equivalence follows from the above observation. 
The above equivalence means that the distribution of $\pi[x^*_1\rightarrow \pi^*(x^*_1)]...[x^*_k\rightarrow \pi^*(x^*_k)]$ for $(\pi,\pi^*)\leftarrow G[\vec{x}^*]$ is invariant under left multiplication by any permutation $\sigma$. 
This means that the distribution of $\pi[\vec{x}^*\rightarrow \vec{y}^*]=\pi[x^*_1\rightarrow \pi^*(x^*_1)]...[x^*_k\rightarrow \pi^*(x^*_k)]$  is uniform. 
\end{proof}

\subsubsection*{Hit and Miss Queries}
In the proof of our main theorems, we will need the following definition. 
\begin{definition}[Hit and Miss queries]
\label{def:hit_miss_queries}
    Fix any distinct $\vec{x}^*=(x_1^*,...,x_k^*) \in X^k$, permutations $(\pi,\pi^*)\in G[\vec{x}^*]$, 
    $\vec{y}^* = \pi^*(\vec{x}^*)$, 
    and $j\in [k]$, 
    we define the Hit and Miss input for forward queries as follows:
    \begin{align*}
        x_j^{\sf hit} &= x_j^*, \\
        x_j^{\sf miss} &= \pi^{-1}(y^*_j).
    \end{align*}

    Similarly, 
    we define the Hit and Miss input for backward queries as follows:
    \begin{align*}
        y_j^{\sf hit} &= y_j^*, \\
        y_j^{\sf miss} &= \pi(x_j^*).
    \end{align*}
\end{definition}
\begin{remark}
Since we assume $(\pi,\pi^*)\in G[\vec{x}^*]$, there is not duplicate entry in 
$(x_1^{\sf hit},x_1^{\sf miss},...,x_k^{\sf hit},x_k^{\sf miss})$
or 
$(y_1^{\sf hit},y_1^{\sf miss},...,y_k^{\sf hit},y_k^{\sf miss})$.
\end{remark}

The following corollary immediately follows from \Cref{lem:reprogram_good_tuples}. 
\begin{corollary}\label{cor:reprogram_good_tuples} 
   Let $\vec{x}^*=(x_1^*,...,x_k^*)\in X^k$ be a distinct tuple, 
$(\pi,\pi^*)\in G[\vec{x}^*]$, 
$\vec{y}^*=(y_1^*,...,y_k^*)=\pi^*(\vec{x}^*)$, 
and 
$(x_j^{\sf hit},x_j^{\sf miss},y_j^{\sf hit},y_j^{\sf miss})$ be as defined in \Cref{def:hit_miss_queries} for $j\in[k]$. 
Then we have: 
    \[
    \pi[\vec{x}^* \to \vec{y}^*](z)=
    \begin{cases}
    y_j^{\sf hit} & \text{if $z=x_j^{\sf hit}$ for some $j\in[k]$,}\\
    y_j^{\sf miss} & \text{if $z=x_j^{\sf miss}$ for some $j\in[k]$,}\\
    \pi(z) & \text{otherwise.}
    \end{cases}
    \]
\end{corollary}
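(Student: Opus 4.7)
The statement is essentially a notational rephrasing of \Cref{lem:reprogram_good_tuples}, so the plan is to apply that lemma directly and then perform a substitution of symbols using \Cref{def:hit_miss_queries}. First, I would verify the hypotheses of \Cref{lem:reprogram_good_tuples} are met: since $(\pi,\pi^*)\in G[\vec{x}^*]$, by \Cref{def:support_G} the $k$-tuple of pairs $(p_1^*,\ldots,p_k^*)$ with $p_j^*=(x_j^*, y_j^*)=(x_j^*,\pi^*(x_j^*))$ is good with respect to $\pi$, so the lemma applies and gives a three-case description of $\pi[\vec{x}^*\to\vec{y}^*](z)$ in terms of $x_j^*$, $y_j^*$, $\pi^{-1}(y_j^*)$, and $\pi(x_j^*)$.

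Next, I would rewrite each case using the hit/miss notation from \Cref{def:hit_miss_queries}: the case $z=x_j^*$ is the case $z=x_j^{\sf hit}$, and the output $y_j^*$ equals $y_j^{\sf hit}$; similarly, the case $z=\pi^{-1}(y_j^*)$ is the case $z=x_j^{\sf miss}$, and the output $\pi(x_j^*)$ equals $y_j^{\sf miss}$. The third case ($z$ outside these sets) is unchanged. This substitution yields exactly the three-case formula in the statement of the corollary.

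The only subtlety worth checking is that the case split is unambiguous, i.e., that the sets $\{x_j^{\sf hit}:j\in[k]\}$ and $\{x_j^{\sf miss}:j\in[k]\}$ are disjoint from one another and each internally distinct. This is precisely the content of the Remark preceding the corollary: since $(\pi,\pi^*)\in G[\vec{x}^*]$, the defining conditions of goodness (namely, distinctness of $\vec{x}^*$, distinctness of $\vec{y}^*$, and the condition $\pi(x_i^*)\ne y_j^*$ for all $i,j$) ensure that there are no collisions among the $x_j^{\sf hit},x_j^{\sf miss}$ entries, so the case split is well defined and consistent with that in \Cref{lem:reprogram_good_tuples}.

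I do not expect any real obstacle here: the proof is a one-line appeal to \Cref{lem:reprogram_good_tuples} together with the renaming prescribed by \Cref{def:hit_miss_queries}, so the write-up should simply state that this is immediate by definition-chasing.
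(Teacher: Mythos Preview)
Your proposal is correct and matches the paper's approach exactly: the paper simply states that the corollary ``immediately follows from \Cref{lem:reprogram_good_tuples},'' and your write-up spells out precisely that substitution via \Cref{def:hit_miss_queries}, together with the well-definedness observation from the preceding Remark.
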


\subsubsection*{Partial Reprogramming} 
We define \emph{partial reprogramming}. Looking ahead, this corresponds to a "snapshot" of the oracle simulated by the simulator at some point of its execution.   
\begin{definition}[Partial reprogramming]\label{def:partial}
Let $\vec{x}^*=(x_1^*,...,x_k^*)\in X^k$ and $\vec{y}^*=(y_1^*,...,y_k^*)\in X^k$ be distinct tuples and $\pi:X\rightarrow X$ be a permutation.  
We say that $\pi'$ is a partial reprogramming of $\pi$ w.r.t. $(\vec{x}^*,\vec{y}^*)$ if $\pi'=\pi[x^*_{j_1}\rightarrow y^*_{j_1}]...[x^*_{j_\ell}\rightarrow y^*_{j_\ell}]$ for some distinct sequence $j_1,...,j_\ell \in [k]$ and $\ell \leq k$. For $j\in [k]$, we say that $\pi'$ is reprogrammed on $j$ if $j\in \{j_1,...,j_\ell\}$. 
\end{definition}

The following lemma immediately follows from  \Cref{cor:reprogram_good_tuples}. 
\begin{lemma}[Partial reprogramming on good tuples]\label{cla:claim} 
Let $\vec{x}^*=(x_1^*,...,x_k^*)\in X^k$ be a distinct tuple, 
$(\pi,\pi^*)\in G[\vec{x}^*]$, 
$\vec{y}^*=(y_1^*,...,y_k^*)=\pi^*(\vec{x}^*)$, 
$\pi'$ be a partial reprogramming of $\pi$ w.r.t. $(\vec{x}^*,\vec{y}^*)$, and 
$(x_j^{\sf hit},x_j^{\sf miss},y_j^{\sf hit},y_j^{\sf miss})$ be as defined in \Cref{def:hit_miss_queries} for $j\in[k]$. 
Then the following hold. 
\begin{enumerate}
    \item \label{item:equal_non-touch} For any $x\in X\setminus \bigcup_{j\in [k]} \{x_j^{\sf hit},x_j^{\sf miss}\}$, 
    \[\pi'(x)=\pi[\vec{x}^*\rightarrow \vec{y}^*](x)=\pi(x).\] 
\item \label{item:equal_touch} 
For any $x \in \{x_j^{\sf hit},x_j^{\sf miss}\}$ for some $j\in[k]$, if $\pi'$ is reprogrammed on $j$, then   
 \[\pi'(x)=
\pi[\vec{x}^*\rightarrow \vec{y}^*](x).\]
   \item For any $y\in X\setminus \bigcup_{j\in [k]} \{y_j^{\sf hit},y_j^{\sf miss}\}$, 
    \[{\pi'}^{-1}(y)={\pi[\vec{x}^*\rightarrow \vec{y}^*]}^{-1}(y)=\pi^{-1}(y).\]
\item For any $y\in \{y_j^{\sf hit},y_j^{\sf miss}\}$ for some $j\in[k]$, if $\pi'$ is reprogrammed on $j$, then
    \[{\pi'}^{-1}(y)=
{\pi[\vec{x}^*\rightarrow \vec{y}^*]}^{-1}(y).\]
\end{enumerate}
\end{lemma}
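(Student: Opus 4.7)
The plan is to view the partial reprogramming $\pi'$ as itself a \emph{full} reprogramming with respect to a sub-tuple of $(p_1^\ast,\dots,p_k^\ast)$, then apply \Cref{cor:reprogram_good_tuples} both to $\pi[\vec{x}^\ast\to \vec{y}^\ast]$ and to $\pi'$ and compare the two pointwise descriptions. By \Cref{def:partial}, $\pi'=\pi[x^*_{j_1}\to y^*_{j_1}]\cdots[x^*_{j_\ell}\to y^*_{j_\ell}]$ for some distinct $j_1,\dots,j_\ell\in[k]$. First I would verify that the sub-tuple $(p_{j_1}^\ast,\dots,p_{j_\ell}^\ast)$ is good w.r.t.\ $\pi$: disjointness is inherited, and the non-collision condition $\pi(x_{j_s}^\ast)\neq y_{j_t}^\ast$ follows immediately from the corresponding condition for the full tuple. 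Crucially, the hit/miss inputs $x_j^{\sf hit}, x_j^{\sf miss}, y_j^{\sf hit}, y_j^{\sf miss}$ of \Cref{def:hit_miss_queries} depend only on $\pi$ and the single index $j$, so they are the same whether $j$ is regarded as an element of $[k]$ or of $\{j_1,\dots,j_\ell\}$.

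Applying \Cref{cor:reprogram_good_tuples} to $\pi'$ with respect to the sub-tuple then gives $\pi'(z)=y_j^{\sf hit}$ when $z=x_j^{\sf hit}$ with $j\in\{j_1,\dots,j_\ell\}$, $\pi'(z)=y_j^{\sf miss}$ when $z=x_j^{\sf miss}$ with $j\in\{j_1,\dots,j_\ell\}$, and $\pi'(z)=\pi(z)$ otherwise. For item~\ref{item:equal_non-touch}, if $z\notin\bigcup_{j\in[k]}\{x_j^{\sf hit},x_j^{\sf miss}\}$ then in particular $z$ avoids all touch inputs of the sub-tuple, so both $\pi'$ and $\pi[\vec{x}^\ast\to \vec{y}^\ast]$ fall into the ``otherwise'' case and equal $\pi(z)$. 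For item~\ref{item:equal_touch}, if $z\in\{x_j^{\sf hit},x_j^{\sf miss}\}$ for some $j\in[k]$ with $j\in\{j_1,\dots,j_\ell\}$, the goodness assumption (cf.\ the Remark following \Cref{def:hit_miss_queries}) guarantees that $z$ belongs to no other touch pair $\{x_{j'}^{\sf hit},x_{j'}^{\sf miss}\}$, so the same branch of the piecewise formula is selected for $\pi'$ and for $\pi[\vec{x}^\ast\to\vec{y}^\ast]$, giving equal values.

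For the two inverse statements, the plan is to invoke \Cref{lem:reprogramming_inverse}: it identifies $(\pi[\vec{x}^\ast\to\vec{y}^\ast])^{-1}$ with $\pi^{-1}[\vec{y}^\ast\to\vec{x}^\ast]$ and $(\pi')^{-1}$ with the analogous partial reprogramming of $\pi^{-1}$ on the same index set $\{j_1,\dots,j_\ell\}$. Swapping the roles of $(\vec{x}^\ast,\pi)$ with $(\vec{y}^\ast,\pi^{-1})$ exchanges forward hit/miss with backward hit/miss, since $y_j^{\sf hit}=y_j^\ast$ is the image of $x_j^{\sf hit}$ and $y_j^{\sf miss}=\pi(x_j^\ast)$ is the image of $x_j^{\sf miss}$. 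The goodness condition is manifestly symmetric in this swap (the non-collision condition $\pi(x_i^\ast)\neq y_j^\ast$ is the same as $\pi^{-1}(y_j^\ast)\neq x_i^\ast$), so the inverse items reduce to the forward ones already proven. I do not anticipate a real obstacle here; the only subtlety that needs a line of justification is the point used above that, under the goodness assumption, the eight touch inputs associated with a single index $j$ are pairwise distinct from those associated with any other index $j'$, which is exactly what \Cref{def:good} and the preceding Remark deliver.
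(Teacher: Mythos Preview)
Your proposal is correct and takes essentially the same approach the paper intends: the paper merely states that the lemma ``immediately follows from \Cref{cor:reprogram_good_tuples}'', and your argument carries out exactly that, applying the corollary once to the full tuple and once to the sub-tuple $\{j_1,\dots,j_\ell\}$ (after checking the sub-tuple inherits goodness), and then handling the inverse items via the symmetry afforded by \Cref{lem:reprogramming_inverse}. Your added care in noting that the hit/miss inputs depend only on $\pi$ and the single index $j$, and in invoking the no-duplicate Remark to ensure the same branch is selected, is precisely what makes the ``immediately'' rigorous.
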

\if0
\begin{proof}(sketch)
\qipeng{expand this proof; I feel we may just use \Cref{lem:reprogram_good_tuples} to prove it, which seems immediate.}
    The first item is obvious since the values on $x\in X\setminus \cup_{j\in [k]} X_j$ does not change by any reprogramming and thus we have 
    \[\pi'(x)=
\pi(x)=\pi[\vec{x}^*\rightarrow \vec{y}^*](x).\]

For the second item,  note that reprogramming by $p_{j'}^*$ for $j'\ne j$ does not change the values on $x\in X_j$, due to the good property (Def.~\ref{def:good}).  
Thus, due to a similar reason to that for the first item, we have 
  \[\pi'(x)=\pi[x^*_j\rightarrow y^*_j](x)=\pi[\vec{x}^*\rightarrow \vec{y}^*](x).\] 

The third and fourth items are similar to the first and second.
\end{proof}
\fi
\section{Classical Lifting Theorem for Permutations}\label{sec:classical_lifting}

In this section, as a warm-up, we  prove a classical analogue of our main lifting theorem. 
While we encourage the readers to read this section for intuition before moving on to the quantum lifting theorem, it is also possible to skip directly to \Cref{sec:quantum_lifting}.  

We prove the following theorem:

\begin{theorem}[Classical Lifting Theorem]\label{thm:lifting}
Let $\cA$ be an algorithm that makes $q$ classical queries to an (invertible) random permutation oracle on $X$ and $R$ is a relation on $X^k\times X^k\times Z$. 
Then there exists an algorithm $\cB$ making at most 
$k$ classical queries such that 
\begin{align*}
&\Pr_{\pi^*} \left[
(x_1,...,x_k,\pi^*(x_1),...,\pi^*(x_k),z)\in R
:(x_1,...,x_k,z)\leftarrow \cB^{\pi^*}\right]\\
&\ge
\frac{\left(1 - \frac{k^2}{|X|} \right)}{(2q+1)^k}\Pr_{\pi^*} \left[
(x_1,...,x_k,\pi^*(x_1),...,\pi^*(x_k),z)\in R
:(x_1,...,x_k,z)\leftarrow \cA^{\pi^*}\right].
\end{align*}
\end{theorem}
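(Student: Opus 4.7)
The strategy is to construct $\cB$ as an explicit classical simulator that runs $\cA$ while answering its oracle queries using a simulator-chosen permutation $\pi$, and that makes one query to $\pi^*$ per reprogramming index in order to gradually replace $\pi$ by a reprogrammed permutation. The core of the argument is a classical measure-and-reprogram lemma: for every distinct $\vec{x}^* \in X^k$, every $(\pi,\pi^*)\in G[\vec{x}^*]$, and every target output $(\vec{x}^*,z)$,
\[
\Pr\bigl[\cB[\cA,\pi,\pi^*]\to(\vec{x}^*,z)\bigr] \;\ge\; \frac{1}{(2q+1)^k}\,\Pr\bigl[\cA^{\pi[\vec{x}^*\to \vec{y}^*]}\to(\vec{x}^*,z)\bigr],
\]
where $\vec{y}^*=\pi^*(\vec{x}^*)$. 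Combining this with \Cref{lem:uniform} (to rewrite $\pi[\vec{x}^*\to\vec{y}^*]$ as a uniform permutation $\pi^*$) and \Cref{lem:bad_prob} (to account for the goodness probability $1-k^2/|X|$) yields the theorem after summing over all $(\vec{x}^*,z)$ such that $(\vec{x}^*,\vec{y}^*,z)\in R$.

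\textbf{The simulator.} Internally $\cB$ samples a uniform permutation $\pi$. Independently for each $j\in[k]$ it samples a pair $(i_j,b_j)$ uniformly from $(\{1,\dots,q\}\times\{0,1\})\cup\{(\bot,\bot)\}$; here $i_j$ guesses the first query index of $\cA$ that ``touches'' the $j$-th reprogramming and $b_j\in\{0,1\}$ distinguishes a hit from a miss (as in \Cref{def:hit_miss_queries}). It then simulates $\cA$, maintaining a partially reprogrammed permutation $\pi'$ initially equal to $\pi$. At the $t$-th query of $\cA$, for each $j$ with $i_j=t$, the simulator reads the queried input, determines the candidate $x_j^*$ from the input via the rule in the overview (e.g.\ for a forward query $x$ with $b_j=0$ set $x_j^*=x$; with $b_j=1$ set $x_j^*=\pi^{-1}(\pi(x))$ chosen so that the query becomes a miss), sends $x_j^*$ to $\pi^*$, receives $y_j^*=\pi^*(x_j^*)$, and updates $\pi'\leftarrow \pi'[x_j^*\to y_j^*]$. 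Only then does it answer $\cA$'s $t$-th query using the updated $\pi'$. Finally $\cB$ outputs $\cA$'s output together with $(x_1^*,\dots,x_k^*)$ (using the convention that $x_j^*$ is a default element when $i_j=\bot$, in which case the corresponding reprogramming step is skipped). By construction $\cB$ makes at most one $\pi^*$-query per index $j$ with $i_j\ne\bot$, hence at most $k$ classical queries in total.

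\textbf{Measure-and-reprogram argument.} Fix $(\pi,\pi^*)\in G[\vec{x}^*]$ and consider an execution of $\cA^{\pi[\vec{x}^*\to \vec{y}^*]}$ producing output $(\vec{x}^*,z)$. For each $j\in[k]$ let $(i_j^*,b_j^*)$ be the index and type of the \emph{first} query of $\cA$ that hits or misses one of $(x_j^{\sf hit},x_j^{\sf miss},y_j^{\sf hit},y_j^{\sf miss})$, setting $(i_j^*,b_j^*)=(\bot,\bot)$ if no such query exists. I claim that if $\cB$'s guesses satisfy $(i_j,b_j)=(i_j^*,b_j^*)$ for all $j$ (an event of probability exactly $(2q+1)^{-k}$), then the simulator's execution produces the same transcript as $\cA^{\pi[\vec{x}^*\to\vec{y}^*]}$ and hence the same output $(\vec{x}^*,z)$. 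The verification uses \Cref{cla:claim}: at every query step the current partial reprogramming $\pi'$ agrees with the fully reprogrammed $\pi[\vec{x}^*\to\vec{y}^*]$ on every input that $\cA$ has not yet touched among the $x_j^{\sf hit/miss}$ values and on all inputs outside $\bigcup_j\{x_j^{\sf hit},x_j^{\sf miss},y_j^{\sf hit},y_j^{\sf miss}\}$; and by definition of $(i_j^*,b_j^*)$ the reprogramming of index $j$ is carried out exactly in time for $\cA$'s first touching query, so the answer returned to $\cA$ coincides with $\pi[\vec{x}^*\to\vec{y}^*]$ on that query. Summing over the $(2q+1)^k$ possible guesses yields the displayed bound.

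\textbf{Assembly and main obstacle.} With the lemma in hand, I sum the inequality over all winning outputs $(\vec{x}^*,z)$ and take the expectation over uniform $(\pi,\pi^*)\in G[\vec{x}^*]$. On the right-hand side \Cref{lem:uniform} makes $\pi[\vec{x}^*\to\vec{y}^*]$ uniform, recovering the probability that $\cA^{\pi^*}$ succeeds restricted to the event $(\pi,\pi^*)\in G[\vec{x}^*]$; \Cref{lem:bad_prob} then shows this restriction costs at most an additive $k^2/|X|$. On the left-hand side the expectation over $(\pi,\pi^*)$ is directly bounded by the success probability of $\cB^{\pi^*}$, since $\cB$ itself samples $\pi$ internally and queries $\pi^*$ as an oracle. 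The main obstacle is the verification in the previous paragraph: one must carefully track how the hit/miss definitions interact with multiple simultaneous (partial) reprogrammings, since a single query of $\cA$ could in principle touch several indices at once. The goodness condition from \Cref{def:good}, which ensures that the sets $\{x_j^{\sf hit},x_j^{\sf miss}\}$ and $\{y_j^{\sf hit},y_j^{\sf miss}\}$ are pairwise disjoint across $j$, together with \Cref{lem:commutativity_disjoint_pairs,cla:claim}, is what makes the bookkeeping tractable.
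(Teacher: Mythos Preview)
Your approach is essentially the paper's: define a simulator that samples $(i_j,b_j)\in([q]\times\{0,1\})\cup\{(\bot,\bot)\}$ per index and gradually reprograms $\pi$, prove the classical measure-and-reprogram inequality via the ``correct guess'' argument using \Cref{cla:claim}, then assemble via \Cref{lem:bad_prob} and \Cref{lem:uniform}. Two small fixes: the miss-case formula should be $x_j^*={\pi^*}^{-1}(\pi(x))$ (your $\pi^{-1}(\pi(x))$ collapses to $x$), and $\cB$ should simply output $\cA$'s own output $(x_1,\dots,x_k,z)$ rather than the recorded reprogramming points with defaults---otherwise the $i_j=\bot$ branch fails, since no meaningful $x_j^*$ is recorded there even though $\cA$ still outputs the correct $x_j$.
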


\if0
Let $\cA$ be an algorithm (with no input) that makes $q$ classical queries to an (invertible) random permutation oracle $\pi:X\rightarrow X$ and outputs 
$(x_1,...,x_k,z)\in X^k \times Z$ such that $(x_1,...,x_k)$ are pairwise distinct.
We say that $\cA$ wins if $(x_1,...,x_k,\pi(x_1),...,\pi(x_k),z)\in R$ for a relation $R\subseteq X^k \times X^k \times Z$. 

We prove the following theorem. 

\begin{theorem}[Lifting Theorem, Classical]\label{thm:lifting}
Let $\cA,R$ be as above. 
There exists a classical algorithm $\cB$ that makes $k$ classical queries such that 
\begin{align*}
\Pr_{\pi}[
\cB^\pi~\text{wins}]\ge
\frac{\left( 1 - \frac{k^2}{|X|} \right)}{(2q+1)^k}\Pr_{\pi}[
\cA^\pi~\text{wins}].
\end{align*}
\end{theorem}
\fi

\subsection{Measure and Reprogram Lemma for Permutations}
For proving \Cref{thm:lifting}, we first prove a lemma which we call the \emph{measure and reprogram lemma}. The lemma is stated using the simulator  $S[\cA,\pi,\pi^*]$ defined below. Looking ahead, the algorithm $\cB$ in the lifting theorem runs $S[\cA,\pi,\pi^*]$ where $\pi^*$ is $\cB$'s own oracle while $\pi$ is internally uniformly chosen.

\begin{definition}[Permutation Measure-and-Reprogram Experiment, Classical] \label{def:classical_simulator} 
For two permutations $\pi:X\rightarrow X$ , $\pi^*:X\rightarrow X$ let
$S[\cA, \pi, \pi^*]$ be an algorithm 
that has oracle access to $\pi$ and $\pi^*$ and runs $\cA$ with a stateful oracle $O$ as follows:

\begin{enumerate}
\item Pick $\vec{v}\in ([q]\cup \{\bot\})^k$ and $\vec{b} \in \{0,1,\bot\}^k$ uniformly at random, conditioned that
\begin{itemize}
    \item there is no duplicate entry for $\vec{v}$ other than $\bot$, and
    \item for any $j\in[k]$, $b_j=\bot$ if and only if $v_j=\bot$. 
\end{itemize} 
    \item Initialize $O:=\pi$; here $O$ provides both forward and backward queries.
       \item Run $\cA^O$ where when $\cA$ makes its $i$-th query, the oracle is simulated as follows:
    \begin{enumerate}
        \item If $i = v_j$ for some $j \in [k]$, 
        we denote the $i$-th query by $x_{v_j}'$ if it is a forward query and by $y_{v_j}'$ if it is a backward query.  Then reprogram $O$ as follows according to the value of $b_j$ and answer the query using the reprogrammed oracle. 

    \begin{enumerate}
    \item If $b_j=0$ (hit), 
        \begin{itemize}
            \item If $\cA$'s $v_j$-th query is a forward query $x_{v_j}'$, then query $x'_{v_j}$ to $\pi^*$ to get $y'_{v_j}=\pi^*(x'_{v_j})$ and  reprogram $O$ to $O[x'_{v_j}\rightarrow y'_{v_j}]$.
            \item If $\cA$'s $v_j$-th query is a backward query $y'_{v_j}$, then query $y'_{v_j}$ to ${\pi^*}^{-1}$ to get $x'_{v_j}={\pi^*}^{-1}(y'_{v_j})$ and reprogram $O$ to $O[x'_{v_j}\rightarrow y'_{v_j}]$.
        \end{itemize}
    \item If $b_j=1$ (miss), 
        \begin{itemize}
            \item if $\cA$'s $v_j$-th query is a forward query $x'_{v_j}$, then query $\pi(x'_{v_j})$ to ${\pi^*}^{-1}$ to get ${\pi^*}^{-1}(\pi(x'_{v_j}))$ and  reprogram $O$ to $O[{\pi^*}^{-1}(\pi(x'_{v_j}))\rightarrow \pi(x'_{v_j})]$.
            \item if $\cA$'s $v_j$-th query is a backward query $y'_{v_j}$, then query $\pi^{-1}(y'_{v_j})$ to $\pi^*$ to get $\pi^*(\pi^{-1}(y'_{v_j}))$ and reprogram $O$ to $O[\pi^{-1}(y'_{v_j})\rightarrow \pi^*(\pi^{-1}(y'_{v_j}))]$.
    \end{itemize}
    \end{enumerate}
    \item Else, answer $\cA$'s $i$-th query by just using the stateful oracle $O$ without any measurement or reprogramming;
    \end{enumerate}
    \item Let $(\vec{x} = (x_1, ..., x_k), z)$ be $\cA$'s output;
    \item Output $(x_1, ..., x_k, z)$.
\end{enumerate}
\end{definition}
First, we observe the following easy lemma about $S[\cA,\pi,\pi^*]$. 
\begin{lemma}\label{lem:classical_correctness_of_reprogramming}
    Let $\vec{x}^*\in X^k$ be a distinct tuple, $\vec{y}^*=\pi^*(\vec{x}^*)$, 
    $(\pi,\pi^*)\in G[\vec{x}^*]$, and  $(x^{\sf hit}_j,x^{\sf miss}_j,y^{\sf hit}_j,y^{\sf miss}_j)$  be as  defined in \Cref{def:hit_miss_queries}.
    In an execution of $S[\cA,\pi,\pi^*]$, suppose that $v_j\ne \bot$ and one of the following holds:
    \begin{itemize}
        \item 
         $b_j=0$ and
        the $v_j$-th query is a forward query $x'_{v_j}=x^{\sf hit}_j$;
          \item 
         $b_j=0$ and
        the $v_j$-th query is a backward query $y'_{v_j}=y^{\sf hit}_j$;
           \item 
         $b_j=1$ and
        the $v_j$-th query is a forward query $x'_{v_j}=x^{\sf miss}_j$;
          \item 
         $b_j=1$ and
        the $v_j$-th query is a backward query $y'_{v_j}=y^{\sf miss}_j$.
    \end{itemize}
Then  $S[\cA,\pi,\pi^*]$ reprograms $O$ to $O[x^*_j \rightarrow y^*_j]$
at the $v_j$-th query. 
\end{lemma}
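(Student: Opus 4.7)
The plan is a direct case analysis on the four configurations enumerated in the statement. For each case I substitute the explicit definitions of $x^{\sf hit}_j, x^{\sf miss}_j, y^{\sf hit}_j, y^{\sf miss}_j$ from \Cref{def:hit_miss_queries} into the reprogramming rule of $S[\cA,\pi,\pi^*]$ spelled out in \Cref{def:classical_simulator}, and check that in every one of the four situations the resulting reprogramming collapses to $O[x^*_j \to y^*_j]$.

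For the hit-forward case, the simulator sets $y'_{v_j} := \pi^*(x'_{v_j})$ and reprograms $O$ to $O[x'_{v_j} \to y'_{v_j}]$; substituting $x'_{v_j} = x^{\sf hit}_j = x^*_j$ yields $y'_{v_j} = \pi^*(x^*_j) = y^*_j$, so the reprogramming is $O[x^*_j \to y^*_j]$. The hit-backward case is symmetric: plugging $y'_{v_j} = y^{\sf hit}_j = y^*_j$ into $x'_{v_j} := (\pi^*)^{-1}(y'_{v_j})$ recovers $x'_{v_j} = x^*_j$, again yielding $O[x^*_j \to y^*_j]$. For the miss-forward case, $x'_{v_j} = x^{\sf miss}_j = \pi^{-1}(y^*_j)$ gives $\pi(x'_{v_j}) = y^*_j$, and therefore $(\pi^*)^{-1}(\pi(x'_{v_j})) = (\pi^*)^{-1}(y^*_j) = x^*_j$, so the simulator's rule $O[(\pi^*)^{-1}(\pi(x'_{v_j})) \to \pi(x'_{v_j})]$ becomes $O[x^*_j \to y^*_j]$. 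The miss-backward case unwinds analogously: $y'_{v_j} = y^{\sf miss}_j = \pi(x^*_j)$ makes $\pi^{-1}(y'_{v_j}) = x^*_j$, and then $\pi^*(\pi^{-1}(y'_{v_j})) = y^*_j$, so the rule $O[\pi^{-1}(y'_{v_j}) \to \pi^*(\pi^{-1}(y'_{v_j}))]$ is again $O[x^*_j \to y^*_j]$.

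There is no real obstacle here; the statement is essentially a bookkeeping verification, since the definitions of hit and miss inputs in \Cref{def:hit_miss_queries} were engineered precisely to make these four reprogrammings coincide with $O[x^*_j \to y^*_j]$. The only point worth flagging is that I will use the assumption $(\pi,\pi^*)\in G[\vec{x}^*]$ to ensure that the values $x^{\sf hit}_j, x^{\sf miss}_j, y^{\sf hit}_j, y^{\sf miss}_j$ are well-defined elements of $X$ and that the expressions $\pi^{-1}(y^*_j)$ and $(\pi^*)^{-1}(y^*_j)$ appearing above refer to the intended inputs, so the simulator's branches really do fire on the stated queries.
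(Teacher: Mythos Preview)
Your proof is correct and matches the paper's approach exactly: the paper simply states the lemma is straightforward from \Cref{def:hit_miss_queries} and \Cref{def:classical_simulator}, and the detailed case analysis it writes out for the quantum analogue (\Cref{lem:quantum_correctness_of_reprogramming}) is line-for-line the same four-case verification you give here.
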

The proof of the above lemma is straightforward based on \Cref{def:hit_miss_queries} and the definition of $S[\cA,\pi,\pi^*]$.  

\if0
For a permutation $\pi:X\rightarrow X$ and pairs $p_1^*=(x_1^*,y_1^*), \ldots, p_k^*=(x_k^*,y_k^*)\in (X\times X)^k$, let
$S[\cA, \pi,p_1^*,...,p_k^*]$ be an algorithm with input $(p_1^*,...,p_k^*)$ that has oracle access to $\pi$ and runs $\cA$ with a stateful oracle $O$ as follows:
\begin{itemize}
    \item Initialize $O:=\pi$. 
    \item For $j\in [k]$, 
    randomly choose $(b_j,i_j)\leftarrow (\{0,1\}\times [q])\cup \{(\bot,\bot)\}$. 
    If there are $j_1\ne j_2$ such that $i_{j_1}=i_{j_2}\neq \bot$, immediately abort. 
    \item Run $\cA$ with the oracle $O$ where the oracle for $\cA$'s $i$-th query is simulated as follows:
    \begin{itemize}
        \item If $i=i_j$ for some $j\in [k]$, reprogram and update $O$ to $O[x_j^*\rightarrow y_j^*]$ 
        and use the reprogrammed one hereafter (including the answer to the $i$-th query).
        
        Then define $w_j\in \{0,1\} \times X$ as follows:\footnote{$w_j$ can be thought of as a "guess" for $(0,x_j^*)$ or $(1,y_j^*)$. 
        } 
        \begin{itemize}
            \item If $b_j=0$ (Hit), \footnote{where the Hit and Miss queries are as defined in \Cref{def:hit_miss_queries}}
            \begin{itemize}
                \item if $\cA$'s $i_j$-th query is a forward query $x'_{i_j}$, then set $w_j=(0,x'_{i_j})$. 
                \item if $\cA$'s $i_j$-th query is a backward query $y'_{i_j}$, then set $w_j=(1,y'_{i_j})$. 
            \end{itemize}
            \item If $b_j=1$ (Miss), 
            \begin{itemize}
                \item if $\cA$'s $i_j$-th query is a forward query $x'_{i_j}$, then set $w_j=(1,\pi(x'_{i_j}))$. 
                \item if $\cA$'s $i_j$-th query is a backward query $y'_{i_j}$, then set $w_j=(0,\pi^{-1}(y'_{i_j}))$. 
            \end{itemize} 
        \end{itemize}
        \item If $i\notin \{i_1,...,i_k\}$, just use the current $O$ to answer the $i$-th query. 
    \end{itemize}
    \item Let $(x_1,...,x_k,z)$ be $\cA$'s final output. 
    \item For $j\in [k]$ s.t. $i_j=\bot$, set $w_j=(0,x_j)$.  
    \item Output $(x_1,...,x_k,z)$. 
\end{itemize}
This completes the description of $S[\cA, \pi,p^*_1,...,p^*_k]$. 
\fi


\if0
Before showing the measure and reprogram lemma (\Cref{lem:measure_and_reprogram})
we introduce a simple lemma about the goodness (as defined in Def.~\ref{def:good}).
For stating the lemma, we prepare several definitions. 
Fix distinct $(x_1^*,...,x_k^*)$, $(y_1^*,...,y^*_k)$, and $\pi$. Then $X_j$ and $Y_j$ for $j\in[k]$ as follows:
\begin{itemize}
\item For $j\in [k]$, $X_j:=\{x^*_j,\pi^{-1}(y^*_j)\}$. 
\item For $j\in [k]$, $Y_j:=\{y^*_j,\pi(x^*_j)\}$. 
\end{itemize}
Note that when $(p_1^*,...,p^*_k)$ is good w.r.t. $\pi$, both $(X_1,...,X_k)$ and $(Y_1,...,Y_k)$ are pairwise disjoint, i.e. for any $i \neq j \in[k]$, $X_i \cap X_j = \emptyset$ and $Y_i \cap Y_j = \emptyset$.\footnote{This does not necessarily mean that $X_i\cap Y_j=\emptyset$.} \minki{I guess $X_i \cap X_j = \emptyset$ instead of $X_i\neq X_j$?}\takashi{fixed.}
\fi

Then we prove the "measure and reprogram lemma" below: 
\begin{lemma}[Measure and Reprogram Lemma, Classical]\label{lem:measure_and_reprogram}
Let $\cA$ be an algorithm that makes $q$ classical queries,  $\vec{x}^* = (x_1^*, ..., x_k^*)\in X^k$ be a distinct tuple, $(\pi,\pi^*)\in G[\vec{x}^*]$, 
$\vec{y}^* = (y_1^*,...,y_k^*)=\pi^*(\vec{x}^*)$,
and 
$R\subseteq X^k \times X^k \times Z$ be a relation.
Then we have 
    \begin{align*}
    &\Pr\left[
\begin{array}{ll}
(x^*_1,...,x^*_k,y^*_1,...,y^*_k,z)\in R \\
~\land ~
\forall j\in[k]~x_j=x^*_j\\
\end{array}
:(x_1,...,x_k,z)\leftarrow S[\cA, \pi,\pi^*]\right]\\
&\ge  
\frac{1}{(2q+1)^{k}}
\Pr\left[
\begin{array}{ll}
(x^*_1,...,x^*_k,y^*_1,...,y^*_k,z)\in R \\
~\land ~
\forall j\in[k]~x_j=x^*_j\\
\end{array}
:(x_1,...,x_k,z)\leftarrow \cA^{\pi[\vec{x}^* \rightarrow \vec{y}^*]}\right]. 
\end{align*}
\end{lemma}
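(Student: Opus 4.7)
The plan is to define, for each randomness tape $r$ of $\cA$, a \emph{canonical} guess $(\vec{v}^{\mathrm{can}}(r), \vec{b}^{\mathrm{can}}(r))$ such that whenever the simulator's internal sampling matches this canonical guess, its execution coincides with $\cA^{\pi[\vec{x}^*\to\vec{y}^*]}$ on the same $r$. Since $S$ samples $(\vec{v}, \vec{b})$ uniformly from at most $(2q+1)^k$ valid tuples (each non-$\bot$ coordinate of $\vec{v}$ has $q$ options and $\vec{b}$ has $2$ options, plus a single $(\bot,\bot)$ case), the probability of hitting any specific valid tuple is at least $1/(2q+1)^k$, which averaged over $r$ yields the claimed bound.

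The canonical guess is defined by inspecting the execution of $\cA^{\pi[\vec{x}^*\to\vec{y}^*]}$ with randomness $r$. For each $j\in[k]$, locate the earliest query whose input equals a ``touching'' value for $j$: $x_j^{\sf hit}$ or $x_j^{\sf miss}$ for a forward query, or $y_j^{\sf hit}$ or $y_j^{\sf miss}$ for a backward query (as in \Cref{def:hit_miss_queries}). If such a query exists, set $v_j^{\mathrm{can}}$ to its index and $b_j^{\mathrm{can}}\in\{0,1\}$ according to hit ($0$) or miss ($1$); otherwise $(v_j^{\mathrm{can}}, b_j^{\mathrm{can}}) = (\bot, \bot)$. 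The goodness $(\pi,\pi^*)\in G[\vec{x}^*]$ together with the distinctness of $\vec{x}^*$ ensures that all touching values $\{x_j^{\sf hit}, x_j^{\sf miss}\}_{j \in [k]}$ (and likewise the backward values) are pairwise distinct: for instance, $x_{j_1}^{\sf hit} = x_{j_2}^{\sf miss}$ would force $\pi(x_{j_1}^*) = y_{j_2}^*$, contradicting goodness. Hence a single query can be a hit or miss for at most one $j$, so $\vec{v}^{\mathrm{can}}$ contains no duplicate non-$\bot$ entries and the canonical guess is a valid sample for $S$.

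The heart of the proof is an induction on the query index $i$: assuming $S$'s guess matches the canonical one, the oracle used to answer $\cA$'s $i$-th query agrees with $\pi[\vec{x}^*\to\vec{y}^*]$ on that particular query, so the two executions remain identical. At every point the simulator's $O$ is a partial reprogramming $\pi[x^*_{j_1}\to y^*_{j_1}]\cdots[x^*_{j_\ell}\to y^*_{j_\ell}]$ over the indices reprogrammed so far; this is well-defined independently of the reprogramming order by \Cref{lem:commutativity_disjoint_pairs}, since the pairs $(x_j^*, y_j^*)$ are pairwise disjoint by goodness. \Cref{cla:claim} then shows that this partial reprogramming agrees with $\pi[\vec{x}^*\to\vec{y}^*]$ on inputs that are not touching for any $j$, as well as on touching inputs for any index already reprogrammed; the definition of $\vec{v}^{\mathrm{can}}$ guarantees that whenever the $i$-th query touches some $j$, we have $v_j^{\mathrm{can}} \leq i$, so the relevant index is indeed reprogrammed by the time the query is answered. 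At the step $i = v_j^{\mathrm{can}}$ where $S$ updates $O$, the actual query is by construction of a type matching one of the cases of \Cref{lem:classical_correctness_of_reprogramming}, so $S$'s update correctly takes $O$ to $O[x_j^*\to y_j^*]$; combined with \Cref{cla:claim}, this gives the same response as $\pi[\vec{x}^*\to\vec{y}^*]$.

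The principal obstacle is ensuring the inductive invariant survives interleaved reprogrammings: once $S$ has applied several updates in the order dictated by the query sequence, we must verify that none disturbs the effect of the others on subsequent queries, and that the simulator's sequence of updates (each determined locally from $b_j^{\mathrm{can}}$ and the query value) really yields the same partial reprogramming as the one described abstractly. Goodness is crucial here because it guarantees pairwise disjointness of the touching inputs across indices, which is precisely what \Cref{cla:claim} and \Cref{lem:commutativity_disjoint_pairs} need. Once the inductive step is established, pathwise agreement between $S$ (on the correct guess) and $\cA^{\pi[\vec{x}^*\to\vec{y}^*]}$ implies they produce the same output $(x_1, \ldots, x_k, z)$; summing over all $r$ for which the winning event occurs then gives $\Pr[S \text{ wins}] \geq (2q+1)^{-k} \Pr[\cA^{\pi[\vec{x}^*\to\vec{y}^*]} \text{ wins}]$.
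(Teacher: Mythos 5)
Your proposal is correct and follows essentially the same route as the paper: your ``canonical guess'' is exactly the paper's ``correct guess'' (the first query that $\pi$-touches each pair, together with its hit/miss type), the counting of at most $(2q+1)^k$ valid guesses is identical, and the conditional agreement between $S$ and $\cA^{\pi[\vec{x}^*\to\vec{y}^*]}$ is established via the same two lemmas (\Cref{lem:classical_correctness_of_reprogramming} and \Cref{cla:claim}). Your induction on the query index merely spells out what the paper states more tersely.
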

\begin{proof}
In an execution of $\cA^{\pi[\vec{x}^*\rightarrow \vec{y}^*]}$, 
we say that a query $\pi$-touches $p^*_j=(x^*_j,y^*_j)$ if the query is either a forward query $x\in X_j:=\{x_j^{\sf hit},x_j^{\sf miss}\}$ or a backward query $y\in Y_j:=\{y_j^{\sf hit},y_j^{\sf miss}\}$ where $x_j^{\sf hit},x_j^{\sf miss},y_j^{\sf hit},y_j^{\sf miss}$ are defined in \Cref{def:hit_miss_queries}.  
We say that $(\vec{v},\vec{b})$ is a correct guess if for all $j\in[k]$, either of the following hold:
\begin{itemize}
\item $\cA$'s $v_j$-th query is the first query that $\pi$-touches
$p^*_j$, and moreover the way of touching it is as specified by $b_j$ ($b_j = 0$ refers to a Hit query and $b_j = 1$ refers to a Miss query), i.e., if $b_j=0$, then the query is either $x=x_j^{\sf hit}$ in the forward direction or $y=y_j^{\sf hit}$ in the backward direction, and if $b_j=1$, then the query is either $x=x_j^{\sf miss}$ in the forward direction or $y=y_j^{\sf miss}$ in the backward direction; 
\item $\cA$ never makes a query that $\pi$-touches $p^*_j$ and $v_j=\bot$. 
\end{itemize}
Then we have 
\begin{align*}
\Pr_{\vec{v},\vec{b} }\left[
\begin{array}{ll}
(x^*_1,...,x^*_k,y^*_1,...,y^*_k,z)\in R\\
~\land ~
\forall j\in[k]~x_j=x^*_j\\
~\land ~
(\vec{v},\vec{b}) \text{~is~the~correct~guess}
\end{array}
:(x_1,...,x_k,z)\leftarrow \cA^{\pi[\vec{x}^*\rightarrow \vec{y}^*]}\right]\\
\ge
\frac{1}{(2q+1)^k}\Pr\left[
\begin{array}{ll}
(x^*_1,...,x^*_k,y^*_1,...,y^*_k,z)\in R\\
~\land ~
\forall j\in[k]~x_j=x^*_j
\end{array}
:(x_1,...,x_k,z)\leftarrow \cA^{\pi[\vec{x}^*\rightarrow \vec{y}^*]}\right]
\end{align*}
since for an execution of $\cA^{\pi[\vec{x}^*\rightarrow \vec{y}^*]}$, there is a unique 
correct guess among at most $(2q+1)^k$ possibilities,\footnote{For each $j\in [k]$, we have 
$(v_j,b_j)\in ([q]\times \{0,1\})\cup \{(\bot,\bot)\}$, and thus
there are at most $(2q+1)$ possibilities.} and the choice of $(\vec{v},\vec{b})$ is independent of the execution of $\cA$. 

We also have 
\begin{align*}
&\Pr\left[
\begin{array}{ll}
(x^*_1,...,x^*_k,y^*_1,...,y^*_k,z)\in R \\
~\land ~
\forall j\in[k]~x_j=x^*_j\\
\end{array}
:(x_1,...,x_k,z)\leftarrow S[\cA,\pi,\pi^*]\right]\\
\ge
&\Pr_{\vec{v},\vec{b}}\left[
\begin{array}{ll}
(x^*_1,...,x^*_k,y^*_1,...,y^*_k,z)\in R\\
~\land ~
\forall j\in[k]~x_j=x^*_j\\
~\land ~
(\vec{v},\vec{b}) \text{~is~the~correct~guess}
\end{array}
:(x_1,...,x_k,z)\leftarrow \cA^{\pi[\vec{x}^*\rightarrow \vec{y}^*]}\right].
\end{align*}
since conditioned on that $(\vec{v},\vec{b})$ is the correct guess, $O$ simulated by $S[\cA,\pi,\pi^*]$ behaves exactly in the same way as $\pi[\vec{x}^*\rightarrow \vec{y}^*]$.
Indeed, if the guess is correct, at the 
$v_j$-th query, 
$S[\cA,\pi,\pi^*]$ reprograms $O$ as $x^*_j\rightarrow y^*_j$ by \Cref{lem:classical_correctness_of_reprogramming}, and 
any query $x\in X_j$ in the forward direction or $y\in Y_j$ in the backward direction is answered after this reprogramming is done. In this case the response from $O$ is identical to that from $\pi[\vec{x}^*\rightarrow \vec{y}^*]$ by \Cref{cla:claim}. 
Combining the above, we obtain \Cref{lem:measure_and_reprogram}. 
\end{proof}

\if0
\subsection{Bad Probability}
We use \Cref{lem:measure_and_reprogram} to show the lifting theorem (\Cref{thm:lifting}). 
The strategy is to apply \Cref{lem:measure_and_reprogram}  for random permutations $\pi$ and $\pi^*$. (Imagine that $\pi$ is the "simulated" oracle whereas $\pi^*$ is the "real" oracle.) 
In doing so, we need a bound for the probability that 
$(\pi,\pi^*)\in G[\vec{x}^*]$ 
since \Cref{lem:measure_and_reprogram} is only applicable in such cases. 
\begin{lemma}[Bad Probability]\label{lem:bad_prob}
Let $\pi$ be a (fixed) permutation and
$\vec{x}^*=(x^*_1,...,x^*_k)$ be (fixed) pairwise distinct input elements.  
Then we have 
\[
\Pr_{\pi^*}[(\pi,\pi^*)\notin G[\vec{x}^*]]\le \frac{k^2}{|X|}
\]
\end{lemma}
\begin{proof}
Recall that $(\pi,\pi^*)\in G[\vec{x}^*]$ means that  $(p_1^*=(x^*_1,y^*_1=\pi^*(x^*_1)),\ldots, p_k^*=(x^*_k,y^*_k=\pi^*(x^*_k)))$ is good with respect to $\pi$, i.e., 
$(p_1^*,...,p^*_j)$ is disjoint and  
 $\pi(x_i^*) \ne y_j^*$ for all $i,j \in [k]$.
Since the $\vec{x}^*$ is assumed to be distinct, the disjointness of $(p_1^*,...,p^*_j)$ is always satisfied. 

The only condition that might fail is the other one: $\pi(x_i^*) = y_j^*$ for some $i,j$. We aim to find an upper bound on the probability that this event occurs.

For each pair $(i, j)$, consider the event $E_{i,j}$ that  $\pi(x_i^*) = y_j^*$ holds.
Since $\pi(x_i^*)$ is a fixed element in $X$, and $y_j^* = \pi^*(x_j^*)$ is uniformly random over $X$, the probability that $\pi(x_i^*) = y_j^*$ is:
\[
\Pr[E_{i,j}] = \frac{1}{|X|}.
\]

There are $k^2$ such pairs $(i, j)$. By the union bound, the probability that at least one of these events occurs is at most the sum of their individual probabilities:
\[
\Pr\left( \bigcup_{i \ne j} E_{i,j} \right) \leq \sum_{i,j} \Pr[E_{i,j}] = \frac{k^2}{|X|}.
\]
Therefore, the probability that $(\pi,\pi^*)\in G[\vec{x}^*]$ is at most $\dfrac{k^2}{|X|}$.
\end{proof}
\fi

\subsection{Proof of the Classical Lifting Theorem}


We prove \Cref{thm:lifting} based on \Cref{lem:measure_and_reprogram}.
\begin{proof}[Proof of \Cref{thm:lifting}]
\if0
$\cB^{\pi^*}$ works as follows: 
\begin{itemize}
\item Choose a uniformly random permutation $\pi$.
\item Initialize $O:=\pi$. 
\item For $j\in [k]$, 
randomly choose $(b_j,i_j)\leftarrow (\{0,1\}\times [q])\cup \{(\bot,\bot)\}$. If there are $j_1\ne j_2$ such that $i_{j_1}=i_{j_2}\neq \bot$, immediately abort. 
\item Run $\cA$ with the oracle $O$ where the oracle for $\cA$'s $i$-th query is simulated as follows:
\begin{itemize}
    \item If $i=i_j$ for some $j\in [k]$, 
    \begin{itemize}
\item If $b_j=0$ (hit), 
\begin{itemize}
\item if $\cA$'s $i_j$-th query is a forward query $x'_{i_j}$, then query $x'_{i_j}$ to $\pi^*$ to get $y'_{i_j}=\pi^*(x'_{i_j})$ and  reprogram $O$ to $O[x'_{i_j}\rightarrow y'_{i_j}]$.
\item if $\cA$'s $i_j$-th query is a backward query $y'_{i_j}$, then query $y'_{i_j}$ to ${\pi^*}^{-1}$ to get $x'_{i_j}={\pi^*}^{-1}(y'_{i_j})$ and  reprogram $O$ to $O[x'_{i_j}\rightarrow y'_{i_j}]$.
\end{itemize}
\item If $b_j=1$ (miss), 
\begin{itemize}
\item if $\cA$'s $i_j$-th query is a forward query $x'_{i_j}$, then query $\pi(x'_{i_j})$ to ${\pi^*}^{-1}$ to get ${\pi^*}^{-1}(\pi(x'_{i_j}))$ and  reprogram $O$ to $O[{\pi^*}^{-1}(\pi(x'_{i_j}))\rightarrow \pi(x'_{i_j})]$.
\item if $\cA$'s $i_j$-th query is a backward query $y'_{i_j}$, then query $\pi^{-1}(y'_{i_j})$ to $\pi^*$ to get $\pi^*(\pi^{-1}(y'_{i_j}))$ and  reprogram $O$ to $O[\pi^{-1}(y'_{i_j})\rightarrow \pi^*(\pi^{-1}(y'_{i_j}))]$.
\end{itemize}
\end{itemize}
   \item Otherwise, just use the current $O$ to answer the $i$-th query. 
\end{itemize}
\item Let $(x_1,...,x_k,z)$ be $\cA$'s final output. 
\item Output $(x_1,...,x_k,z)$. 
\end{itemize}
\fi

$\cB^{\pi^*}$ runs $S[\cA,\pi,\pi^*]$ for  uniformly random $\pi$.
Then one can see that for any $\pi^*$, 
\begin{align*}
&\Pr \left[
(x_1,...,x_k,\pi^*(x_1),...,\pi^*(x_k),z)\in R
:(x_1,...,x_k,z)\leftarrow \cB^{\pi^*}\right]\\
&=
\sum_{(x^*_1,...,x^*_k)}\Pr_{\pi}\left[
\begin{array}{ll}
(x^*_1,...,x^*_k,y^*_1,...,y^*_k,z)\in R \\
~\land ~
\forall j\in[k]~x_j=x^*_j
\end{array}
:(x_1,...,x_k,z)\leftarrow S[\cA,\pi,\pi^*]\right]
\end{align*}
where 
$y^*_j=\pi^*(x^*_j)$.  
By taking an average over random $\pi^*$, we have

\begin{align*}
&\Pr_{\pi^*} \left[
(x_1,...,x_k,\pi^*(x_1),...,\pi^*(x_k),z)\in R
:(x_1,...,x_k,z)\leftarrow \cB^{\pi^*}\right]\\
&=
\sum_{(x^*_1,...,x^*_k)}\Pr_{\pi,\pi^*}\left[
\begin{array}{ll}
(x^*_1,...,x^*_k,y^*_1,...,y^*_k,z)\in R \\
~\land ~
\forall j\in[k]~x_j=x^*_j
\end{array}
:(x_1,...,x_k,z)\leftarrow S[\cA,\pi,\pi^*]\right]\\
&\ge
\sum_{(x^*_1,...,x^*_k)}
\Pr_{\pi,\pi^*}[(\pi,\pi^*)\in G[\vec{x}^*]]\\
&\cdot 
\Pr_{(\pi,\pi^*)\leftarrow G[\vec{x}^*]}\left[
\begin{array}{ll}
(x^*_1,...,x^*_k,y^*_1,...,y^*_k,z)\in R \\
~\land ~
\forall j\in[k]~x_j=x^*_j
\end{array}
:(x_1,...,x_k,z)\leftarrow S[\cA,\pi,\pi^*]\right]\\
&\ge
\sum_{(x^*_1,...,x^*_k)}
\left(1-\frac{k^2}{|X|}\right)\\
&\cdot 
\Pr_{(\pi,\pi^*)\leftarrow G[\vec{x}^*]}\left[
\begin{array}{ll}
(x^*_1,...,x^*_k,y^*_1,...,y^*_k,z)\in R \\
~\land ~
\forall j\in[k]~x_j=x^*_j
\end{array}
:(x_1,...,x_k,z)\leftarrow S[\cA,\pi,\pi^*]\right]\\
&\ge 
\sum_{(x^*_1,...,x^*_k)}
\left(1-\frac{k^2}{|X|}\right)
\frac{1}{(2q+1)^k}\\
&\cdot \Pr_{(\pi,\pi^*)\leftarrow G[\vec{x}^*]}\left[
\begin{array}{ll}
(x^*_1,...,x^*_k,y^*_1,...,y^*_k,z)\in R\\
~\land ~
\forall j\in[k]~x_j=x^*_j
\end{array}
:(x_1,...,x_k,z)\leftarrow \cA^{\pi[x^*_1\rightarrow y^*_1]...[x^*_k\rightarrow y^*_k]}\right]\\
&= \sum_{(x^*_1,...,x^*_k)}\left(1-\frac{k^2}{|X|}\right)
\frac{1}{(2q+1)^k}\Pr_{\pi}\left[
\begin{array}{ll}
(x^*_1,...,x^*_k,\pi(x^*_1),...,\pi(x^*_k),z)\in R\\
~\land ~
\forall j\in[k]~x_j=x^*_j
\end{array}
:(x_1,...,x_k,z)\leftarrow \cA^{\pi}\right]\\
&= \left(1-\frac{k^2}{|X|}\right)
\frac{1}{(2q+1)^k}\Pr_{\pi}\left[
(x_1,...,x_k,\pi(x_1),...,\pi(x_k),z)\in R
:(x_1,...,x_k,z)\leftarrow \cA^{\pi}\right]
\end{align*}
where 
the second inequality follows from \Cref{lem:bad_prob}, 
the third inequality follows from \Cref{lem:measure_and_reprogram}, 
and  the  
second-to-last
equality follows from \Cref{lem:uniform}. 

\end{proof}

\section{Quantum Lifting Theorem for Permutations}\label{sec:quantum_lifting}

We first state the main quantum lifting result:

\begin{theorem}[Quantum Lifting Theorem]
\label{thm:quantum_lifting}
Let $\cA$ be a quantum algorithm that makes $q$ quantum queries to an (invertible) random permutation oracle on $X$ and $R$ is a relation on $X^k \times X^k \times Z$. 
Then there exists an algorithm $\cB$ making at most $k$ classical queries such that 
\begin{align*}
&\Pr_{\pi^*} \left[
(x_1,...,x_k,\pi^*(x_1),...,\pi^*(x_k),z)\in R
:(x_1,...,x_k,z)\leftarrow \cB^{\pi^*}\right]\\
&\ge
\frac{\left(1 - \frac{k^2}{|X|}\right)}{(8q+1)^{2k}}\Pr_{\pi^*} \left[
(x_1,...,x_k,\pi^*(x_1),...,\pi^*(x_k),z)\in R
:(x_1,...,x_k,z)\leftarrow \cA^{\pi^*}\right].
\end{align*}
\end{theorem}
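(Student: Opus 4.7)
The plan is to mirror the structure of the proof of the classical lifting theorem (\Cref{thm:lifting}), replacing the classical measure-and-reprogram lemma (\Cref{lem:measure_and_reprogram}) with a quantum counterpart proved by the state-decomposition argument sketched in the technical overview. As a preprocessing step I would invoke \Cref{lem:quantum_algo} to put $\cA$ into the normal form making $2q$ alternating queries to $O_\pi$ and $O_{\pi^{-1}}$; this lets me handle forward and inverse queries uniformly and is what ultimately replaces the classical factor $(2q+1)^k$ by a quantum factor of the form $(8q+1)^{2k}$.

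Next I would define a quantum simulator $S[\cA,\pi,\pi^*]$ analogous to \Cref{def:classical_simulator}, now sampling for each coordinate $j\in[k]$ a triple $(v_j,b_j,c_j)\in ([2q]\times\{0,1\}\times\{0,1\})\cup\{(\bot,\bot,\bot)\}$, subject to the $v_j$'s being distinct whenever non-$\bot$. The extra bit $c_j$ records whether, at the chosen $v_j$-th query, the reprogramming is installed \emph{before} the query is answered (when $c_j=0$) or only \emph{after} (when $c_j=1$); the bit $b_j$ selects hit versus miss exactly as in the classical simulator. At query $v_j$ the simulator measures the input register, reads off the sampled value together with $b_j$ to derive a candidate $(x^\ast_j,y^\ast_j)$ via a single classical call to $\pi^*$ or its inverse, and then answers the rest of $\cA$'s queries according to the resulting reprogrammed oracle.

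The heart of the proof is the quantum measure-and-reprogram lemma. Fix a distinct $\vec{x}^*$ and $(\pi,\pi^*)\in G[\vec{x}^*]$, let $\vec{y}^*=\pi^*(\vec{x}^*)$, and for each query position $i\in[2q]$ and coordinate $j\in[k]$ define $P^{\mathsf{hit}}_{i,j}$ and $P^{\mathsf{miss}}_{i,j}$ to be the computational-basis projectors onto $x^{\mathsf{hit}}_j$ and $x^{\mathsf{miss}}_j$ when position $i$ is a forward query, and onto $y^{\mathsf{hit}}_j$ and $y^{\mathsf{miss}}_j$ when it is an inverse query. Inserting the identity $I = (I-P^{\mathsf{hit}}_{i,j}-P^{\mathsf{miss}}_{i,j}) + P^{\mathsf{hit}}_{i,j} + P^{\mathsf{miss}}_{i,j}$ immediately before the $i$-th oracle call and using \Cref{cla:claim} to replace $O_{\pi[\vec{x}^*\to\vec{y}^*]}$ by $O_\pi$ on the complement subspace, I would inductively decompose the final state of $\cA^{\pi[\vec{x}^*\to\vec{y}^*]}$ into a sum indexed by the same tuples $(\vec{v},\vec{b},\vec{c})$ that the simulator samples, giving $4\cdot 2q + 1 = 8q+1$ terms per coordinate and thus $(8q+1)^k$ terms in total. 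Each term is (up to a sign and a normalisation) the amplitude of one branch of $S[\cA,\pi,\pi^*]$, so Cauchy-Schwarz applied to the projector $\Pi$ encoding $x_j=x^*_j$ and $(x_1^*,\dots,x_k^*,y_1^*,\dots,y_k^*,z)\in R$ loses a factor $(8q+1)^k$ from the number of summands, while the uniform sampling of $(\vec{v},\vec{b},\vec{c})$ loses another $(8q+1)^k$ from the guessing probability, producing the final $(8q+1)^{-2k}$ factor.

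Plugging this quantum measure-and-reprogram lemma into the scheme of the proof of \Cref{thm:lifting} then finishes the argument: $\cB^{\pi^*}$ samples a uniformly random $\pi$ and runs $S[\cA,\pi,\pi^*]$, a sum over distinct $\vec{x}^*$ combined with \Cref{lem:bad_prob} extracts the $(1-k^2/|X|)$ factor, the quantum measure-and-reprogram lemma contributes $(8q+1)^{-2k}$, and \Cref{lem:uniform} replaces the distribution of $\pi[\vec{x}^*\to\vec{y}^*]$ by that of a uniform permutation. The main obstacle is the inductive state decomposition, and in particular verifying that the ``untouched'' residual branch really does correspond to running $\cA$ against the unreprogrammed $\pi$ throughout; this relies on \Cref{cla:claim} ensuring that partial reprogrammings of $\pi$ agree with both $\pi$ and $\pi[\vec{x}^*\to\vec{y}^*]$ on the complement of the hit/miss subspaces, which is precisely why restricting to good pairs $(\pi,\pi^*)\in G[\vec{x}^*]$ is essential.
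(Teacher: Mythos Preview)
Your proposal is correct and follows essentially the same approach as the paper: normal form via \Cref{lem:quantum_algo}, the quantum simulator with triples $(v_j,b_j,c_j)$, the inductive state decomposition using hit/miss projectors together with \Cref{cla:claim}, Cauchy--Schwarz to obtain the $(8q+1)^{2k}$ loss, and the wrap-up via \Cref{lem:bad_prob} and \Cref{lem:uniform}. The only point to make precise when you write it out is that at the $(t{+}1)$-th query you insert the hit/miss identity only for coordinates $j$ not yet reprogrammed in the current branch (the set $[k]\setminus J$), so that the ``untouched'' residual really is answered by the partial reprogramming $\pi[\vec{x}^*_J\to\vec{y}^*_J]$ and \Cref{cla:claim} applies; this is exactly the subtlety you flag at the end.
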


Similarly to the classical case in \Cref{sec:classical_lifting}, we first define a simulator $S[\cA,\pi,\pi^*]$. 
This simulator works similarly to the classical setting (\Cref{def:classical_simulator}) except that it \emph{measures} the $v_j$-th query for each $j$ to determine the point on which the oracle is reprogrammed, and incorporates additional randomness $\vec{c}$, which determines whether  reprogramming occurs before or after answering a query. 

\begin{definition}[Permutation Measure-and-Reprogram Experiment] \label{def:quantum_simulator} 
For two permutations $\pi:X\rightarrow X$ , $\pi^*:X\rightarrow X$ let
$S[\cA, \pi, \pi^*]$ be an algorithm 
that has oracle access to $\pi$ and $\pi^*$ and runs $\cA$ with a stateful oracle $O$ as follows:

\begin{enumerate}
\item \label{item:vbc_condition} Pick $\vec{v}\in ([q]\cup \{\bot\})^k$, $\vec{b} \in \{0,1,\bot\}^k$, and $\vec{c} \in \{0,1,\bot\}^k$ uniformly at random, conditioned that
\begin{itemize}
    \item there is no duplicate entry for $\vec{v}$ other than $\bot$, and
    \item for any $j\in[k]$, if $v_j=\bot$, then $b_j=c_j=\bot$ and otherwise $b_j\ne \bot$ and $c_j\ne \bot$. 
\end{itemize} 
    \item Initialize $O:=\pi$; here $O$ provides both forward and backward queries.
       \item Run $\cA^O$ where 
       when $\cA$ makes its $i$-th query, the oracle is simulated as follows:
    \begin{enumerate}
        \item If $i = v_j$ for some $j \in [k]$, measure $\cA$'s query register. We denote the measurement outcome by $x_{v_j}'$ if it is a forward query and by  $y_{v_j}'$ if it is a backward query.
        
If $c_j = 0$, first do the following reprogramming (according to the value of $b_j$) and then answer $\cA$'s $v_j$-th query using the reprogrammed oracle. Else if $c_j = 1$, answer $\cA$'s $v_j$-th query using oracle before reprogramming, and then do the following reprogramming. 
    \begin{enumerate}
    \item If $b_j=0$ (hit), 
        \begin{itemize}
            \item If the measurement outcome is a forward query $x'_{v_j}$, then query $x'_{v_j}$ to $\pi^*$ to get $y'_{v_j}=\pi^*(x'_{v_j})$ and  reprogram $O$ to $O[x'_{v_j}\rightarrow y'_{v_j}]$.
            \item If the measurement outcome is a backward query $y'_{v_j}$, then query $y'_{v_j}$ to ${\pi^*}^{-1}$ to get $x'_{v_j}={\pi^*}^{-1}(y'_{v_j})$ and reprogram $O$ to $O[x'_{v_j}\rightarrow y'_{v_j}]$.
        \end{itemize}
    \item If $b_j=1$ (miss), 
        \begin{itemize}
            \item If the measurement outcome is a forward query $x'_{v_j}$, then query $\pi(x'_{v_j})$ to ${\pi^*}^{-1}$ to get ${\pi^*}^{-1}(\pi(x'_{v_j}))$ and  reprogram $O$ to $O[{\pi^*}^{-1}(\pi(x'_{v_j}))\rightarrow \pi(x'_{v_j})]$.
            \item If the measurement outcome is a backward query $y'_{v_j}$, then query $\pi^{-1}(y'_{v_j})$ to $\pi^*$ to get $\pi^*(\pi^{-1}(y'_{v_j}))$ and reprogram $O$ to $O[\pi^{-1}(y'_{v_j})\rightarrow \pi^*(\pi^{-1}(y'_{v_j}))]$.
    \end{itemize}
    \end{enumerate}
    \item Else, answer $\cA$'s $i$-th query by just using the stateful oracle $O$ without any measurement or reprogramming;
    \end{enumerate}
    \item Let $(\vec{x} = (x_1, ..., x_k), z)$ be $\cA$'s output;
    \item Output $(x_1, ..., x_k, z)$.
\end{enumerate}
\end{definition}




Similarly to the classical case, we observe the following lemma:
\begin{lemma}\label{lem:quantum_correctness_of_reprogramming} 
    Let $\vec{x}^*\in X^k$ be a distinct tuple, $\vec{y}^*=\pi^*(\vec{x}^*)$, 
    $(\pi,\pi^*)\in G[\vec{x}^*]$, and  $(x^{\sf hit}_j,x^{\sf miss}_j,y^{\sf hit}_j,y^{\sf miss}_j)$  be as  defined in \Cref{def:hit_miss_queries}.
    In an execution of $S[\cA,\pi,\pi^*]$, suppose that $v_j\ne \bot$ and one of the following holds:
    \begin{itemize}
        \item 
         $b_j=0$ and
        the measured $v_j$-th query is a forward query $x'_{v_j}=x^{\sf hit}_j$;
          \item 
         $b_j=0$ and
        the measured $v_j$-th query is a backward query $y'_{v_j}=y^{\sf hit}_j$;
           \item 
         $b_j=1$ and
        the measured $v_j$-th query is a forward query $x'_{v_j}=x^{\sf miss}_j$;
          \item 
         $b_j=1$ and
        the measured $v_j$-th query is a backward query $y'_{v_j}=y^{\sf miss}_j$.
    \end{itemize}
Then  $S[\cA,\pi,\pi^*]$ reprograms $O$ to $O[x^*_j \rightarrow y^*_j]$
at the $v_j$-th query (before or after answering the query). 
\end{lemma}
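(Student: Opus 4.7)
The plan is to carry out a direct case analysis on the four listed scenarios, exactly mirroring the argument for the classical analogue \Cref{lem:classical_correctness_of_reprogramming}. The core observation is that the simulator's reprogramming action on $O$ is a purely classical operation once the measurement of $\cA$'s query register has been performed: the measurement collapses the query to a definite input (either some $x'_{v_j}$ in the forward direction or some $y'_{v_j}$ in the backward direction), and then the branching in \Cref{def:quantum_simulator} becomes identical to the classical branching in \Cref{def:classical_simulator}. Hence the quantum/classical distinction plays no role in this lemma beyond relabeling ``the $v_j$-th classical query'' as ``the measured $v_j$-th query.''

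First I would fix $j \in [k]$ with $v_j \ne \bot$ and unfold \Cref{def:hit_miss_queries}, recalling that $x_j^{\sf hit}=x_j^*$, $x_j^{\sf miss}=\pi^{-1}(y_j^*)$, $y_j^{\sf hit}=y_j^*$, and $y_j^{\sf miss}=\pi(x_j^*)$, all of which are well-defined and pairwise distinct thanks to $(\pi,\pi^*)\in G[\vec{x}^*]$. Then I would verify each of the four cases in turn: (i) if $b_j=0$ and the measured forward query is $x'_{v_j}=x_j^{\sf hit}=x_j^*$, the simulator queries $\pi^*$ to obtain $y'_{v_j}=\pi^*(x_j^*)=y_j^*$ and reprograms to $O[x_j^*\to y_j^*]$; (ii) if $b_j=0$ and the measured backward query is $y'_{v_j}=y_j^{\sf hit}=y_j^*$, the simulator queries ${\pi^*}^{-1}$ to obtain $x'_{v_j}={\pi^*}^{-1}(y_j^*)=x_j^*$ and reprograms to $O[x_j^*\to y_j^*]$; (iii) if $b_j=1$ and the measured forward query is $x'_{v_j}=x_j^{\sf miss}=\pi^{-1}(y_j^*)$, then $\pi(x'_{v_j})=y_j^*$ and ${\pi^*}^{-1}(\pi(x'_{v_j}))={\pi^*}^{-1}(y_j^*)=x_j^*$, so the simulator reprograms to $O[x_j^*\to y_j^*]$; and (iv) if $b_j=1$ and the measured backward query is $y'_{v_j}=y_j^{\sf miss}=\pi(x_j^*)$, then $\pi^{-1}(y'_{v_j})=x_j^*$ and $\pi^*(\pi^{-1}(y'_{v_j}))=y_j^*$, so again the simulator reprograms to $O[x_j^*\to y_j^*]$.

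In each case the conclusion that the reprogrammed oracle is precisely $O[x_j^*\to y_j^*]$ follows simply from substituting the definitions; the statement about ``before or after answering the query'' is handled uniformly by the $c_j$ bit of the simulator, which does not affect which reprogramming is ultimately applied, only its timing relative to the $v_j$-th query. Since this is a pointwise verification, there is no real obstacle: the goodness condition $(\pi,\pi^*)\in G[\vec{x}^*]$ is only invoked implicitly to guarantee that the hit and miss inputs of \Cref{def:hit_miss_queries} are well-defined and non-colliding, which is precisely what makes the four cases disjoint and the reprogramming unambiguous. The proof is therefore essentially a one-line substitution in each of the four cases, and I would present it as such.
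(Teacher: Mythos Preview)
Your proposal is correct and follows essentially the same approach as the paper's own proof: a direct case analysis on the four scenarios, substituting the definitions of $x_j^{\sf hit}$, $x_j^{\sf miss}$, $y_j^{\sf hit}$, $y_j^{\sf miss}$ from \Cref{def:hit_miss_queries} into the simulator's branching in \Cref{def:quantum_simulator} and verifying in each case that the reprogramming is $O[x_j^*\to y_j^*]$.
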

The proof of the above lemma is straightforward based on \Cref{def:hit_miss_queries} and the definition of $S[\cA,\pi,\pi^*]$. 
For completeness, we give a proof in \Cref{correctness_reprogramming}. 

Then we prove the following quantum measure-and-reprogram lemma. 
\begin{lemma}[Quantum Measure-and-Reprogram Lemma] \label{lemma:quantum_measure_reprogram}
Let $\cA$ be an algorithm that makes $q$ quantum queries, $\vec{x}^* = (x_1^*, ..., x_k^*)\in X^k$ be a distinct tuple,  $(\pi,\pi^*)\in G[\vec{x}^*]$, 
$\vec{y}^* = (y_1^*,...,y_k^*)=\pi^*(\vec{x}^*)$,
and 
$R\subseteq X^k \times X^k \times Z$ be a relation.
Then, we have:
    \begin{align*}
    &\Pr\left[
\begin{array}{ll}
(x^*_1,...,x^*_k,y^*_1,...,y^*_k,z)\in R \\
~\land ~
\forall j\in[k]~x_j=x^*_j\\
\end{array}
:(x_1,...,x_k,z)\leftarrow S[\cA, \pi,\pi^*]\right]\\
&\ge  
\frac{1}{(8q+1)^{2k}}
\Pr\left[
\begin{array}{ll}
(x^*_1,...,x^*_k,y^*_1,...,y^*_k,z)\in R \\
~\land ~
\forall j\in[k]~x_j=x^*_j\\
\end{array}
:(x_1,...,x_k,z)\leftarrow \cA^{\pi[\vec{x}^* \rightarrow \vec{y}^*]}\right]. 
\end{align*}
\end{lemma}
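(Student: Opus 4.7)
The plan is to follow the approach sketched in the technical overview via a careful inductive state-decomposition argument. First, I invoke \Cref{lem:quantum_algo} to replace $\cA$ by its normal form making $q$ queries to $O_\pi$ alternating with $q$ queries to $O_{\pi^{-1}}$, for a total of $2q$ oracle calls. This factor of $2$ is precisely what upgrades a $4q+1$ bound (natural for a single direction) to $8q+1$ in the present statement, and it lets us treat the two query directions uniformly from this point on.

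Next, I set up the central identity decomposition. For each $j \in [k]$, define ``touch'' projectors on the query register: for forward queries, $P_{j,0}^{\mathrm{fwd}} = \ketbra{x_j^{\mathrm{hit}}}{x_j^{\mathrm{hit}}}$ and $P_{j,1}^{\mathrm{fwd}} = \ketbra{x_j^{\mathrm{miss}}}{x_j^{\mathrm{miss}}}$, and analogously $P_{j,0}^{\mathrm{bwd}}, P_{j,1}^{\mathrm{bwd}}$ using $y_j^{\mathrm{hit}}, y_j^{\mathrm{miss}}$. The goodness assumption (\Cref{def:good}) together with \Cref{def:hit_miss_queries} ensures that these projectors for distinct $(j,b)$ are mutually orthogonal. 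At each query step $t$, I expand $I$ as $\bigl(I - \sum_{j,b} P_{j,b}\bigr) + \sum_{j,b} P_{j,b}$ (choosing the forward or backward variant according to the direction of the $t$-th call) and repeatedly apply \Cref{cla:claim}: on the subspace where index $j$ has not yet been ``touched'', the fully reprogrammed oracle $O_{\pi[\vec{x}^* \to \vec{y}^*]}$ coincides with the partial reprogramming that omits the $j$-th pair, and on the subspace avoiding all indices it simply equals $O_\pi$. Iterating this expansion across the $2q$ queries and the $k$ indices yields, by induction, a decomposition $\cA^{\pi[\vec{x}^* \to \vec{y}^*]} \ket{0} = \sum_{(\vec v, \vec b, \vec c)} (-1)^{|\vec c|} \ket{\phi_{\vec v, \vec b, \vec c}}$, where the sum ranges over all $(\vec v, \vec b, \vec c)$ satisfying the constraints of \Cref{item:vbc_condition} in \Cref{def:quantum_simulator} (giving at most $(8q+1)^k$ summands), and each $\ket{\phi_{\vec v, \vec b, \vec c}}$ is the unnormalized state produced by $S[\cA,\pi,\pi^*]$ conditioned on sampling $(\vec v, \vec b, \vec c)$, as certified by \Cref{lem:quantum_correctness_of_reprogramming}.

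To conclude, let $\Pi$ be the projection onto the event $\{(x_1^*,\ldots,x_k^*,y_1^*,\ldots,y_k^*,z) \in R \land \forall j,\, x_j = x_j^*\}$ on $\cA$'s output register. Cauchy--Schwarz applied to the $(8q+1)^k$-term decomposition gives $\|\Pi \, \cA^{\pi[\vec{x}^* \to \vec{y}^*]}\ket{0}\|^2 \le (8q+1)^k \sum_{(\vec v, \vec b, \vec c)} \|\Pi \ket{\phi_{\vec v, \vec b, \vec c}}\|^2$, where the $(-1)^{|\vec c|}$ signs disappear under squared norms. Since $S$ samples $(\vec v, \vec b, \vec c)$ uniformly over the $(8q+1)^k$ valid configurations, the right-hand side equals $(8q+1)^{2k}$ times $\Pr[S \text{ succeeds}]$, which rearranges to the claimed inequality.

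I expect the main obstacle to be the rigorous bookkeeping of the inductive decomposition when $k > 1$: one must argue at every step that inserting the projectors for index $j$ does not disrupt the partial reprogrammings already performed for indices $j' \ne j$. Here the goodness condition is essential, since via \Cref{cla:claim} it guarantees both that partial reprogrammings on disjoint indices commute (\Cref{lem:commutativity_disjoint_pairs}) and that on the ``untouched'' subspace any partial oracle agrees with the fully reprogrammed oracle as well as with $\pi$ itself. A secondary subtlety is matching each term in the decomposition to the corresponding simulator branch: verifying that the measurement performed by $S$ at step 3(a) of \Cref{def:quantum_simulator} implements exactly the projector insertion, and that the case split on $c_j$ aligns with whether the $v_j$-th call is answered by $O_{\pi[\vec{x}^* \to \vec{y}^*]}$ or by $O_\pi$ (the latter producing the $(-1)$ sign).
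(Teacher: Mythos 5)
Your proposal is correct and follows essentially the same route as the paper's proof: reduction to the normal form with $2q$ alternating queries, an inductive insertion of hit/miss projectors yielding the signed decomposition indexed by $(\vec v,\vec b,\vec c)$, identification of each term with a simulator branch via \Cref{lem:quantum_correctness_of_reprogramming} and \Cref{cla:claim}, and a triangle/Cauchy--Schwarz step over the at most $(8q+1)^k$ terms combined with the uniform choice of $(\vec v,\vec b,\vec c)$ to obtain the $(8q+1)^{2k}$ loss. The subtleties you flag (orthogonality of the projectors and non-interference of partial reprogrammings under the goodness condition, and the alignment of the $c_j$ split with the sign) are exactly the points the paper's proof addresses.
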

\begin{proof}[Proof of \Cref{lemma:quantum_measure_reprogram}]

By relying on the normal form in \Cref{lem:quantum_algo} \footnote{Using this definition allows us to  assume w.l.o.g. each query is already fixed as a forward or backward query before the execution (causing only a constant loss).}, we can assume the algorithm is in the normal form and makes $2q$ queries.

We will use the following notation.
Similar to the notations defined in \Cref{sec:prelim_algo_permutation}, we will denote a forward quantum query to the original permutation $\pi$, by the query operator $O_{\pi}$; a backward quantum query to the original permutation $\pi$, by the query operator $O_{\pi^{-1}}$.
A forward (or backward) quantum query to the reprogrammed permutation $\pi[x_j^* \rightarrow y_j^*]$ will be referred to, as standard, by the operator $O_{\pi[x_j^* \rightarrow y_j^*]}$ (or  $O_{\pi[x_j^* \rightarrow y_j^*]^{-1}}$ 
respectively) and a forward quantum query to the reprogrammed permutation $\pi[\vec{x}^* \rightarrow \vec{y}^*]$, by the vectors $\vec{x}^*$ and $\vec{y}^*$ will be denoted by $O_{\pi[\vec{x}^* \rightarrow \vec{y}^*]}$ (or $O_{\pi[\vec{x}^* \rightarrow \vec{y}^*]^{-1}}$ respectively). 

Fix a permutation $\pi$, any $\vec{x}^* \in X^k$ without duplicate entries and $\vec{y}^* = \pi^*(\vec{x}^*) \in X^k$, let $\left| {\psi^{{\pi[\vec{x}^* \rightarrow \vec{y}^*]}}_{2q}} \right\rangle$ 
denote the state of the algorithm $\As$ after making all its queries to ${\pi[\vec{x}^* \rightarrow \vec{y}^*]}$. 
Now, we will analyze the execution of an algorithm $\As$ that has oracle access to the reprogrammed permutation $\pi[\vec{x}^* \rightarrow \vec{y}^*]$.

Then the final state $\left| {\psi^{{\pi[\vec{x}^* \rightarrow \vec{y}^*]}}_{2q}} \right\rangle$ 
 of the algorithm $\As$ (before applying the final measurement) can be described by the following quantum state: 
\begin{align}\label{eq:def_final_state}
    \left| {\psi^{{\pi[\vec{x}^* \rightarrow \vec{y}^*]}}_{2q}} \right \rangle=  O_{{\pi[ \vec{x}^* \rightarrow \vec{y}^*]^{-1}}} U_{2q}  O_{{\pi[ \vec{x}^* \rightarrow \vec{y}^*]}} \cdots O_{{\pi[ \vec{x}^* \rightarrow \vec{y}^*]^{-1}}} U_2 O_{{\pi[ \vec{x}^* \rightarrow \vec{y}^*]}} U_1 \ket 0.
\end{align}

In the next step, we decompose this quantum state, so that each component in the decomposition corresponds to one of the cases in the quantum simulator; 
i.e., each component corresponds to a set of possible parameters $\vec{v}, \vec{b}, \vec{c}$ and the simulator with these parameters outputs  $\vec{x}^*$ and some $z$.


More formally, we will show the following:
\begin{align}\label{eq:state_decomposition}
    \left| {\psi^{{\pi[\vec{x}^* \rightarrow \vec{y}^*]}}_{2q}} \right \rangle=  
   \sum_{\vec{v},\vec{b},\vec{c}} (-1)^{\beta_{\vec{v},\vec{b},\vec{c}}} \ket{\phi_{\vec{v}, \vec{b}, \vec{c}}}
\end{align} 
where $\beta_{\vec{v},\vec{b},\vec{c}}\in \{0,1\}$ 
and the sum is taken over all $(\vec{v},\vec{b},\vec{c})$ that satisfies the conditions in \Cref{item:vbc_condition} of \Cref{def:quantum_simulator}.  
Here, $\ket{\phi_{\vec{v}, \vec{b}, \vec{c}}}$ is a subnormalized state corresponding to the final state of $S[\cA,\pi,\pi^*]$  (before applying the final measurement) with the fixed choice of $(\vec{v},\vec{b},\vec{c})$,
where we insert the following projections: 
for all $j\in [k]$ such that $v_j\ne \bot$, insert the following projections right after receiving the $v_j$-th query:\footnote{Recall that  the $v_j$-th query is a forward query if $v_j$ is odd and 
is a backward query if $v_j$ is even.}
\begin{itemize}
\item If $v_j$ is odd and $b_j=0$, the $v_j$-th query resister is projected onto $\ket{x^{\sf hit}_j}$; 
\item If $v_j$ is odd and $b_j=1$, the $v_j$-th query resister is projected onto  $\ket{x^{\sf miss}_j}$; 
\item If $v_j$ is even and $b_j=0$, the $v_j$-th query resister is projected onto  $\ket{y^{\sf hit}_j}$; 
\item If $v_j$ is even and $b_j=1$, the $v_j$-th query resister is projected onto  $\ket{y^{\sf miss}_j}$; 
\end{itemize}
where $x^{\sf hit}_j$, $x^{\sf miss}_j$ $y^{\sf hit}_j$, and $y^{\sf miss}_j$  are defined in \Cref{def:hit_miss_queries}.

We will proceed to show \Cref{eq:state_decomposition} by induction over the index of the query of the algorithm $\cA$. More specifically, we will show that for any $1\leq t \leq 2q$, the state of the algorithm $\cA$ right after the $t$-th query can be written as:
\begin{align}\label{eq:state_decopose_middle}
    \left| {\psi^{{\pi[\vec{x}^* \rightarrow \vec{y}^*]}}_{t}} \right \rangle= 
    \sum_{\vec{v},\vec{b},\vec{c}} 
    (-1)^{\beta^{(t)}_{\vec{v},\vec{b},\vec{c}}} 
    \ket{\phi^{(t)}_{\vec{v}, \vec{b}, \vec{c}}}
\end{align}
where 
$\beta^{(t)}_{\vec{v},\vec{b},\vec{c}}\in \{0,1\}$ and
the sum is taken over all $\vec{v}\in ([t]\cup \{\bot\})^k$, $\vec{b}\in \{0,1,\bot\}^k$, and $\vec{c}\in \{0,1,\bot\}^k$ that satisfy the conditions in \Cref{item:vbc_condition} of \Cref{def:quantum_simulator}, and 
$\ket{\phi^{(t)}_{\vec{v}, \vec{b}, \vec{c}}}$ is defined similarly to $\ket{\phi_{\vec{v}, \vec{b}, \vec{c}}}$ except that we consider the state after the $t$-th query rather than the final state. Note that $\ket{\phi^{(2q)}_{\vec{v}, \vec{b}, \vec{c}}}=\ket{\phi_{\vec{v}, \vec{b}, \vec{c}}}$, thus it suffices to prove \Cref{eq:state_decopose_middle} holds for all $1\le t\le 2q$. 

\if0
\alex{More formally, we will show the following:
\begin{align*}
    \left| {\phi^{{\pi[\vec{x}^* \rightarrow \vec{y}^*]}}_{2q}} \right \rangle=  
    \sum_{\vec{v} \in [q]^k} \sum_{\vec{b} \in \{0,1\}^k, \vec{c} \in \{0, 1\}^k} \ket{\phi_{\vec{v}, \vec{b}, \vec{c}}}
\end{align*}
Intuitively, $\ket{\phi_{\vec{v}, \vec{b}, \vec{c}}}$ are a collection of subnormalized states,  summing up to the final state of the algorithm,  that are parameterized by when the measurement happens (an ordered vector $\vec{v}$ such that $1 \leq v_1 \cdots \leq v_t \leq q$), whether these queries are hit or miss ($\vec{b} \in \{0,1\}^k$) and whether they are made before or after each reprogramming ($\vec{c} \in \{0,1\}^t$).

We will proceed to show this by induction over the index of the query of the algorithm $\cA$. More specifically, we will show that for any $t \leq q$, the state of the algorithm $\cA$ after the $t$-th query can be written as:
\begin{align*}
    \left| {\phi^{{\pi[\vec{x}^* \rightarrow \vec{y}^*]}}_{2t}} \right \rangle= \sum_{\vec{v}: 0 \leq |\vec{v}| \leq \min(t, k)}  \sum_{\vec{b} \in \{0, 1\}^{|\vec{v}|}, \vec{c} \in \{0, 1\}^{|v|}} 
    \ket{\phi^{(t)}_{\vec{v}, \vec{b}, \vec{c}}}
\end{align*}
where for a fixed vector $\vec{v} = (v_1, \cdots , v_{m})$ of length $m \leq \min(t, k)$ and for fixed 
, $\vec{b} = (b_1, \cdots b_{m})$ and $\vec{c} = (c_1, \cdots c_{m})$ 
the state $\ket{\phi^{(t)}_{\vec{v}, \vec{b}, \vec{c}}}$ 
will be defined in a recursive manner as follows. To define, we first need to introduce the following notation. For any $l \leq m$, let $\vec{v}_{1:l} := (v_1, ..., v_l)$, $\vec{b}_{1:l} := (b_1, ..., b_l)$ and $\vec{c}_{1:l} := (c_1, ..., c_l)$ , be the vectors obtained by only considering the first $l$ elements of the vectors $\vec{v}$, $\vec{b}$ and $\vec{c}$.
Then, for any $l \leq m$,
we have:
\begin{align*}
\ket{\phi^{(t)}_{\vec{v}_{1:l}, \vec{b}_{1:l}, \vec{c}_{1:l}}} = 
     \begin{cases}
       & O_{\pi[\vec{x}^*_{{1:l-1}} \rightarrow \vec{y}^*_{{1:l-1}}]} 
        U_t \ket{\phi^{(t - 1)}_{\vec{v}_{1:l-1}, \vec{b}_{1:l-1} \vec{c}_{1:l-1}}} \text{ , if } t \neq v_l \\ 
       & O_{\pi[\vec{x}^*_{{1:l}} \rightarrow \vec{y}^*_{{1:l}}]} 
        \ketbra {x_{v_l}^{\sf hit}}{x_{v_l}^{\sf hit}} U_t \ket{\phi^{(t - 1)}_{\vec{v}_{1:l-1}, \vec{b}_{1:l-1} \vec{c}_{1:l-1}}} \text{ , if } t = v_l \text{ and } b_l = 0 \text{ and } c_l = 0 \\
        &O_{\pi[\vec{x}^*_{{1:l}} \rightarrow \vec{y}^*_{{1:l}}]}  {x_{v_l}^{\sf miss}}{x_{v_l}^{\sf miss}} U_t \ket{\phi^{(t - 1)}_{\vec{v}_{1:l-1}, \vec{b}_{1:l-1}, \vec{c}_{1:l-1}}} \text{ , if } t = v_l \text{ and } b_l = 1 \text{ and } c_l = 0  \\
        & O_{\pi[\vec{x}^*_{{1:l-1}} \rightarrow \vec{y}^*_{{1:l-1}}]} \ketbra {x_{v_l}^{\sf hit}}{x_{v_l}^{\sf hit}} U_t \ket{\phi^{(t - 1)}_{\vec{v}_{1:l-1}, \vec{b}_{1:l-1}, \vec{c}_{1:l-1}}} \text{ , if } t = v_l \text{ and } b_l = 0 \text{ and } c_l = 1 \\
        & O_{\pi[\vec{x}^*_{{1:l-1}} \rightarrow \vec{y}^*_{{1:l-1}}]} \ketbra {x_{v_l}^{\sf miss}}{x_{v_l}^{\sf miss}} U_t \ket{\phi^{(t - 1)}_{\vec{v}_{1:l-1}, \vec{b}_{1:l-1}, \vec{c}_{1:l-1}}} \text{ , if } t = v_l \text{ and } b_l = 1 \text{ and } c_l = 1 
    \end{cases}
\end{align*}

and where $\ket{\phi^{(0)}_{\vec{v}, \vec{b}, \vec{c}}} = \ket{0}$ and
 $\ket{\phi^{(t)}_{\vec{v}, \vec{b}, \vec{c}}} = \ket{\phi^{(t)}_{\vec{v}_{1:m}, \vec{b}_{1:m}, \vec{c}_{1:m}}}$.
}
\takashi{The above is inconsistent to the corrected version of the choice of $(v,b,c)$. I tried to adapt it, but I don't know how to do that without introducing too complicated notations. Thus, I gave a less formal descriptions as above; while this is less formal, I hope this is almost unambiguous.}
\fi

We will first analyze the base case, i.e. the state of the algorithm after the first query.
\paragraph*{The first query. } Without loss of generality, we assume the first query is a forward query. We start by considering the state up to the first query: $\left| {\psi^{{\pi[\vec{x}^* \rightarrow \vec{y}^*]}}_{1}} \right \rangle=O_{\pi[\vec{x}^* \rightarrow \vec{y}^*]} U_1 \ket 0$. We insert an additional identity operator and have,
\begin{align*}
    &O_{\pi[\vec{x}^* \rightarrow \vec{y}^*]} U_1 \ket 0  = O_{\pi[\vec{x}^*\rightarrow \vec{y}^*]} \, {I} \, U_1 \, \ket 0 \\
     &= O_{\pi[\vec{x}^* \rightarrow \vec{y}^*]} \left(I - \sum_{j=1}^k \ketbra {x_j^{\sf hit}}{x_j^{\sf hit}} +  \sum_{j=1}^k \ketbra {x_j^{\sf hit}}{x_j^{\sf hit}}
     -  \sum_{j=1}^k \ketbra {x_j^{\sf miss}}{x_j^{\sf miss}} +  \sum_{j=1}^k \ketbra {x_j^{\sf miss}}{x_j^{\sf miss}}
     \right) U_1 \ket 0 \\
     &= \underbrace{O_{\pi[\vec{x}^* \rightarrow \vec{y}^*]} \left(I - \sum_{j=1}^k \ketbra {x_j^{\sf hit}}{x_j^{\sf hit}} -  \sum_{j=1}^k \ketbra {x_j^{\sf miss}}{x_j^{\sf miss}} \right) U_1 \ket 0}_{(i)} + \underbrace{O_{\pi[\vec{x}^* \rightarrow \vec{y}^*]} \sum_{j=1}^k \ketbra {x_j^{\sf hit}}{x_j^{\sf hit}} U_1 \ket 0}_{(ii)} \\
     &+ \underbrace{O_{\pi[\vec{x}^* \rightarrow \vec{y}^*]} \sum_{j=1}^k \ketbra {x_j^{\sf miss}}{x_j^{\sf miss}} U_1 \ket 0}_{(iii)}.
\end{align*}

The first term $(i)$ equals to
\begin{align*}
    & O_{\pi[\vec{x}^* \rightarrow \vec{y}^*]} \left(I - \sum_{j=1}^k \ketbra {x_j^{\sf hit}}{x_j^{\sf hit}} -  \sum_{j=1}^k \ketbra {x_j^{\sf miss}}{x_j^{\sf miss}}\right) U_1 \ket 0 \\
    =& O_{\pi} \left(I - \sum_{j=1}^k \ketbra {x_j^{\sf hit}}{x_j^{\sf hit}} -  \sum_{j=1}^k \ketbra {x_j^{\sf miss}}{x_j^{\sf miss}}\right) U_1 \ket 0 \\
    =& O_{\pi} U_1 \ket 0 - \sum_{j=1}^k O_\pi \ketbra {x_j^{\sf hit}}{x_j^{\sf hit}} U_1 \ket 0 - \sum_{j=1}^k  O_{\pi} \ketbra {x_j^{\sf miss}}{x_j^{\sf miss}}U_1 \ket 0.
\end{align*}
This is because 
$x_j^{\sf hit}\neq x_j^{\sf miss}$ for all $j\in [k]$ by $(\pi,\pi^*)\in G[\vec{x}^*]$, and
on inputs that are neither hit nor miss inputs, $O_\pi$ and $O_{\pi [\vec{x}^* \to \vec{y}^*]}$ are identical.

The second term $(ii)$ is
\begin{align*}
    O_{\pi[\vec{x}^* \rightarrow \vec{y}^*]} \sum_{j=1}^k \ketbra {x_j^{\sf hit}}{x_j^{\sf hit}} U_1 \ket 0 = \sum_{j=1}^k O_{\pi[x_j^* \rightarrow y_j^*]}  \ketbra {x_j^{\sf hit}}{x_j^{\sf hit}} U_1 \ket 0.
\end{align*}

Similarly, the third term $(iii)$ is
\begin{align*}
    O_{\pi[\vec{x}^* \rightarrow \vec{y}^*]} \sum_{j=1}^k \ketbra {x_j^{\sf miss}}{x_j^{\sf miss}} U_1 \ket 0 = \sum_{j=1}^k O_{\pi[x_j^* \rightarrow y_j^*]}  \ketbra {x_j^{\sf miss}}{x_j^{\sf miss}} U_1 \ket 0.
\end{align*}

Combining everything together, we have,
\begin{align*}
     O_{\pi[\vec{x}^* \rightarrow \vec{y}^*]} U_1 \ket 0 & = O_{\pi} U_1 \ket 0 - \sum_{j=1}^k O_\pi \ketbra {x_j^{\sf hit}}{x_j^{\sf hit}} U_1 \ket 0 - \sum_{j=1}^k  O_{\pi} \ketbra {x_j^{\sf miss}}{x_j^{\sf miss}}U_1 \ket 0 \\ 
     + & \sum_{j=1}^k O_{\pi[x_j^* \rightarrow y_j^*]}  \ketbra {x_j^{\sf hit}}{x_j^{\sf hit}} U_1 \ket 0 + \sum_{j=1}^k O_{\pi[x_j^* \rightarrow y_j^*]}  \ketbra {x_j^{\sf miss}}{x_j^{\sf miss}} U_1 \ket 0.
\end{align*}
Each term corresponds to either (1) do not measure the current query, or (2) measure the current query (which is a hit or miss query) and reprogram before or after the query. 
More specifically, 
we have 
\begin{align*}
    O_\pi U_1\ket{0}=\ket{\phi^{(1)}_{\bot^k,\bot^k,\bot^k}}
\end{align*}
and for any $j\in [k]$, 
\begin{align*}
    &O_\pi \ketbra {x_j^{\sf hit}}{x_j^{\sf hit}}U_1\ket{0}=\ket{\phi^{(1)}_{\bot^k_{j\rightarrow 1},\bot^k_{j\rightarrow 0},\bot^k_{j\rightarrow 1}}},~~~ \\
     &O_\pi \ketbra {x_j^{\sf miss}}{x_j^{\sf miss}}U_1\ket{0}=\ket{\phi^{(1)}_{\bot^k_{j\rightarrow 1},\bot^k_{j\rightarrow 1},\bot^k_{j\rightarrow 1}}},\\
     &O_{\pi[x^*_j\rightarrow y^*_j]} \ketbra {x_j^{\sf hit}}{x_j^{\sf hit}}U_1\ket{0}=\ket{\phi^{(1)}_{\bot^k_{j\rightarrow 1},\bot^k_{j\rightarrow 0},\bot^k_{j\rightarrow 0}}},~~~ \\
    &O_{\pi[x^*_j\rightarrow y^*_j]}  \ketbra {x_j^{\sf miss}}{x_j^{\sf miss}}U_1\ket{0}=\ket{\phi^{(1)}_{\bot^k_{j\rightarrow 1},\bot^k_{j\rightarrow 1},\bot^k_{j\rightarrow 0}}},
\end{align*}
where 
for $d\in \{0,1\}$, 
$\bot^k_{j\rightarrow d}$ is the sequence whose $j$-th entry is $d$ and all other entries are $\bot$. 

Thus, $\left| {\psi^{{\pi[\vec{x}^* \rightarrow \vec{y}^*]}}_{1}} \right \rangle$ can be decomposed as in  \Cref{eq:state_decopose_middle} (by trivially observing that $\beta^{(1)}_{\bot^k,\bot^k,\bot^k} = \beta^{(1)}_{{\bot^k_{j\rightarrow 1},\bot^k_{j\rightarrow 0},\bot^k_{j\rightarrow 0}}} = \beta^{(1)}_{{\bot^k_{j\rightarrow 1},\bot^k_{j\rightarrow 1},\bot^k_{j\rightarrow 0}}}  = 0, \beta^{(1)}_{{\bot^k_{j\rightarrow 1},\bot^k_{j\rightarrow 0},\bot^k_{j\rightarrow 1}}} = \beta^{(1)}_{{\bot^k_{j\rightarrow 1},\bot^k_{j\rightarrow 1},\bot^k_{j\rightarrow 1}}} = 1$).
Next, we will show the inductive step.

\paragraph*{The general case.} 
Assume that $\left| {\psi^{{\pi[\vec{x}^* \rightarrow \vec{y}^*]}}_{t}} \right \rangle$ is decomposed as in \Cref{eq:state_decopose_middle}.  Below, we show that 
$\left| {\psi^{{\pi[\vec{x}^* \rightarrow \vec{y}^*]}}_{t+1}} \right \rangle$ can also be decomposed as in \Cref{eq:state_decopose_middle}. 
We only give the proof for even $t$, in which case the $(t+1)$-th query is the forward query, but the proof is similar for the case of odd $t$.\footnote{We only have to replace $O_{\pi[\vec{x}^*\rightarrow \vec{y}^*]}$ with $O_{\pi[\vec{x}^*\rightarrow \vec{y}^*]}^{-1}$
and 
$(x_j^{\sf hit},x_j^{\sf miss})$ with 
$(y_j^{\sf hit},y_j^{\sf miss})$, and then we can see that this yields the desired decomposition by a similar argument.} 

For a fixed subnormalized state $\ket{\phi^{(t)}_{\vec{v}, \vec{b}, \vec{c}}}$ that appears in the decomposition of $\left| {\psi^{{\pi[\vec{x}^* \rightarrow \vec{y}^*]}}_{t}} \right \rangle$  in \Cref{eq:state_decopose_middle}, 
let $J = \{j_1, j_2, \ldots, j_\ell\}\subseteq [k]$ be the set of all indices on which $\vec{v}$ takes a non-$\bot$ entry, i.e., $v_{j_i}\ne \bot$ for $i \in [\ell]$.  

After applying the next 
internal unitary and the 
(forward) query operator, the overall state is:\footnote{Whenever we write $j\notin J$, it means $j\in [k]\setminus J$.}
\begin{align*}
    O_{\pi[\vec{x}^* \rightarrow \vec{y}^*]} U_{t+1} \ket{\phi^{(t)}_{\vec{v}, \vec{b}, \vec{c}}} & = O_{\pi[\vec{x}^* \rightarrow \vec{y}^*]} \, {I} \, U_{t+1} \, \ket{\phi^{(t)}_{\vec{v}, \vec{b}, \vec{c}}}\\
     &= O_{\pi[\vec{x}^* \rightarrow \vec{y}^*]} \left(I - \sum_{j \not\in J} \ketbra {x_j^{\sf hit}}{x_j^{\sf hit}} +  \sum_{j \not\in J} \ketbra {x_j^{\sf hit}}{x_j^{\sf hit}}
     -  \sum_{j \not\in J} \ketbra {x_j^{\sf miss}}{x_j^{\sf miss}} \right. \\
     &\left. +  \sum_{j \not\in J} \ketbra {x_j^{\sf miss}}{x_j^{\sf miss}}
     \right) U_{t+1} \ket{\phi^{(t)}_{\vec{v}, \vec{b}, \vec{c}}} \\
     &= \underbrace{O_{\pi[\vec{x}^* \rightarrow \vec{y}^*]} \left(I - \sum_{j \not\in J} \ketbra {x_j^{\sf hit}}{x_j^{\sf hit}} -  \sum_{j \not\in J} \ketbra {x_j^{\sf miss}}{x_j^{\sf miss}} \right) U_{t+1} \ket{\phi^{(t)}_{\vec{v}, \vec{b}, \vec{c}}}}_{(i)} + \\
     &+ \underbrace{O_{\pi[\vec{x}^* \rightarrow \vec{y}^*]} \sum_{j \not\in J} \ketbra {x_j^{\sf hit}}{x_j^{\sf hit}} U_{t+1} \ket{\phi^{(t)}_{\vec{v}, \vec{b}, \vec{c}}}}_{(ii)} \\
     &+ \underbrace{O_{\pi[\vec{x}^* \rightarrow \vec{y}^*]} \sum_{j \not\in J} \ketbra {x_j^{\sf miss}}{x_j^{\sf miss}} U_{t+1} \ket{\phi^{(t)}_{\vec{v}, \vec{b}, \vec{c}}}}_{(iii)}.
\end{align*}

The term $(i)$ is equal to 
\begin{align*}
&O_{\pi[\vec{x}^*_J \rightarrow \vec{y}^*_J]} U_{t+1} \ket{\phi^{(t)}_{\vec{v}, \vec{b}, \vec{c}}}  - \sum_{j \not\in J} O_{\pi[ \vec{x}^*_J \rightarrow \vec{y}^*_J]}  \ketbra {x_j^{\sf hit}}{x_j^{\sf hit}} U_{t+1} \ket{\phi^{(t)}_{\vec{v}, \vec{b}, \vec{c}}} \\
&-  \sum_{j \not\in J} O_{\pi[\vec{x}^*_J \rightarrow \vec{y}^*_J]}  \ketbra {x_j^{\sf miss}}{x_j^{\sf miss}}  U_{t+1} \ket{\phi^{(t)}_{\vec{v}, \vec{b}, \vec{c}}}
\end{align*}
where $\pi[\vec{x}^*_J \rightarrow \vec{y}^*_J]=\pi[x^*_{j_1}\rightarrow y^*_{j_1}]...[x^*_{j_\ell}\rightarrow y^*_{j_\ell}]$.\footnote{This is well-defined since the sequential reprogramming does not depend on the order by \Cref{lem:commutativity_disjoint_pairs}.}
This is because $x_j^{\sf hit}\neq x_j^{\sf miss}$ for all $j\in [k]$ by $(\pi,\pi^*)\in G[\vec{x}^*]$, and for any $x\notin \bigcup_{j\notin J}\{x_j^{\sf hit},x_j^{\sf miss}\}$, we have 
$
\pi[\vec{x}^*\rightarrow \vec{y}^*](x)
=
\pi[\vec{x}^*_J\rightarrow \vec{y}^*_J](x)
$
by \Cref{item:equal_non-touch,item:equal_touch} of  \Cref{cla:claim}. 

The second term $(ii)$ is
\begin{align*}
    O_{\pi[\vec{x}^* \rightarrow \vec{y}^*]} \sum_{j \not\in J} \ketbra {x_j^{\sf hit}}{x_j^{\sf hit}} U_{t+1} \ket{\phi^{(t)}_{\vec{v}, \vec{b}, \vec{c}}} = \sum_{j\not\in J} O_{\pi[\vec{x}^*_{J \cup \{j\}} \rightarrow \vec{y}_{J \cup \{j\}}^*]}  \ketbra {x_j^{\sf hit}}{x_j^{\sf hit}} U_{t+1} \ket{\phi^{(t)}_{\vec{v}, \vec{b}, \vec{c}}},
\end{align*}
and the third term $(iii)$ is
\begin{align*}
    O_{\pi[\vec{x}^* \rightarrow \vec{y}^*]} \sum_{j \not\in J} \ketbra {x_j^{\sf miss}}{x_j^{\sf miss}} U_{t+1} \ket{\phi^{(t)}_{\vec{v}, \vec{b}, \vec{c}}} = \sum_{j\not\in J} O_{\pi[\vec{x}^*_{J \cup \{j\}} \rightarrow \vec{y}_{J \cup \{j\}}^*]}  \ketbra {x_j^{\sf miss}}{x_j^{\sf miss}} U_{t+1} \ket{\phi^{(t)}_{\vec{v}, \vec{b}, \vec{c}}}
\end{align*}
where $\pi[\vec{x}^*_{J\cup \{j\}} \rightarrow \vec{y}^*_{J\cup \{j\}}]=\pi[x^*_{J}\rightarrow y^*_{J}][x^*_{j}\rightarrow y^*_{j}]$. 
This is because 
$
\pi[\vec{x}^*\rightarrow \vec{y}^*](x_j^{\sf hit})
=
\pi[\vec{x}^*_{J\cup \{j\}}\rightarrow \vec{y}^*_{J\cup \{j\}}](x_j^{\sf hit})
$
and 
$
\pi[\vec{x}^*\rightarrow \vec{y}^*](x_j^{\sf miss})
=
\pi[\vec{x}^*_{J\cup \{j\}}\rightarrow \vec{y}^*_{J\cup \{j\}}](x_j^{\sf miss})
$
by 
\Cref{item:equal_touch} of 
\Cref{cla:claim}. 

Combining all the cases, $O_{\pi[\vec{x}^* \rightarrow \vec{y}^*]} U_{t+1} \ket{\phi^{(t)}_{\vec{v}, \vec{b}, \vec{c}}}$ can be decomposed as: 
\begin{align}\label{eq:state_decomp}\begin{split}
     &O_{\pi[\vec{x}^*_J \rightarrow \vec{y}^*_J]} U_{t+1} \ket{\phi^{(t)}_{\vec{v}, \vec{b}, \vec{c}}} \\  - &\sum_{j \not\in J} O_{\pi[\vec{x}^*_J \rightarrow \vec{y}^*_J]}  \ketbra {x_j^{\sf hit}}{x_j^{\sf hit}} U_{t+1} \ket{\phi^{(t)}_{\vec{v}, \vec{b}, \vec{c}}}  -  \sum_{j \not\in J} O_{\pi[\vec{x}^*_J \rightarrow \vec{y}^*_J]}  \ketbra {x_j^{\sf miss}}{x_j^{\sf miss}}  U_{t+1} \ket{\phi^{(t)}_{\vec{v}, \vec{b}, \vec{c}}}  \\
    + & \sum_{j\not\in J} O_{\pi[\vec{x}^*_{J \cup \{j\}} \rightarrow \vec{y}_{J \cup \{j\}}^*]}  \ketbra {x_j^{\sf hit}}{x_j^{\sf hit}} U_{t+1} \ket{\phi^{(t)}_{\vec{v}, \vec{b}, \vec{c}}} \\
    + & \sum_{j\not\in J} O_{\pi[\vec{x}^*_{J \cup \{j\}} \rightarrow \vec{y}_{J \cup \{j\}}^*]}  \ketbra {x_j^{\sf miss}}{x_j^{\sf miss}} U_{t+1} \ket{\phi^{(t)}_{\vec{v}, \vec{b}, \vec{c}}}.
    \end{split}
\end{align}
We observe that 
each term of \Cref{eq:state_decomp}
can be written as $\ket{\phi^{(t+1)}_{\vec{v}', \vec{b}', \vec{c}'}}$ for some $(\vec{v}', \vec{b}', \vec{c}')$. 
\begin{itemize}
    \item The first term $O_{\pi[\vec{x}^*_J \rightarrow \vec{y}^*_J]} U_{t+1} \ket{\phi^{(t)}_{\vec{v}, \vec{b}, \vec{c}}}$ corresponds to the case that there is no measure-and-reprogram happened for the $(t+1)$-th query, which intuitively means that after the $(t +1)$-th query, the sequences $\vec{v}, \vec{b}, \vec{c}$ are going to remain unchanged.
    That is, we have
    \[
    O_{\pi[\vec{x}^*_J \rightarrow \vec{y}^*_J]} U_{t+1} \ket{\phi^{(t)}_{\vec{v}, \vec{b}, \vec{c}}}=\ket{\phi^{(t+1)}_{\vec{v}, \vec{b}, \vec{c}}}.
    \]
    \item The second and the third terms 
    correspond to the case that a measure-and-reprogram occurs for the $(t+1)$-th query, but the reprogramming is done after the measurement. Specifically, for each $j\notin J$ and $d\in \{0,1\}$, 
let $\vec{v}_{j\rightarrow t+1}$ be the sequence obtained by replacing the $j$-th entry of $\vec{v}$ (which must be $\bot$) with $t+1$, 
let $\vec{b}_{j\rightarrow d}$ be the sequence obtained by replacing the $j$-th entry of $\vec{b}$ (which must be $\bot$) with $d$.
and let $\vec{c}_{j\rightarrow d}$ be the sequence obtained by replacing the $j$-th entry of $\vec{c}$ (which must be $\bot$) with $d$.
Then for any $j\notin J$, 
    we have 
    \begin{align*}
 &O_{\pi[\vec{x}^*_J \rightarrow \vec{y}^*_J]}  \ketbra {x_j^{\sf hit}}{x_j^{\sf hit}} U_{t+1}\ket{\phi^{(t)}_{\vec{v}, \vec{b}, \vec{c}}}=\ket{\phi^{(t+1)}_{\vec{v}_{j\rightarrow t+1}, \vec{b}_{j\rightarrow 0}, \vec{c}_{j\rightarrow 1}}},\\
 & O_{\pi[\vec{x}^*_J \rightarrow \vec{y}^*_J]}  \ketbra {x_j^{\sf miss}}{x_j^{\sf miss}} U_{t+1}\ket{\phi^{(t)}_{\vec{v}, \vec{b}, \vec{c}}}=\ket{\phi^{(t+1)}_{\vec{v}_{j\rightarrow t+1}, \vec{b}_{j\rightarrow 1}, \vec{c}_{j\rightarrow 1}}}.
    \end{align*}
    This is because under the inserted projections corresponding to $\ket{\phi^{(t)}_{\vec{v}, \vec{b}, \vec{c}}}$, the oracle kept by $S[\cA,\pi,\pi^*]$ after the $t$-th query is $\pi[\vec{x}^*_J \rightarrow \vec{y}^*_J]$ by \Cref{lem:quantum_correctness_of_reprogramming}.  
    \item The fourth and the last terms correspond to the case that a measure-and-reprogram occurs for the $(t+1)$-th query, and the reprogramming is done before the measurement.
    Similarly to the above item, 
    for any $j\notin J$, 
    we have 
    \begin{align*}
 &O_{\pi[\vec{x}^*_{J\cup\{j\}} \rightarrow \vec{y}^*_{J\cup\{j\}}]}  \ketbra {x_j^{\sf hit}}{x_j^{\sf hit}} U_{t+1}\ket{\phi^{(t)}_{\vec{v}, \vec{b}, \vec{c}}}=\ket{\phi^{(t+1)}_{\vec{v}_{j\rightarrow t+1}, \vec{b}_{j\rightarrow 0}, \vec{c}_{j\rightarrow 0}}},\\
 & O_{\pi[\vec{x}^*_{J\cup\{j\}} \rightarrow \vec{y}^*_{J\cup\{j\}}]}  \ketbra {x_j^{\sf miss}}{x_j^{\sf miss}} U_{t+1}\ket{\phi^{(t)}_{\vec{v}, \vec{b}, \vec{c}}}=\ket{\phi^{(t+1)}_{\vec{v}_{j\rightarrow t+1}, \vec{b}_{j\rightarrow 1}, \vec{c}_{j\rightarrow 0}}}.
    \end{align*}
This is because if $b_j=0$ and the measured $(t+1)$-th query is $x_j^{\sf hit}$
or 
$b_j=1$ and the measured $(t+1)$-th query is $x_j^{\sf miss}$, then $S[\cA,\pi,\pi^*]$ reprograms the oracle as $x_j^*\rightarrow y_j^*$
by \Cref{lem:quantum_correctness_of_reprogramming}.
\end{itemize}

By the above argument, we can see that $\left| {\psi^{{\pi[\vec{x}^* \rightarrow \vec{y}^*]}}_{t+1}} \right \rangle$ can be written as a sum of states of the form $\pm \ket{\phi^{(t+1)}_{\vec{v}', \vec{b}', \vec{c}'}}$ for $(\vec{v}', \vec{b}', \vec{c}')$ that satisfy the required conditions ($\vec{v}'\in ([t + 1]\times \{\bot\})^k$, $\vec{b}'\in \{0,1,\bot\}^k$, $\vec{c}'\in \{0,1,\bot\}^k$, and they satisfy the conditions in \Cref{item:vbc_condition} of \Cref{def:quantum_simulator}).\footnote{We can easily see that $(\vec{v}', \vec{b}', \vec{c}')$ 
 satisfies the required condition for the $(t+1)$-th query by the induction hypothesis that 
 $(\vec{v}, \vec{b}, \vec{c})$
 satisfies the required condition for the $t$-th query. 
 } 
Moreover, it is easy to see that each term may appear only at most once in the sum.   
Thus, the induction hypothesis holds for the state after the $(t+1)$-th query, i.e., $\left| {\psi^{{\pi[\vec{x}^* \rightarrow \vec{y}^*]}}_{t+1}} \right \rangle$ can be decomposed as in \Cref{eq:state_decopose_middle}. 
This completes the proof of \Cref{eq:state_decomposition}. 

\if0
Assume we have already done the state decomposition on the first $t$ queries. The induction hypothesis (\Cref{eq:state_decopose_middle}) says that the state can be written as a summation of subnormalized states, each one corresponds to a possible sequence of reprogramming: $[x_{j_1}^* \to y_{j_1}^*], \ldots, [x_{j_\ell}^* \to y_{j_\ell}^*]$ where each $[x_{j_1}^* \to y_{j_1}^*]$ is  reprogrammed on the $i_{j_1}$-th query and the reprogramming happens after or before that query (similarly for other $j_2, \ldots, j_\ell$).

For a subnormalized state $\ket \psi$ in the decomposition, let $J = \{j_1, j_2, \ldots, j_\ell\}$ be the set of all indices that have been reprogrammed so far in the subnormalized state. 
Taken into account the next (forward/backward) oracle query, the overall state is
\begin{align*}
O_{\pi[\vec{x}^* \rightarrow \vec{y}^*]} U_{t+1} \ket \psi.
\end{align*}
Again, we assume the $t+1$-th query is a forward query and we insert an additional identity operator and have, 
\begin{align*}
    O_{\pi[\vec{x}^* \rightarrow \vec{y}^*]} U_{t+1} \ket \psi & = O_{\pi[\vec{x}^* \rightarrow \vec{y}^*]} \, {I} \, U_{t+1} \, \ket \psi \\
     &= O_{\pi[\vec{x}^* \rightarrow \vec{y}^*]} \left(I - \sum_{j \not\in J} \ketbra {x_j^{\sf hit}}{x_j^{\sf hit}} +  \sum_{j \not\in J} \ketbra {x_j^{\sf hit}}{x_j^{\sf hit}}
     -  \sum_{j \not\in J} \ketbra {x_j^{\sf miss}}{x_j^{\sf miss}} \right. \\
     &\left. +  \sum_{j \not\in J} \ketbra {x_j^{\sf miss}}{x_j^{\sf miss}}
     \right) U_{t+1} \ket \psi \\
     &= \underbrace{O_{\pi[\vec{x}^* \rightarrow \vec{y}^*]} \left(I - \sum_{j \not\in J} \ketbra {x_j^{\sf hit}}{x_j^{\sf hit}} -  \sum_{j \not\in J} \ketbra {x_j^{\sf miss}}{x_j^{\sf miss}} \right) U_{t+1} \ket \psi}_{(i)} + \\
     &+ \underbrace{O_{\pi[\vec{x}^* \rightarrow \vec{y}^*]} \sum_{j \not\in J} \ketbra {x_j^{\sf hit}}{x_j^{\sf hit}} U_{t+1} \ket \psi}_{(ii)} \\
     &+ \underbrace{O_{\pi[\vec{x}^* \rightarrow \vec{y}^*]} \sum_{j \not\in J} \ketbra {x_j^{\sf miss}}{x_j^{\sf miss}} U_{t+1} \ket \psi}_{(iii)}.
\end{align*}

The term $(i)$ is equal to 
\begin{align*}
O_{\pi[\vec{x}^*_J \rightarrow \vec{y}^*_J]} U_{t+1} \ket \psi  - \sum_{j \not\in J} O_{\pi[ \vec{x}^*_J \rightarrow \vec{y}^*_J]}  \ketbra {x_j^{\sf hit}}{x_j^{\sf hit}} U_{t+1} \ket \psi -  \sum_{j \not\in J} O_{\pi[\vec{x}^*_J \rightarrow \vec{y}^*_J]}  \ketbra {x_j^{\sf miss}}{x_j^{\sf miss}}  U_{t+1} \ket \psi.
\end{align*}
This is because the input will either be a hit/miss input that is already reprogrammed (with the index in $J$), or it is neither a hit nor a miss input.

The second term $(ii)$ is
\begin{align*}
    O_{\pi[\vec{x}^* \rightarrow \vec{y}^*]} \sum_{j \not\in J} \ketbra {x_j^{\sf hit}}{x_j^{\sf hit}} U_{t+1} \ket \psi = \sum_{j\not\in J} O_{\pi[\vec{x}^*_{J \cup \{j\}} \rightarrow \vec{y}_{J \cup \{j\}}^*]}  \ketbra {x_j^{\sf hit}}{x_j^{\sf hit}} U_{t+1} \ket \psi.
\end{align*}

Similarly, the third term $(iii)$ is
\begin{align*}
    O_{\pi[\vec{x}^* \rightarrow \vec{y}^*]} \sum_{j \not\in J} \ketbra {x_j^{\sf miss}}{x_j^{\sf miss}} U_{t+1} \ket \psi = \sum_{j\not\in J} O_{\pi[\vec{x}^*_{J \cup \{j\}} \rightarrow \vec{y}_{J \cup \{j\}}^*]}  \ketbra {x_j^{\sf miss}}{x_j^{\sf miss}} U_{t+1} \ket \psi.
\end{align*}
In other words, in both cases above, one more input-output pair will be reprogrammed. Both are due to the fact that $\pi, \pi^*$ are good w.r.t. $\vec{x}^*$ and  \Cref{cla:claim}.

Combining all the cases, we have: 
\begin{align}\label{eq:state_decomp}
    O_{\pi[\vec{x}^* \rightarrow \vec{y}^*]} U_{t+1} \ket \psi & = O_{\pi[\vec{x}^*_J \rightarrow \vec{y}^*_J]} U_{t+1} \ket \psi  - \sum_{j \not\in J} O_{\pi[\vec{x}^*_J \rightarrow \vec{y}^*_J]}  \ketbra {x_j^{\sf hit}}{x_j^{\sf hit}} U_{t+1} \ket \psi -  \sum_{j \not\in J} O_{\pi[\vec{x}^*_J \rightarrow \vec{y}^*_J]}  \ketbra {x_j^{\sf miss}}{x_j^{\sf miss}}  U_{t+1} \ket \psi \nonumber \\ 
    + & \sum_{j\not\in J} O_{\pi[\vec{x}^*_{J \cup \{j\}} \rightarrow \vec{y}_{J \cup \{j\}}^*]}  \ketbra {x_j^{\sf hit}}{x_j^{\sf hit}} U_{t+1} \ket \psi + \sum_{j\not\in J} O_{\pi[\vec{x}^*_{J \cup \{j\}} \rightarrow \vec{y}_{J \cup \{j\}}^*]}  \ketbra {x_j^{\sf miss}}{x_j^{\sf miss}} U_{t+1} \ket \psi;
\end{align}
where 
\begin{itemize}
    \item The first term $O_{\pi[\vec{x}^*_J \rightarrow \vec{y}^*_J]} U_{t+1} \ket \psi$ corresponds to the case that there is no measure-and-reprogram happened for the $(t+1)$-th query.
    \item The second and the third terms $\sum_{j \not\in J} O_{\pi[\vec{x}^*_J \rightarrow \vec{y}^*_J]}  \ketbra {x_j^{\sf hit}}{x_j^{\sf hit}} U_{t+1} \ket \psi$, \\
    $\sum_{j \not\in J} O_{\pi[\vec{x}^*_J \rightarrow \vec{y}^*_J]}  \ketbra {x_j^{\sf miss}}{x_j^{\sf miss}}  U_{t+1} \ket \psi$ correspond to the case that a measure-and-reprogram occurs for the $(t+1)$-th query, but the reprogramming is done after the measurement.
    \item The fourth and the last terms correspond to the case that a measure-and-reprogram occurs for the $(t+1)$-th query, but the reprogramming is done before the measurement.
\end{itemize}

Applying the above state decomposition for all subnormalized states after the $(t+1)$-th query, the induction hypothesis holds for the $(t+1)$-th query.
\fi

 \paragraph*{Bounding the loss.}
For $(x_1,...,x_k,z)\in X^k\times Z$, 
let $\Pi_{x_1,...,x_k,z}$ be the projector that projects the output of $\cA$ onto $\ket{x_1,...,x_k,z}$. 
By \Cref{eq:state_decomposition} and Triangle and
Cauchy-Schwarz inequalities, 
\begin{align*}
    \left\|\Pi_{x_1,...,x_k,z}\left| {\phi^{{\pi[\vec{x}^* \rightarrow \vec{y}^*]}}_{2q}} \right \rangle\right\|^2
    \le   
   (8q+1)^{k}\sum_{\vec{v},\vec{b},\vec{c}} \left\|\Pi_{x_1,...,x_k,z}\ket{\phi_{\vec{v}, \vec{b}, \vec{c}}}\right\|^2
\end{align*} 
where we used the fact that there are at most $(8q+1)^k$ possible choices for $(\vec{v},\vec{b},\vec{c})$.\footnote{For each $j\in [k]$, we have 
$(v_j,b_j,c_j)\in ([2q]\times \{0,1\}\times \{0,1\})\cup \{(\bot,\bot,\bot)\}$, and thus
there are at most $(8q+1)$ possibilities.} 
Since $\ket{\phi_{\vec{v}, \vec{b}, \vec{c}}}$ corresponds to the final state of $S[\cA,\pi,\pi^*]$ 
for the fixed choices of $(\vec{v},\vec{b},\vec{c})$ with some inserted projections, 
the above implies that 
\[
\Pr[(x_1,...,x_k,z)\leftarrow \cA^{\pi^*}]
\le (8q+1)^{2k}\Pr[(x_1,...,x_k,z)\leftarrow S[\cA,\pi,\pi^*]]. 
\]
Since this holds for all $(x_1,...,x_k,z)$, 
\Cref{lemma:quantum_measure_reprogram} follows. 
\if0
Consider the state decomposition of the algorithm $\cA$ in \Cref{eq:state_decomp}, which shows that after all 
$2q$
queries, the state of $\cA$ can be decomposed into a summation of subnormalized states; each corresponds to one unique sequence of measure-and-reprogram on $\vec{x}^*, \vec{y}^*$; i.e., with respect to a different set of parameters $\vec{v}, \vec{b}, \vec{c}$ in the execution of the simulator $S[\cA,\pi,\pi^*]$. 
Since each $x^*_j, y^*_j$ can be either not reprogrammed at all, or being reprogrammed in one of these $2q$ forward/backward queries (either before or after the query and as a hit or miss query), there are at most $(8q+1)^{k}$ such possibilities and $(8q+1)^{k}$ states.

The conclusion of \Cref{lemma:quantum_measure_reprogram} follows directly due to the Triangle and Cauchy-Schwarz inequalities. \fi
\end{proof}
We then use \Cref{lemma:quantum_measure_reprogram} to prove \Cref{thm:quantum_lifting}. 
\begin{proof}[Proof of \Cref{thm:quantum_lifting}]
 We define $\cB^{\pi^*}$ as an algorithm that runs $S[\cA,\pi,\pi^*]$ for a uniformly random $\pi$. 
Then we have: 
\begin{align*}
&\Pr_{\pi^*} \left[
(x_1,...,x_k,\pi^*(x_1),...,\pi^*(x_k),z)\in R
:(x_1,...,x_k,z)\leftarrow \cB^{\pi^*}\right]\\
&=
\sum_{(x^*_1,...,x^*_k)}\Pr_{\pi,\pi^*}\left[
\begin{array}{ll}
(x^*_1,...,x^*_k,y^*_1,...,y^*_k,z)\in R \\
~\land ~
\forall j\in[k]~x_j=x^*_j 
\end{array}
:(x_1,...,x_k,z)\leftarrow S[\cA, \pi, \pi^*]\right]\\
&\ge
\sum_{(x^*_1,...,x^*_k)}
\Pr_{\pi,\pi^*}[(\pi,\pi^*)\in G[\vec{x}^*]]\\
&\cdot 
\Pr_{(\pi,\pi^*)\leftarrow G[\vec{x}^*]}\left[
\begin{array}{ll}
(x^*_1,...,x^*_k,y^*_1,...,y^*_k,z)\in R \\
~\land ~
\forall j\in[k]~x_j=x^*_j\} 
\end{array}
:(x_1,...,x_k,z)\leftarrow S[\cA, \pi, \pi^*]\right]\\
&\ge
\sum_{(x^*_1,...,x^*_k)}
\left(1-\frac{k^2}{|X|}\right)\\
&\cdot 
\Pr_{(\pi,\pi^*)\leftarrow G[\vec{x}^*]}\left[
\begin{array}{ll}
(x^*_1,...,x^*_k,y^*_1,...,y^*_k,z)\in R \\
~\land ~
\forall j\in[k]~x_j=x^*_j\} 
\end{array}
:(x_1,...,x_k,z)\leftarrow S[\cA, \pi, \pi^*]\right]\\
&\ge 
\sum_{(x^*_1,...,x^*_k)}
\left(1-\frac{k^2}{|X|}\right)
\frac{1}{(8q+1)^{2k}}\\
&\cdot \Pr_{(\pi,\pi^*)\leftarrow G[\vec{x}^*]}\left[
\begin{array}{ll}
(x^*_1,...,x^*_k,y^*_1,...,y^*_k,z)\in R\\
~\land ~
\forall j\in[k]~x_j=x^*_j
\end{array}
:(x_1,...,x_k,z)\leftarrow \cA^{\pi[\vec{x}^*\rightarrow \vec{y}^*]}\right]\\
&= \sum_{(x^*_1,...,x^*_k)}\left(1-\frac{k^2}{|X|}\right)
\frac{1}{(8q+1)^{2k}}\Pr_{\pi}\left[
\begin{array}{ll}
(x^*_1,...,x^*_k,\pi(x^*_1),...,\pi(x^*_k),z)\in R\\
~\land ~
\forall j\in[k]~x_j=x^*_j
\end{array}
:(x_1,...,x_k,z)\leftarrow \cA^{\pi}\right]\\
&= \left(1-\frac{k^2}{|X|}\right)
\frac{1}{(8q+1)^{2k}}\Pr_{\pi}\left[
(x_1,...,x_k,\pi(x_1),...,\pi(x_k),z)\in R
:(x_1,...,x_k,z)\leftarrow \cA^{\pi}\right]
\end{align*}
where 
the second inequality follows from \Cref{lem:bad_prob}, 
the third inequality follows from \Cref{lemma:quantum_measure_reprogram}, 
and  the second-to-last equality follows from \Cref{lem:uniform}. 
This completes the proof of \Cref{thm:quantum_lifting}.
\end{proof}

 \section{Quantum Lifting Theorem for Ideal Ciphers} \label{app:ideal_cipher}

In this section, we generalize our lifting theorem to ideal ciphers.  We may show the following theorem given in the Introduction.

\begin{customtheorem}{\ref{thm:quantum_icm_lifting_informal}}[Quantum Lifting Theorem on Ideal Ciphers]
Let $\mathcal{G}$ be an (interactive) search game with a classical challenger $\cC$ that performs at most $k$ queries to an ideal cipher oracle $E:\Ks\times X\to X$, and let $\cA$ be an algorithm that performs $q$ quantum queries to the ideal cipher.
Then there exists an adversary $\cB$ making at most $k$ classical queries to $E$
such that: 
\begin{align*}
\Pr[\cB \textrm{ wins } \mathcal{G}] \geq 
\frac{\left(1 - \frac{k^2}{|X|}\right)}{(8q+1)^{2k}}
\Pr[\cA \text{ wins } \mathcal{G}].
\end{align*} 
\end{customtheorem}

Since the proof and the proof methodology are very similar to those of random permutations, instead of providing a formal proof, we will give a brief sketch how we can adapt the proofs for random permutations to ideal ciphers.

Just like the random permutation case, we rely on reprogramming of ciphers to prove Theorem~\ref{thm:quantum_icm_lifting_informal}.  First, let us define some terminologies.

Given two finite sets $\Ks, X$, a \emph{cipher} over $\Ks$ and $X$ is a function $E:\Ks\times X\to X$ such that for each $K\in\Ks$, the function $E_K:X\to X$ defined by $E_K(x):=E(K, x)$ is invertible.  We then define the inverse cipher $E^{-1}$ as the cipher defined by $E^{-1}(K, x):=E_K^{-1}(x)$.

An \emph{ideal cipher} over $\Ks$ and $X$ is a cipher chosen uniform randomly over the set of all ciphers over $\Ks$ and $X$.  When an algorithm has oracle access to a cipher $E$, it may make both forward ($E$) and backward ($E^{-1}$) queries.  Also, a quantum algorithm may make quantum superposition queries to a cipher oracle $E$ via the following unitary $U_E$:
\[
U_E\ket{b}\ket{K}\ket{x}\ket{y}:=
\begin{cases}
    \ket{b} \otimes O_E(\ket{K}\ket{x}\ket{y}) & \text{if $b=0$,}\\
    \ket{b} \otimes O_{E^{-1}}(\ket{K}\ket{x}\ket{y}) & \text{if $b=1$,}
\end{cases}
\]
with $O_E\ket{K}\ket{x}\ket{y}:=\ket{K}\ket{x}\ket{y+E_K(x)}$ and $O_{E^{-1}}\ket{K}\ket{x}\ket{y}:=\ket{K}\ket{x}\ket{y+E_K^{-1}(x)}$.

Just like Lemma~\ref{lem:quantum_algo}, if $\cA$ is a $q$-query algorithm having quantum access to a cipher oracle, then we may assume that it is in the normal form and makes $2q$ queries to $O_E$ and $O_{E^{-1}}$, alternatingly.

Now, we define how we can reprogram a cipher, using reprogramming of a permutation, as follows.

\begin{definition}
    \label{def:cipher_reprog}
    Let $E:\Ks\times X\to X$ be a cipher, and $(K, x, y)\in \Ks\times X\times X$ be an arbitrary tuple.  Then we denote the reprogramming of $E$ by $(K, x, y)$ as:
    \begin{equation}
        E[x \tok{K} y](K', x') = 
        \begin{cases}
                     E_K[x\to y](x')   & \text{if $K'=K$,}\\
                     E(K', x')       & \text{otherwise.}
        \end{cases}
    \end{equation}
\end{definition}

Essentially, we may understand a cipher as a function which maps a key $K$ to a permutation $E_K$, and an ideal cipher as a function which maps a key $K$ to an independent random permutation $E_K$.  Here, we see that reprogramming occurs with respect to a key $K$: $E[x\tok{K} y]$ leaves all `components' of $E$ unchanged, except that it reprograms the permutation $E_K$ to $E_K[x\to y]$.

Also, given $\vec{K}=(K_1, \dots, K_k)$, $\vec{x}=(x_1, \dots, x_k)$, $\vec{y}=(y_1, \dots, y_k)$, we may denote by $E[\vec{x}\tok{\vec{K}}\vec{y}]$ the reprogrammed cipher
\[
E[x_1\tok{K_1}y_1]\dots [x_k\tok{K_k}y_k].
\]

In order to prove the lifting theorem for ideal ciphers, we need to define when a $k$-tuple input $(t_1^\ast, \dots, t_k^\ast)$ of tuples $t_i^\ast=(K_i^\ast, x_i^\ast, y_i^\ast)$ is \emph{good} with respect to a cipher $E$.  

\begin{definition}[Good tuples]
\label{def:good_ic}
Suppose we have a $k$-tuple $t^\ast=(t_1^\ast, \dots, t_k^\ast)$ of tuples $t_i^\ast=(K_i^\ast, x_i^\ast, y_i^\ast)$.
Then we say that the tuple $t^\ast$ is \emph{good} with respect to a cipher $E$, if the following hold:
\begin{itemize}
    \item For any $1\leq i<j\leq k$, if $K_i^\ast=K_j^\ast$, then we have $x_i^\ast\neq x_j^\ast$ and $y_i^\ast\neq y_j^\ast$.  (In other words, $x_i^\ast$ which share the same key are distinct, and so are $y_i^\ast$.)
    \item For any $i, j\in [k]$, $E_{K_i^\ast}(x_i^\ast)\neq y_j^\ast$.
\end{itemize}
\end{definition}

\begin{definition}[Good pairs of ciphers] \label{def:support_G_ic}
For any vectors $\vec{K}^\ast=(K_1^\ast, \dots, K_k^\ast)\in \Ks^k$ and $\vec{x}^\ast=(x_1^\ast, \dots, x_k^\ast)\in X^k$ satisfying
\[
\forall i<j, K_i^\ast=K_j^\ast \implies x_i^\ast\neq x_j^\ast,
\]
we define $G[\vec{K}^\ast, \vec{x}^\ast]$ as the set consisting of all pairs $(E, E^\ast)$ such that the tuple
$(t_1^\ast, \dots, t_k^\ast)$ with $t_i^\ast=(K_i^\ast, x_i^\ast, E^\ast(K_i^\ast, x_i^\ast))$
is good with respect to $E$.
\end{definition}

Just as in the case of permutations, we may show that if we pick ciphers randomly, then with high probability we get good pairs of ciphers.

\begin{lemma}[Bad probability for ciphers]\label{lem:bad_prob_ic}
Let $E$ be a (fixed) permutation and let $\vec{K}^\ast=(K_1^\ast, \dots, K_k^\ast)\in\Ks^k$ and $\vec{x}^\ast=(x_1^\ast, \dots, x_k^\ast)\in X^k$
be (fixed) vectors satisfying
\[
\forall i<j, K_i^\ast=K_j^\ast \implies x_i^\ast\neq x_j^\ast.
\]
Then we have 
\[
\Pr_{E^\ast}[(E, E^\ast)\notin G[\vec{K}^\ast,\vec{x}^*]]\le \frac{k^2}{|X|}
\]
\end{lemma}

In fact, we may show that the bad probability is maximized when all keys are identical, $K_1^\ast=\dots=K_k^\ast$, and in that case we have the same upper bound as in Lemma~\ref{lem:bad_prob}.

When we prove the lifting theorem for ideal ciphers, we follow essentially the same analysis as in the lifting theorem for permutations.  For this, we need to define the hit and miss inputs, incorporating the cipher keys as follows.

\begin{definition}[Hit and Miss queries for ciphers]
\label{def:hit_miss_queries_ic}
Fix any $\vec{K}^\ast=(K_1^\ast, \dots, K_k^\ast)\in\Ks^k$ and $\vec{x}^\ast=(x_1^\ast, \dots, x_k^\ast)\in X^k$ satisfying
\[
\forall i<j, K_i^\ast=K_j^\ast \implies x_i^\ast\neq x_j^\ast,
\]
ciphers $E, E^\ast\in G[\vec{K}^\ast, \vec{x}^\ast]$, and $\vec{y}^\ast = (y_1^\ast, \dots, y_k^\ast)$ with $y_i^\ast=E_{K_i^\ast}^\ast(x_i^\ast)$. 
For the $j$-th tuple $(K_j^\ast, x_j^\ast, y_j^\ast)$, we define the Hit and Miss input for forward queries as follows:
    \begin{align*}
        (K_j^\textsf{hit}, x_j^\textsf{hit}) &:= (K_j^\ast, x_j^\ast), \\
        (K_j^\textsf{miss}, x_j^\textsf{miss}) &:= (K_j^\ast, E_{K_j^\ast}^{-1}(y_j^\ast)).
    \end{align*}

    Similarly, for the $j$-th tuple $(K_j^\ast, x_j^\ast, y_j^\ast)$, we define the Hit and Miss input for backward queries as follows:
    \begin{align*}
        (K_j^\textsf{hit}, y_j^\textsf{hit}) &:= (K_j^\ast, y_j^\ast), \\
        (K_j^\textsf{miss}, y_j^\textsf{miss}) &:= (K_j^\ast, E_{K_j^\ast}(x_j^\ast) ).
    \end{align*}
\end{definition}

Now, we may prove the following lemma, which is essentially Lemma~\ref{lemma:quantum_measure_reprogram} for ciphers.

\begin{lemma}[Quantum Measure-and-Reprogram Lemma for Ciphers]\label{lemma:quantum_measure_reprogram_ic}
Let $\cA$ be a $q$-query quantum algorithm, and $E, E^\ast$ be ciphers over $\Ks$ and $X$.  And let $\vec{K}^\ast=(K_1^\ast, \dots, K_k^\ast)\in\Ks^k$,  $\vec{x}^\ast = (x_1^\ast, \dots, x_k^\ast)\in X^k$ be vectors satisfying
\[
\forall i<j, K_i^\ast=K_j^\ast \implies x_i^\ast\neq x_j^\ast,
\]
such that $(E, E^\ast)\in G[\vec{K}^\ast, \vec{x}^\ast]$, $\vec{y}^\ast := (y_1^\ast, \dots, y_k^\ast)$ with $y_i^\ast:=E_{K_i^\ast}^\ast(x_i^\ast)$, and 
$R\subseteq \Ks^k\times X^k \times X^k \times Z$ be a relation.
Then, we have:
    \begin{align*}
    &\Pr\left[
\begin{array}{ll}
(K_1^\ast, \dots, K_k^\ast, x_1^\ast, \dots, x_k^\ast, y_1^\ast, \dots, y_k^\ast, z)\in R \\
~\land ~
\forall j\in[k]~K_j=K^*_j
~\land ~
\forall j\in[k]~x_j=x^*_j\\
\end{array}
:(K_1, \dots, K_k, x_1,...,x_k,z)\leftarrow S[\cA, E, E^\ast]\right]\\
&\ge  
\frac{1}{(8q+1)^{2k}}
\Pr\left[
\begin{array}{ll}
(K_1^\ast, \dots, K_k^\ast, x_1^\ast, \dots, x_k^\ast, y_1^\ast, \dots, y_k^\ast, z)\in R \\
~\land ~
\forall j\in[k]~K_j=K^*_j
~\land ~
\forall j\in[k]~x_j=x^*_j\\
\end{array} \right. \\
& \left. \ \ \ \ \ \ \ \ \ \ \ \ \ \ \ \ \ \ \ \ \ \ \ \ \ \ \ \ \ \ \ \ :(K_1, \dots, K_k, x_1,...,x_k,z)\leftarrow \cA^{E[\vec{x}^\ast \tok{\vec{K}^\ast} \vec{y}^\ast]}\right]. 
\end{align*}
\end{lemma}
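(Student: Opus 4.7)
The plan is to adapt the proof of \Cref{lemma:quantum_measure_reprogram} to the ideal cipher setting essentially verbatim, by viewing the ideal cipher $E$ as a $|\Ks|$-indexed family of independent random permutations $\{E_K\}_{K\in\Ks}$ and noting that the reprogramming $E[x\tok{K}y]$ touches \emph{only} the single permutation $E_K$, leaving all other $E_{K'}$ with $K'\neq K$ unchanged. First I would define the simulator $S[\cA,E,E^\ast]$ in exact analogy with \Cref{def:quantum_simulator}: sample $(\vec v,\vec b,\vec c)$ under the same constraints, initialize $O:=E$, and on the $v_j$-th query measure both the key and the input (or output) register of $\cA$'s query, obtaining a triple $(K'_{v_j},x'_{v_j})$ or $(K'_{v_j},y'_{v_j})$; then, depending on $b_j\in\{0,1\}$, query $E^\ast$ or ${E^\ast}^{-1}$ on the measured key to compute the corresponding hit/miss target and reprogram $O$ to $O[x'_{v_j}\tok{K'_{v_j}} y'_{v_j}]$ either before ($c_j=0$) or after ($c_j=1$) answering. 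I would then state and prove the analogue of \Cref{lem:quantum_correctness_of_reprogramming}, which is immediate from \Cref{def:hit_miss_queries_ic}.

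Next I would carry out the inductive state-decomposition argument exactly as in the permutation proof. Assuming $\cA$ is in the normal form with $2q$ alternating queries to $O_E$ and $O_{E^{-1}}$, the final state under $E[\vec x^\ast\tok{\vec K^\ast}\vec y^\ast]$ is decomposed as
\begin{align*}
\left|\psi^{E[\vec x^\ast\tok{\vec K^\ast}\vec y^\ast]}_{2q}\right\rangle=\sum_{\vec v,\vec b,\vec c}(-1)^{\beta_{\vec v,\vec b,\vec c}}\ket{\phi_{\vec v,\vec b,\vec c}},
\end{align*}
where the inserted projectors after the $v_j$-th query are now of the form $\ketbra{K_j^\ast,x_j^{\sf hit}}{K_j^\ast,x_j^{\sf hit}}$, $\ketbra{K_j^\ast,x_j^{\sf miss}}{K_j^\ast,x_j^{\sf miss}}$ (for forward queries) or the analogous projectors on $(K_j^\ast,y_j^{\sf hit/miss})$ (for backward queries). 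The inductive step proceeds identically, inserting $I=I-\sum_{j\notin J}(\ketbra{K_j^\ast,x_j^{\sf hit}}{}+\ketbra{K_j^\ast,x_j^{\sf miss}}{})+\sum_{j\notin J}(\cdots)$ before each query operator. The key identity that makes the argument go through is that, because reprogramming at $(K_j^\ast,x_j^\ast,y_j^\ast)$ affects only the slice $E_{K_j^\ast}$, for any $(K,x)$ outside the union of hit/miss pairs with matching key we have $O_E\ket{K,x,y}=O_{E[\vec x^\ast\tok{\vec K^\ast}\vec y^\ast]}\ket{K,x,y}$; this is precisely the cipher analogue of \Cref{cla:claim} and follows directly from \Cref{def:cipher_reprog} together with the goodness condition \Cref{def:good_ic}.

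Once the decomposition is established, the bounding step is identical: by the triangle and Cauchy--Schwarz inequalities applied to the projector $\Pi_{\vec K,\vec x,z}$ onto the event $\{\forall j\,K_j=K_j^\ast\land x_j=x_j^\ast\}\land z$, we get
\begin{align*}
\left\|\Pi_{\vec K^\ast,\vec x^\ast,z}\left|\psi^{E[\vec x^\ast\tok{\vec K^\ast}\vec y^\ast]}_{2q}\right\rangle\right\|^2\le (8q+1)^k\sum_{\vec v,\vec b,\vec c}\left\|\Pi_{\vec K^\ast,\vec x^\ast,z}\ket{\phi_{\vec v,\vec b,\vec c}}\right\|^2,
\end{align*}
using that there are at most $(8q+1)^k$ choices of $(\vec v,\vec b,\vec c)$. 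Since each $\|\Pi_{\vec K^\ast,\vec x^\ast,z}\ket{\phi_{\vec v,\vec b,\vec c}}\|^2$ lower-bounds the probability that $S[\cA,E,E^\ast]$ outputs $(\vec K^\ast,\vec x^\ast,z)$ conditioned on the corresponding guess, and this guess is sampled uniformly with probability at least $1/(8q+1)^k$, we obtain the claimed loss factor $(8q+1)^{2k}$ after summing over $(\vec v,\vec b,\vec c)$.

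The main obstacle I anticipate is purely notational rather than conceptual: bookkeeping the joint key-and-input register throughout the state decomposition, and verifying carefully that the goodness condition in \Cref{def:good_ic} (which only imposes distinctness \emph{within a single key class}) suffices to guarantee the analogue of \Cref{cla:claim} — namely, that for each $j$ the hit and miss inputs $(K_j^\ast,x_j^{\sf hit})$ and $(K_j^\ast,x_j^{\sf miss})$ are distinct, and that subsequent reprogrammings at $(K_{j'}^\ast,\cdot,\cdot)$ with $j'\neq j$ do not disturb the value of the reprogrammed cipher at the $j$-th hit/miss inputs. The first point is immediate since $(E,E^\ast)\in G[\vec K^\ast,\vec x^\ast]$ forces $E_{K_j^\ast}(x_j^\ast)\neq y_j^\ast$; the second holds because reprogramming at key $K_{j'}^\ast\neq K_j^\ast$ does not touch $E_{K_j^\ast}$ at all, while reprogramming at the same key $K_{j'}^\ast=K_j^\ast$ avoids the $j$-th hit/miss inputs by the within-key distinctness clauses of \Cref{def:good_ic}. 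With these observations in hand, the rest of the proof is a direct transcription of the permutation argument.
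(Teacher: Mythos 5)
Your proposal is correct and follows essentially the same route as the paper, which itself proves this lemma by declaring the argument "essentially a duplicate" of the permutation case with cipher keys carried along and the hit/miss inputs taken from the cipher version of the definition. Your additional care in checking that the goodness condition for ciphers (within-key distinctness plus $E_{K_i^\ast}(x_i^\ast)\neq y_j^\ast$) suffices for the non-interference of successive reprogrammings — because cross-key reprogrammings touch disjoint permutation slices — is exactly the point the paper leaves implicit, and it is handled correctly.
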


In Lemma~\ref{lemma:quantum_measure_reprogram}, we had a simulator $S[\cA, \pi, \pi^\ast]$ which tries to guess the first query of $\cA$ which touches $(x_j^\ast, y_j^\ast)$, for each $j\in [k]$.  For Lemma~\ref{lemma:quantum_measure_reprogram_ic}, we define a similar simulator $S[\cA, E, E^\ast]$ which works essentially the same as $S[\cA, \pi, \pi^\ast]$ of Lemma~\ref{lemma:quantum_measure_reprogram}, except that cipher keys are also involved, and we use the definition of the hit and miss queries given in Definition~\ref{def:hit_miss_queries_ic}.  Then, the proof is essentially a duplicate of that of Lemma~\ref{lemma:quantum_measure_reprogram}.

Finally, using Lemma~\ref{lemma:quantum_measure_reprogram_ic}, we prove the lifting theorem for ideal ciphers.

\begin{theorem}[Quantum Lifting Theorem for Ideal Ciphers]
\label{thm:quantum_lifting_ic}
Let $\cA$ be a quantum algorithm that makes $q$ quantum queries to an ideal cipher oracle over $\Ks$ and $X$, and $R$ is a relation on $\Ks^k\times X^k \times X^k \times Z$. 
Then there exists an algorithm $\cB$ making at most $k$ classical queries such that 
\begin{align*}
&\Pr_{E^\ast} \left[
(K_1, \dots, K_k, x_1,...,x_k, E_{K_1}^\ast(x_1),...,E_{K_k}^\ast(x_k),z)\in R
:(K_1, \dots, K_k, x_1,...,x_k,z)\leftarrow \cB^{E^\ast}\right]\\
&\ge
\frac{\left(1 - \frac{k^2}{|X|}\right)}{(8q+1)^{2k}}\Pr_{E^\ast} \left[
(K_1, \dots, K_k, x_1,...,x_k, E_{K_1}^\ast(x_1),...,E_{K_k}^\ast(x_k),z)\in R: \right. \\
& \ \ \ \ \ \ \ \ \ \ \ \ \ \ \ \ \ \ \ \ \ \ \ \ \ \ \ \ \ \
\left. (K_1, \dots, K_k, x_1,...,x_k,z)\leftarrow \cA^{E^\ast}\right].
\end{align*}
\end{theorem}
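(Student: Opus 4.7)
The plan is to mirror the proof of \Cref{thm:quantum_lifting} for random permutations, replacing each permutation-level ingredient with its cipher-level analogue. Concretely, I would define $\cB^{E^*}$ to sample a uniformly random cipher $E$ internally and then run $S[\cA, E, E^*]$ from \Cref{lemma:quantum_measure_reprogram_ic}, outputting whatever $S$ outputs. Since $\cB$ only calls $E^*$ during the measure-and-reprogram step at the (at most $k$) guessed query indices, it makes at most $k$ classical queries to $E^*$, which is what the theorem asks for.

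Next, for a fixed $E^*$, I would write the success probability of $\cB$ as a sum over all tuples $(\vec{K}^*, \vec{x}^*)$ satisfying the key-distinctness condition of \Cref{def:support_G_ic}, of the probability (over the internal $E$) that $S[\cA,E,E^*]$ outputs a tuple matching $(\vec{K}^*, \vec{x}^*)$ and satisfying $R$ together with $\vec{y}^* = (E_{K_1^*}^*(x_1^*), \dots, E_{K_k^*}^*(x_k^*))$. Taking expectation over $E^*$ and splitting according to whether $(E, E^*) \in G[\vec{K}^*, \vec{x}^*]$ or not, the bad probability lemma for ciphers (\Cref{lem:bad_prob_ic}) gives the factor $(1 - k^2/|X|)$, and then the measure-and-reprogram lemma for ciphers (\Cref{lemma:quantum_measure_reprogram_ic}) trades this for the success probability of $\cA$ running on the reprogrammed cipher $E[\vec{x}^* \tok{\vec{K}^*} \vec{y}^*]$, losing a factor $(8q+1)^{2k}$.

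The one piece that is not explicitly stated in the ideal-cipher subsection is the uniformity lemma analogue of \Cref{lem:uniform}, which is needed to rewrite the probability over $(E, E^*) \leftarrow G[\vec{K}^*, \vec{x}^*]$ of $\cA^{E[\vec{x}^* \tok{\vec{K}^*} \vec{y}^*]}$ winning into the probability over a single uniform cipher $E$ of $\cA^E$ returning $(\vec{K}, \vec{x}, E_{K_1}(x_1), \dots, E_{K_k}(x_k), z) \in R$. I expect this to be the main obstacle. The argument should go through exactly as in the permutation case: for any fixed cipher $\sigma$ (viewed as a function mapping keys to permutations), left-multiplying each $E_K$ componentwise by $\sigma_K$ preserves membership in $G[\vec{K}^*, \vec{x}^*]$, so the distribution of $E[\vec{x}^* \tok{\vec{K}^*} \vec{y}^*]$ for $(E, E^*) \leftarrow G[\vec{K}^*, \vec{x}^*]$ is invariant under left multiplication, hence uniform over ciphers. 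The key- and permutation-disjointness clauses of \Cref{def:good_ic} are exactly what guarantee that this componentwise symmetry is well-defined on $G[\vec{K}^*, \vec{x}^*]$.

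Stringing these steps together yields
\begin{align*}
&\Pr_{E^*}\left[(\vec{K}, \vec{x}, E^*_{\vec{K}}(\vec{x}), z) \in R : (\vec{K}, \vec{x}, z) \leftarrow \cB^{E^*}\right] \\
&\ge \left(1 - \frac{k^2}{|X|}\right) \frac{1}{(8q+1)^{2k}} \Pr_{E}\left[(\vec{K}, \vec{x}, E_{\vec{K}}(\vec{x}), z) \in R : (\vec{K}, \vec{x}, z) \leftarrow \cA^{E}\right],
\end{align*}
where the chain of inequalities is identical in structure to that in the proof of \Cref{thm:quantum_lifting}, simply with $\pi, \pi^*$ replaced by $E, E^*$ and $\vec{x}^* \to \vec{y}^*$ replaced by $\vec{x}^* \tok{\vec{K}^*} \vec{y}^*$. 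Renaming the internal uniform $E$ as $E^*$ in the final line (since both are uniform) yields the theorem.
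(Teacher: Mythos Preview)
Your proposal is correct and follows exactly the same route as the paper, which explicitly states that the proof is ``almost a duplicate'' of that of \Cref{thm:quantum_lifting}, using \Cref{lemma:quantum_measure_reprogram_ic} in place of \Cref{lemma:quantum_measure_reprogram} and \Cref{lem:bad_prob_ic} in place of \Cref{lem:bad_prob}. Your identification of the missing cipher analogue of \Cref{lem:uniform} and your sketch of its proof via componentwise left-multiplication invariance are spot on; the paper does not state this lemma separately for ciphers but implicitly relies on it, and your argument for why it holds is the natural one.
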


Again, the proof is almost a duplicate of that of Theorem~\ref{thm:quantum_lifting}.  We use Lemma~\ref{lemma:quantum_measure_reprogram_ic} instead of Lemma~\ref{lemma:quantum_measure_reprogram}, and we use the bound for the bad probability for ciphers given in Lemma~\ref{lem:bad_prob_ic}.

Also, the interactive case can be extended to the ideal ciphers, in a similar way.
 \section{Applications} \label{app:applications}
In this section, we discuss applications of our lifting theorem (\Cref{thm:quantum_lifting}). 

\subsection{Generalized Double-Sided Search}
We give an improved bound for the generalized double-sided search problem considered in \cite[Theorem 6.11]{MMW24} as a generalization of Unruh's double-sided zero-search conjecture~\cite{DBLP:conf/asiacrypt/Unruh23}.   
\begin{theorem}[Generalized Double-Sided Search]\label{thm:generalized_double_search}
Let 
$N\ge 2$ be an integer, $\cA$ be a quantum algorithm that makes $q$ quantum queries to a uniformly random permutation $\pi$ on $[N]$ and its inverse, and $R\subseteq [N]\times [N]$ be an arbitrary relation. Then it holds that 
\[
\Pr_{\pi}\left[
(x,\pi(x))\in R:
x \leftarrow \cA^{\pi}
\right]
\le 
\frac{8\cdot(8q+1)^{2} \cdot r_{\mathrm{max}}}{N}
\]
where $r_{\mathrm{max}}=\max\{\max_{x} |R_x|,\max_{y}|R_y^{\mathrm{inv}}|\}$ with $R_x=\{y:(x,y)\in R\}$ and $R_y^{\mathrm{inv}}=\{x:(x,y)\in R\}$.
\end{theorem}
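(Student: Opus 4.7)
The plan is to reduce the problem to a one-query classical setting via the quantum lifting theorem (\Cref{thm:quantum_lifting}) with $k=1$, and then analyze the classical one-query success probability by a direct case split. Set $X=[N]$, $k=1$, $Z=\{\bot\}$ (i.e.\ no auxiliary output), and consider the relation $R\subseteq X\times X$. Applying \Cref{thm:quantum_lifting} to $\cA$ produces a classical algorithm $\cB$ making at most one oracle query such that
\[
\Pr_{\pi^*}\bigl[(x,\pi^*(x))\in R:x\leftarrow \cB^{\pi^*}\bigr]
\;\geq\;
\frac{1-1/N}{(8q+1)^{2}}\;\Pr_{\pi^*}\bigl[(x,\pi^*(x))\in R:x\leftarrow \cA^{\pi^*}\bigr].
\]
Since $N\geq 2$, we have $1-1/N\geq 1/2$, so it suffices to show that any $1$-query classical $\cB$ wins with probability at most $4r_{\mathrm{max}}/N$, which then multiplies with the factor $2(8q+1)^2$ coming from inverting the lifting bound to give $8(8q+1)^2r_{\mathrm{max}}/N$.

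The classical analysis splits into three cases according to what $\cB$ does. If $\cB$ makes no query then its output $x$ is fixed and $\pi(x)$ is uniform on $[N]$, so the winning probability is at most $|R_x|/N\leq r_{\mathrm{max}}/N$. If $\cB$ makes a single forward query $x_0$, receives $y_0=\pi(x_0)$, and outputs $x=f(y_0)$ for some deterministic $f$, then conditioning on $y_0$: when $f(y_0)=x_0$ the event $(x,\pi(x))\in R$ reduces to $(x_0,y_0)\in R$, whose probability over uniform $y_0$ is $|R_{x_0}|/N\leq r_{\mathrm{max}}/N$; when $f(y_0)\neq x_0$, the value $\pi(f(y_0))$ is uniform on $[N]\setminus\{y_0\}$, giving a bound of $r_{\mathrm{max}}/(N-1)$. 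Averaging over $y_0$ and using $1/(N-1)\leq 2/N$ for $N\geq 2$ yields a classical winning probability of at most $3r_{\mathrm{max}}/N\leq 4r_{\mathrm{max}}/N$. The backward-query case is symmetric, with $R$ replaced by its inverse relation, which is why the definition of $r_{\mathrm{max}}$ takes the maximum over both $R_x$ and $R_y^{\mathrm{inv}}$.

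Combining the bound $\Pr[\cB\text{ wins}]\leq 4r_{\mathrm{max}}/N$ with the lifting inequality gives
\[
\Pr_{\pi}[(x,\pi(x))\in R:x\leftarrow \cA^{\pi}]
\;\leq\;
\frac{(8q+1)^2}{1-1/N}\cdot \frac{4r_{\mathrm{max}}}{N}
\;\leq\;
\frac{8(8q+1)^2 r_{\mathrm{max}}}{N},
\]
as desired. The main subtlety is the careful conditioning in the classical one-query analysis: one must not double count the case $f(y_0)=x_0$ (where the response already determines $\pi(x)$) versus $f(y_0)\neq x_0$ (where $\pi(x)$ remains uniform on $[N]\setminus\{y_0\}$), and one must handle forward and backward queries symmetrically to explain the two terms in the definition of $r_{\mathrm{max}}$. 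Beyond this, the argument is a mechanical application of the lifting theorem, so no further obstacle is expected.
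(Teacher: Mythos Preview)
Your proposal is correct and follows essentially the same approach as the paper: apply the lifting theorem with $k=1$, bound the one-classical-query success probability by $4r_{\mathrm{max}}/N$, and absorb the $(1-1/N)^{-1}\le 2$ factor. Your classical case analysis is slightly more detailed than the paper's (which handles forward/backward queries uniformly by speaking of the input-output pair $(x^*,y^*)$ and uses a union bound to get $2r_{\mathrm{max}}/(N-1)\le 4r_{\mathrm{max}}/N$), but the structure and conclusion are the same.
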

\begin{proof}
   By \Cref{thm:quantum_lifting}, there is a one-classical-query algorithm $\cB$ such that 
   \[
   \Pr_\pi\left[
(x,\pi(x))\in R:
x \leftarrow \cA^{\pi}
\right]\\
\le \frac{(8q+1)^{2}}{(1-\frac{1}{N})}\Pr_\pi\left[
(x,\pi(x))\in R:
x \leftarrow \cB^{\pi}
\right].
   \]
To analyze $\cB$'s success probability, suppose that $\cB$'s query forms an input-output pair $(x^*,y^*)$, that is, $\cB$ queries $x^*$ to $\pi$ or queries $y^*$ to $\pi^{-1}$. If $\cB$ outputs $x=x^*$, $(x,\pi(x))\in R$ holds with probability at most $r_{\mathrm{max}}/N$. 
If $\cB$ outputs $x\ne x^*$, then $(x,\pi(x))\in R$ holds with probability at most $\max_{x}|R_x|/(N-1)$. By the union bound, $(x,\pi(x))\in R$ holds with probability at most $2r_{\mathrm{max}}/(N-1)\le 4r_{\mathrm{max}}/N$. 
  Noting that $1-1/N\ge 1/2$, we obtain the theorem.
\end{proof}
\Cref{thm:generalized_double_search} improves upon the bound of $\frac{914 r_{\mathrm{max}}\cdot q^3\cdot(\ln(N)+2)}{N}$ established by \cite{MMW24}. 
In particular, ours is tight up to a constant factor since it matches a straightforward algorithm based on Grover's search~\cite{grover1996fast}.


As an application of \Cref{thm:generalized_double_search}, we provide an alternative proof of the correctness of Unruh's double-sided zero-search conjecture.  
\begin{corollary}[Double-Sided Zero-Search] \label{cor:Unruh_double_sided}
Let $n$ be a positive integer and
$\cA$ be a quantum algorithm that makes $q$ quantum queries to a uniformly random permutation $\pi$ on $\{0,1\}^{2n}$ and its inverse. Then it holds that 
\[
\Pr_{\pi}\left[
\pi(x||0^n)=y||0^n:
(x,y) \leftarrow \cA^{\pi}
\right]
\le 
\frac{8\cdot (8q+1)^{2}}{2^{n}}.
\]
\end{corollary}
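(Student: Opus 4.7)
The plan is to derive this corollary as an immediate consequence of \Cref{thm:generalized_double_search}. I set $N := 2^{2n}$ and identify $[N]$ with $\{0,1\}^{2n}$ in the obvious way. I would transform $\cA$'s output format so that the theorem applies: given $\cA$ which outputs $(x,y)\in\{0,1\}^n\times\{0,1\}^n$, define the algorithm $\cA'$ which runs $\cA$, receives $(x,y)$, and outputs the single element $u := x\|0^n \in \{0,1\}^{2n}$. This $\cA'$ still makes only $q$ quantum queries to $\pi$ and $\pi^{-1}$.

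Next, I would define the relation
\[
R := \bigl\{(u,v)\in\{0,1\}^{2n}\times\{0,1\}^{2n} : u = x\|0^n \text{ and } v = y\|0^n \text{ for some } x,y\in\{0,1\}^n\bigr\}.
\]
With this encoding, the event $\pi(x\|0^n) = y\|0^n$ in $\cA$'s experiment is exactly the event $(u,\pi(u))\in R$ in $\cA'$'s experiment. Moreover, for each fixed $u$, we have $R_u = \{y\|0^n : y\in\{0,1\}^n\}$ if $u$ ends in $0^n$ and $R_u = \emptyset$ otherwise; symmetrically for $R_v^{\mathrm{inv}}$. Hence $r_{\max} = 2^n$.

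Applying \Cref{thm:generalized_double_search} to $\cA'$ and $R$ yields
\[
\Pr_\pi\bigl[(u,\pi(u))\in R : u \leftarrow {\cA'}^\pi\bigr] \le \frac{8\cdot(8q+1)^2 \cdot 2^n}{2^{2n}} = \frac{8\cdot(8q+1)^2}{2^n},
\]
which is precisely the claimed bound. No step is truly hard here: the only ``content'' is the bookkeeping of identifying the right relation $R$ and computing its row/column sizes. The heavy lifting was done in proving \Cref{thm:generalized_double_search} via the quantum lifting theorem; this corollary is a direct specialization to the prefix-constrained relation that captures Unruh's double-sided zero-search game.
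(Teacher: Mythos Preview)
Your proposal is correct and takes essentially the same approach as the paper: define the relation $R$ of pairs with both coordinates ending in $0^n$, observe $r_{\max}=2^n$, and plug into \Cref{thm:generalized_double_search}. One tiny quibble: the event $(u,\pi(u))\in R$ only asks that $\pi(x\|0^n)$ ends in $0^n$, not that it equals $y\|0^n$ for the specific $y$ that $\cA$ outputs, so the two events are not ``exactly'' the same---but yours is the larger event, so the upper bound transfers in the right direction.
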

\begin{proof}
    Define relation $R:=\{x\in \{0,1\}^n: \exists~y\in \{0,1\}^n~s.t.~\pi(x||0^n)=y||0^n\}$. 
    Then we clearly have $r_{\mathrm{max}}=2^n$. 
    Substituting this into \Cref{thm:generalized_double_search} yields \Cref{cor:Unruh_double_sided}.
\end{proof}
The bound in \Cref{cor:Unruh_double_sided} is tight up to a constant factor, though \cite{CP24} provides a slightly better constant factor; their bound is   $50(q+1)^2/2^{n}$.

As another application of \Cref{thm:generalized_double_search}, we show an improved bound for the fixed point finding problem studied in \cite{DBLP:conf/asiacrypt/HosoyamadaY18}.
\begin{corollary}[Fixed Point Finding]\label{cor:fixed_point} 
Let 
$N$ be a positive integer and $\cA$ be a quantum algorithm that makes $q$ quantum queries to a uniformly random permutation $\pi$ on $[N]$ and its inverse. Then it holds that 
\[
\Pr_\pi\left[
\pi(x)=x:
x \leftarrow \cA^{\pi}
\right]
\le 
\frac{8\cdot(8q+1)^{2}}{N}.
\]
\end{corollary}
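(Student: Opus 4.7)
The plan is to view fixed-point finding as a special case of the generalized double-sided search problem (\Cref{thm:generalized_double_search}) by choosing the appropriate relation on $[N]\times [N]$. Specifically, I would take $R$ to be the diagonal relation
\[
R \;=\; \{(x,x) : x \in [N]\},
\]
so that the event $(x,\pi(x)) \in R$ is exactly the event $\pi(x) = x$. With this choice,
\[
\Pr_{\pi}\bigl[\pi(x)=x : x \leftarrow \cA^{\pi}\bigr]
= \Pr_{\pi}\bigl[(x,\pi(x)) \in R : x \leftarrow \cA^{\pi}\bigr],
\]
which is exactly the quantity bounded by \Cref{thm:generalized_double_search}.

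Next I would compute $r_{\mathrm{max}}$ for this relation. For every $x \in [N]$, the slice $R_x = \{y : (x,y) \in R\} = \{x\}$ has size $1$, and similarly $R_y^{\mathrm{inv}} = \{y\}$ has size $1$ for every $y \in [N]$. Hence $r_{\mathrm{max}} = \max\bigl\{\max_x |R_x|,\ \max_y |R_y^{\mathrm{inv}}|\bigr\} = 1$. Plugging $r_{\mathrm{max}} = 1$ into the bound from \Cref{thm:generalized_double_search} immediately yields
\[
\Pr_\pi\bigl[\pi(x)=x : x \leftarrow \cA^{\pi}\bigr]
\;\le\; \frac{8 \cdot (8q+1)^{2} \cdot 1}{N}
\;=\; \frac{8\cdot(8q+1)^{2}}{N},
\]
which is the claim.

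Since the work is essentially a one-line instantiation of an already-established theorem, there is no real obstacle to overcome; the only thing to verify carefully is that the diagonal relation satisfies the hypotheses of \Cref{thm:generalized_double_search}, namely that it is an arbitrary relation on $[N]\times [N]$ (it is), and that $r_{\mathrm{max}}$ is computed correctly for both the forward and inverse slices (both equal $1$). This makes the corollary a clean illustration that the generalized lifting-based bound subsumes previously studied special cases, such as the fixed point finding problem of \cite{DBLP:conf/asiacrypt/HosoyamadaY18}.
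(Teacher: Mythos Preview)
Your proof is correct and matches the paper's own argument exactly: define the diagonal relation $R=\{(x,x):x\in[N]\}$, observe that $r_{\mathrm{max}}=1$, and apply \Cref{thm:generalized_double_search}. (The edge case $N=1$, where \Cref{thm:generalized_double_search} assumes $N\ge 2$, is harmless since the stated bound then exceeds $1$.)
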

\begin{proof}
    Define relation $R:=\{x\in \{0,1\}^n: x=\pi(x)\}$. 
    Then we clearly have $r_{\mathrm{max}}=1$. 
    Substituting this into \Cref{thm:generalized_double_search} yields \Cref{cor:fixed_point}.
\end{proof}
The bound in \Cref{cor:fixed_point} is tight up to a constant factor, and improves the existing bound of $O(q/\sqrt{N})$ shown by \cite{DBLP:conf/asiacrypt/HosoyamadaY18}. 

Similarly to \cite{MMW24}, we can also use \Cref{thm:generalized_double_search} to show preimage-resistance of single-round sponge, but we omit this since we discuss (not necessarily single-round) sponge in detail in \Cref{sec:sponge}.

\subsection{Sponge Construction}\label{sec:sponge}

The sponge construction~\cite{sponge} is a hashing algorithm that underlies SHA-3.

The sponge construction makes use of a permutation $\pi$ on $\{0,1\}^{r+c}$, where $r$ and $c$ are positive integers called \emph{rate} and \emph{capacity}, respectively. We fix $r$ and $c$ throughout this subsection.
While the sponge construction is designed to support unbounded-length inputs and outputs, we focus on fixed input and output lengths, as our security bounds degrade with them.  
For input length $m$ and output length $n$, 
the sponge construction, 
which we denote by $\sponge^{\pi}[m,n]$, works as follows (see also \Cref{fig:sponge}):\footnote{The sponge construction supports more general padding functions. We adopt the simplest one that simply appends $1||0...0$.}\footnote{\Cref{fig:sponge} is made based on the one taken from \cite{TikZ:for:Cryptographers}.}

\begin{figure}
\caption{Sponge construction}
\label{fig:sponge}
\begin{center}
    
\begin{tikzpicture}[scale=0.4]

\tikzset{SpongePerm/.style=rounded corners=4pt,};
\tikzset{edge/.style=-latex new, arrow head=8pt, thick};
\tikzset{edgee/.style=latex new-latex new, arrow head=8pt, thick};

\path (8.5, -1) node {\large\bf Absorbing phase};
\path (23.5, -1) node {\large\bf Squeezing phase};

\begin{scope}[xshift=0cm]
  \draw[thick] (0,0) rectangle ++(1,10);
  \draw[thick] (0,3) -- ++(1,0);

  \node[XOR,thick] (xm1) at (1+1.5,8) {};
  \draw[edge,thick] (1,8) -- (xm1);
  \draw[edge,thick] (1,2) -- ++(3,0);
  \draw[edge,thick] (1+1.5,10.5) node[above] {\large $x_{1}$} -- (xm1);  
  \draw[edge,thick] (xm1) -- ++(1.5,0);

	\draw[edgee,anchor=east] (-1,0) -- node[left] {$c$ bits} ++(0,3);
	\draw[edgee,anchor=east] (-1,3) -- node[left] {$r$ bits} ++(0,7);

\end{scope}
  
  \begin{scope}[xshift=4cm]
    \draw[SpongePerm] 
    		(0,0) rectangle node {\large$\pi$} ++(1,10);

    \node[XOR,thick] (xm2) at (1+1.5,8) {};
    \draw[edge,thick] (1,8) -- (xm2);
    \draw[edge,thick] (1,2) -- ++(3,0);  
    \draw[edge,thick] (1+1.5,10.5) node[above] {\large $x_{2}$} -- (xm2);
    \draw[edge,thick] (xm2) -- ++(1.5,0);
  \end{scope}

  \begin{scope}[xshift=8cm]
    \draw[SpongePerm] 
    		(0,0) rectangle node {\large$\pi$} ++(1,10);

    \draw[edge,thick] (1,8) -- (1+1.0,8);
    \draw[edge,thick] (1,2) -- ++(1.0,0);  
    \draw[edge,thick] (3,8) -- (3+1.0,8);
    \draw[edge,thick] (3,2) -- ++(1.0,0); 
    \node at (2.5,5) {$\ldots$};
  \end{scope}

    \begin{scope}[xshift=12cm]
    \draw[SpongePerm] 
    		(0,0) rectangle node {\large$\pi$} ++(1,10); 

            \node[XOR,thick] (xmlast) at (1+1.5,8) {};
    \draw[edge,thick] (1,8) -- (xmlast);
    \draw[edge,thick] (1,2) -- ++(3,0);  
    \draw[edge,thick] (1+1.5,10.5) node[above] {\large $x_{\ell_a}$} -- (xmlast);
    \draw[edge,thick] (xmlast) -- ++(1.5,0);
  \end{scope}

\begin{scope}[xshift=16cm]
  \draw[SpongePerm] 
  		(0,0) rectangle node {\large$\pi$} ++(1,10);

  \draw[edge,thick] (1,2) -- ++(3,0);
  \draw[edge,thick] (1,8) -- ++(3,0);  
\end{scope}

\begin{scope}[xshift=20cm]
 \draw[SpongePerm] 
  		(0,0) rectangle node {\large$\pi$} ++(1,10);

  \draw[edge,thick] (1,2) -- ++(3,0);
  \draw[edge,thick] (1,8) -- ++(3,0);  
  \draw[edge,thick] (-1,8) -- ++(0,2.5) node[above] {\large $z_{1}$};
  
\draw[dashed,thick] (-2,-2) -- ++(0,13);
\end{scope}

\begin{scope}[xshift=24cm]
  \draw[SpongePerm] 
  		(0,0) rectangle node {\large$\pi$} ++(1,10);

  \draw[edge,thick] (1,2) -- ++(1.5,0);
  \draw[edge,thick] (1,8) -- ++(1.5,0);  
  \draw[edge,thick] (-1,8) -- ++(0,2.5) node[above] {\large $z_{2}$};
  \node at (2.5,5) {$\ldots$};
\end{scope}

\end{tikzpicture}
\end{center}

\end{figure}
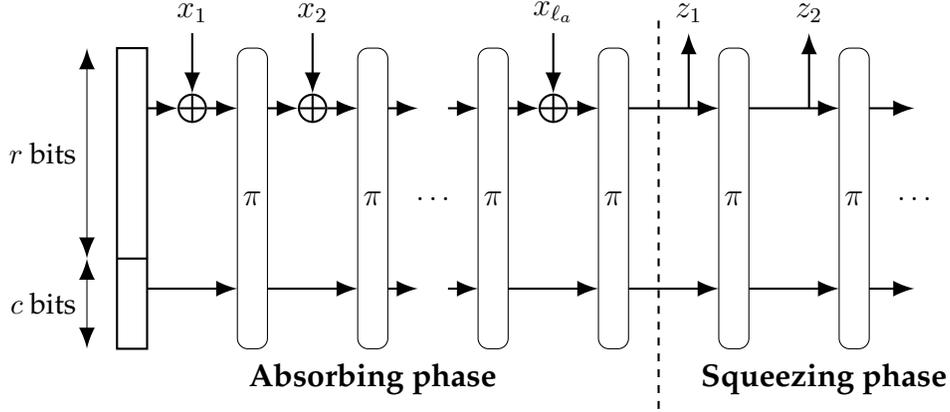

\begin{itemize}
\item[Input:] $x\in \{0,1\}^m$ 
\item Let $\ell_a:= \lceil \frac{m+1}{r}\rceil$ and  $\ell_s:= \lceil \frac{n}{r}\rceil$.
\item Parse $x||1||0^{\ell_a r-m-1}$ to $x_1||x_2||\ldots||x_{\ell_a}$ where each $x_i$ is a $r$-bit string. 
\item $s\gets 0^{r+c}$.
\item For $i=1$ to $\ell_a$, do the following:
\begin{itemize}
\item $s\gets s\oplus (x_i||0^{c})$.
\item $s\gets \pi(s)$.
\end{itemize}
\item For $i=1$ to $\ell_s$, do the following:
\begin{itemize}
\item Let $z_i$ be the first $r$-bit of $s$. 
\item $s\gets \pi(s)$.
\end{itemize}
\item Output the first $n$-bit of $z_1||z_2||...||z_{\ell_s}$.
\end{itemize}
We note that $\sponge^\pi[m,n]$ calls $\pi$ a total of $\ell:=\ell_a+\ell_s-1$ times.

We prove the following lifting theorem for the sponge construction.  
\begin{theorem}[Lifting Theorem for Sponge] \label{thm:lifting_sponge}  
Let $R\subseteq \left(\{0,1\}^{m}\right)^k\times \left(\{0,1\}^{n}\right)^{k}$ be any relation.
Define $P_{\mathrm{max}}^{R}$ as 
\[
P_{\mathrm{max}}^{R}:= \max_{B}\Pr_{H}\left[
(x_1,...,x_k,H(x_1),...,H(x_k))\in R:
(x_1,...,x_k) \leftarrow B^H 
\right]
\]
where 
$H:\{0,1\}^m\rightarrow \{0,1\}^n$ is a uniformly random function and
the $\max$ is taken over all $k$-classical-query algorithms $B$.  
Let $A$ be a quantum algorithm that makes $q$  quantum queries to a uniformly random permutation $\pi$ on $\{0,1\}^{r+c}$ and its inverse. 
Then it holds that 
\begin{align*}
&\Pr_{\pi}\left[
(x_1,...,x_k,\sponge^\pi[m,n](x_1),...,\sponge^\pi[m,n](x_k))\in R:
(x_1,...,x_k) \leftarrow A^{\pi}
\right]\\
&\le2\cdot (8q+1)^{2k\ell}
\left(P_{\mathrm{max}}^{R}
+
\frac{(k\ell+k+1)^2}{2^c}
\right)
\end{align*}
where we recall that $\ell=\ell_a+\ell_s-1$ is the number of calls to $\pi$ made by $\sponge^\pi[m,n]$. 
\end{theorem}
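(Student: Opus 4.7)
The plan is a two-step reduction: first apply the Quantum Lifting Theorem to collapse the quantum adversary into a classical one, and then analyze the classical sponge game via a capacity-collision bound that recovers random-oracle-like behavior on the sponge outputs.

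In the first step, I would view the sponge game as an interactive search game in which the adversary outputs $(x_1,\ldots,x_k)$ and the challenger, making at most $k\ell$ classical queries to $\pi$, computes $y_i=\sponge^\pi[m,n](x_i)$ for $i\in[k]$ and checks whether $(x_1,\ldots,x_k,y_1,\ldots,y_k)\in R$. Applying the interactive extension of \Cref{thm:quantum_lifting} (with the theorem's parameter $k$ replaced by $k\ell$) yields a classical $k\ell$-query adversary $B$ with
\[
\Pr_\pi[A^\pi\text{ wins}]\;\le\;\frac{(8q+1)^{2k\ell}}{1-(k\ell)^2/2^{r+c}}\,\Pr_\pi[B^\pi\text{ wins}]\;\le\;2\,(8q+1)^{2k\ell}\,\Pr_\pi[B^\pi\text{ wins}],
\]
where the second inequality uses $(k\ell)^2/2^{r+c}\le 1/2$, which we may assume without loss of generality (the bound is trivial otherwise).

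In the second step, I would bound $\Pr_\pi[B^\pi\text{ wins}]$ by a classical sponge-to-random-oracle argument. Let $\mathsf{Bad}$ denote the event that, over all at most $k\ell+k+1$ permutation inputs appearing in the game (the $k\ell$ internal states traversed during the challenger's sponge evaluations, the $k$ initial capacity-$0^c$ states, and an auxiliary slot covering $B$'s direct $\pi$-queries), two distinct inputs share the same capacity part other than when forced by a common prefix among the $x_i$'s. A union bound over pairs yields $\Pr_\pi[\mathsf{Bad}]\le (k\ell+k+1)^2/2^c$. Conditioned on $\neg\mathsf{Bad}$, every fresh capacity causes $\pi$ to respond with an independent uniform rate block, so the joint distribution of the $k$ sponge outputs coincides with $(H(x_1),\ldots,H(x_k))$ for an independent random oracle $H\colon\{0,1\}^m\to\{0,1\}^n$, and $B$'s own $\pi$-queries can be perfectly simulated without revealing anything about those outputs. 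Since each simulated $H$-evaluation costs $\ell$ $\pi$-queries, $B$'s $k\ell$ classical queries effectively correspond to at most $k$ queries to $H$, giving $\Pr_\pi[B^\pi\text{ wins}\mid\neg\mathsf{Bad}]\le P_{\mathrm{max}}^R$. Combining, $\Pr_\pi[B^\pi\text{ wins}]\le P_{\mathrm{max}}^R+(k\ell+k+1)^2/2^c$, and substituting into the bound from the first step yields the theorem.

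The main technical obstacle is the second step: making the "sponge becomes a random oracle under $\neg\mathsf{Bad}$" analysis fully rigorous when $B$'s $\pi$-queries are scheduled adaptively around its eventual output $(x_1,\ldots,x_k)$. One must carefully track how $B$'s forward and inverse $\pi$-queries interact with the challenger's sponge chain, handle prefix-sharing among the $x_i$'s so as neither to double-count bad events nor to destroy the freshness of the rate blocks, and in particular ensure that backward ($\pi^{-1}$) queries trying to preimage sponge-internal states are correctly covered by $\mathsf{Bad}$. This is essentially the classical sponge-indifferentiability analysis of~\cite{BDP+08} specialized to the multi-input search-relation setting; the intuition is standard but the bookkeeping is the delicate part.
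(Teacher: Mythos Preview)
Your first step---applying the lifting theorem with the challenger's query count set to $k\ell$ to collapse the quantum adversary into a classical $B$ making $k\ell$ queries to $\pi$---is exactly what the paper does. The divergence is in the second step. The paper does not attempt a direct bad-event analysis of the classical sponge game; it simply invokes the classical indifferentiability of sponge~\cite{BDP+08} (restated as \Cref{lem:indiff_sponge}) as a black box. Concretely, it wraps $B$ into a distinguisher $D^{\sponge^\pi[m,n],\pi}$ that runs $B^\pi$ and then checks $R$ using the sponge oracle; \Cref{lem:indiff_sponge} swaps $(\sponge^\pi,\pi)$ for $(H,S^H)$ at additive cost $(k\ell+k+1)^2/2^c$, and in the ideal world the check is an $H$-only event bounded by $P_{\mathrm{max}}^R$. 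This avoids all of the bookkeeping you flag as ``the delicate part.''

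Your direct argument has a genuine gap. The claim that ``$B$'s $k\ell$ classical queries effectively correspond to at most $k$ queries to $H$'' is not justified: $B$ is under no obligation to make sponge-structured queries, but it is perfectly free to do so, and by walking a complete sponge chain on some input $x$ it learns $\sponge^\pi(x)=H(x)$ without triggering any capacity collision in your $\mathsf{Bad}$ event (which, incidentally, seems to allot only ``an auxiliary slot'' to all $k\ell$ of $B$'s $\pi$-queries, so the $k\ell+k+1$ count is already suspect). Thus the assertion that, conditioned on $\neg\mathsf{Bad}$, ``$B$'s own $\pi$-queries can be perfectly simulated without revealing anything about those outputs'' is precisely the nontrivial content of indifferentiability---including the treatment of inverse queries---and is not a consequence of the bad event you wrote down. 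You correctly identify that what you need ``is essentially the classical sponge-indifferentiability analysis of~\cite{BDP+08}''; the paper's route is to cite it rather than reprove it.
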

\begin{remark}
 The above theorem is meaningful only if $(8q+1)^{2k\ell}(k\ell+k+1)^2\ll 2^c$. However, this condition does not generally hold in typical usage scenarios for hash functions, where the input length can be arbitrarily large. On the other hand, as discussed e.g., in \cite{Lefevre23},
 hash functions with fixed input lengths are sufficient for certain applications, such as Fiat-Shamir transforms and password hashing.
 In particular, by using the security parameter $\lambda$, if 
 $c=\Omega(\lambda)$, 
$q=\poly(\lambda)$, and 
$\ell$ and $k$ are constant, then $(8q+1)^{2k\ell}(k\ell+k+1)^2\cdot 2^{-c}=2^{-\Omega(\lambda)}$, providing a meaningful bound.       
\end{remark}
We defer the proof of \Cref{thm:lifting_sponge} to the end of this subsection. 

It is typically easy to upper bound $P_{\mathrm{max}}^{R}$ since it is just a bound in the classical random oracle model. 
In particular, we show two examples where $P_{\mathrm{max}}^{R}$ can be upper bounded by a simple term. 
\begin{lemma}[Upper Bounds of $P_{\mathrm{max}}^R$]\label{lem:upper_bound_P_max}
Let $R\subseteq \left(\{0,1\}^{m}\right)^k\times \left(\{0,1\}^{n}\right)^{k}$ be any relation, and $P_{\mathrm{max}}^{R}$ be as in \Cref{thm:lifting_sponge}. Then the following hold: 
\begin{itemize}
 \item If $k=1$, it holds that   $P_{\mathrm{max}}^{R}\le 2\max_{x}\Pr_y[(x,y)\in R]$. 
 \item If $R$ does not depend on $(x_1,...,x_k)$, i.e., $R$ can be written as $R=\left(\{0,1\}^{m}\right)^k\times R^{out}$ by using some relation $R^{out} \subseteq \left(\{0,1\}^{n}\right)^{k}$, then it holds that: \\ 
 $P_{\mathrm{max}}^{R}\le {2k \choose k}\Pr_{y_1,...,y_k}[\exists \pi~s.t.~(y_{\pi(1)},...,y_{\pi(k)})\in R^{out}]$ where $\pi$ is a permutation on $[k]$. 
\end{itemize}
\end{lemma}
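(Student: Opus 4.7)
The plan is to prove both bounds by direct random-oracle arguments. The common underlying observation is that, by lazy sampling of $H$, the hash value at any input that $B$ has not queried is uniform and independent of $B$'s view, so only the hashes at $B$'s queries together with the fresh hashes of its unqueried outputs are relevant.

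For the first bound ($k=1$), I would split on whether $B$'s output $x_1$ coincides with its single query $x'$. If $x_1=x'$, the success probability equals $\Pr_{x',H}[(x',H(x'))\in R]$, which is at most $\max_x\Pr_y[(x,y)\in R]$ by conditioning on $x'$ and using that $x'$ is chosen before $H(x')$ is revealed. If $x_1\neq x'$, then $H(x_1)$ is uniform and independent of $x_1$, so conditioning on $x_1$ gives the same bound. Summing the two mutually exclusive cases yields the stated factor $2$, which is in fact slack.

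For the second bound, I would argue by a union bound over $k$-subsets of a pool of i.i.d.\ uniform hash values. Let $U=\{x'_1,\dots,x'_k\}\cup\{x_1,\dots,x_k\}$ be the at most $2k$ inputs appearing as queries or outputs. By lazy sampling, $\{H(u):u\in U\}$ is a tuple of i.i.d.\ uniform values in $\{0,1\}^n$, and I would pad it out to a pool $Z=(z_1,\dots,z_{2k})$ of $2k$ i.i.d.\ uniforms. The success event $(H(x_1),\dots,H(x_k))\in R^{out}$ then implies that some $k$-subset of $Z$ admits an ordering in $R^{out}$. Since any $k$-subset of $Z$ consists of $k$ i.i.d.\ uniforms, the probability that some permutation of it lies in $R^{out}$ is exactly the probability appearing on the right-hand side (without the binomial factor), so a union bound over the $\binom{2k}{k}$ subsets gives the claim.

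The main subtlety I anticipate is rigorously reducing to the i.i.d.\ pool in part 2, because $B$ is adaptive and its outputs may coincide with its queries or with one another, and the choice of which subset the output hash tuple realizes is itself correlated with the sampled hashes. I plan to handle this via a deferred-sampling argument: first sample hashes of queries, then run $B$ to produce outputs, then sample hashes of unqueried outputs. Repetitions shrink the effective pool and are absorbed by the same union bound, and the definition of $p$ as an existential over permutations absorbs $B$'s freedom to pick the ordering on the chosen subset.
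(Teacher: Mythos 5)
Your proposal is correct and follows essentially the same route as the paper: the $k=1$ case is the same two-case union bound over whether the output equals the query, and the second case is the same argument of enlarging to a pool of $2k$ (i.i.d.\ uniform) hash values---the paper does this by having $B$ additionally query its outputs---followed by a union bound over the $\binom{2k}{k}$ $k$-subsets, with the existential over permutations absorbing the ordering. Your explicit deferred-sampling treatment of adaptivity is a careful spelling-out of a step the paper leaves implicit, not a different approach.
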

\begin{proof}~
\paragraph{First item.}
Let $B$ be an algorithm that makes a single classical query to a random function $H:\{0,1\}^m\rightarrow \{0,1\}^n$. Let $x^* \in \{0,1\}^m$ be $B$'s query. The probability that $(x^*,H(x^*))\in R$ is $\Pr_{y^*}[(x^*,y^*)\in R]$. 
If does not occur, the best choice for $B$ is to output some $x\ne x^*$. Again, the probability that $(x,H(x))\in R$ is $\Pr_{y}[(x,y)\in R]$. Thus, by the union bound, the probability that $B$'s output $x$ satisfies $(x,H(x))\in R$ is at most $2\max_{x}\Pr_y[(x,y)\in R]$. This completes the proof of the first item. 

\paragraph{Second item.} 
Let $B$ be an algorithm that makes $k$ classical queries to a random function $H:\{0,1\}^m\rightarrow \{0,1\}^n$. By modifying $B$ to query its outputs to $H$ before outputting them, we can assume that the outputs $(x_1,...,x_k)$ of $B$ are queried to $H$ if we increase the number of queries to $2k$. In this case, $B$'s output satisfies  $(x_1,...,x_k,H(x_1),...,H(x_k))\in R^{out}$ only if a permutation of a $k$-subset of its queries satisfies it. For any fixed input $(x_1,...,x_k)$, the probability that it holds is $\Pr_{y_1,...,y_k}[\exists \pi~s.t.~(y_{\pi(1)},...,y_{\pi(k)})\in R^{out}]$. Thus, the second item follows from the union bound.
\end{proof}

As examples, we show that the above immediately implies 
preimage-resistance and  (multi-)collision resistance. 
\begin{corollary}[Preimage-Resistance of Sponge]\label{cor:preimage_sponge}
Let $A$ be a quantum algorithm that makes $q$  quantum queries to a uniformly random permutation $\pi$ on $\{0,1\}^{r+c}$ and its inverse. 
For any 
$y\in \{0,1\}^{n}$,  
it holds that 
\[
\Pr_\pi\left[\sponge^\pi[m,n](x)=y:
x \leftarrow A^{\pi}
\right]\le (8q+1)^{2\ell}\left(\frac{4}{2^{n}}
+
\frac{2(\ell+2)^2}{2^c}
\right).
\]
\end{corollary}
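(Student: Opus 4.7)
The plan is to instantiate the sponge lifting theorem (\Cref{thm:lifting_sponge}) in the simplest nontrivial case $k=1$, with the relation that encodes preimage-resistance, and then plug in the classical random-oracle bound on $P_{\mathrm{max}}^R$ coming from \Cref{lem:upper_bound_P_max}.

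Concretely, fix the target $y \in \{0,1\}^n$ and define the relation
\[
R := \{(x,y') \in \{0,1\}^m \times \{0,1\}^n : y' = y\}.
\]
Then finding a preimage of $y$ under $\sponge^\pi[m,n]$ is exactly the event $(x, \sponge^\pi[m,n](x)) \in R$, so \Cref{thm:lifting_sponge} with $k=1$ directly gives
\[
\Pr_\pi\!\left[\sponge^\pi[m,n](x)=y : x \leftarrow A^\pi\right] \;\le\; 2\,(8q+1)^{2\ell}\left(P_{\mathrm{max}}^{R} + \frac{(\ell+2)^2}{2^c}\right).
\]

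Next, I would bound $P_{\mathrm{max}}^R$ using the first item of \Cref{lem:upper_bound_P_max}, which applies since $k=1$. For each $x$, the slice $R_x$ consists of the single element $y$, so $\Pr_{y'}[(x,y') \in R] = 1/2^n$ uniformly in $x$. Hence $\max_x \Pr_{y'}[(x,y') \in R] = 1/2^n$ and the lemma yields $P_{\mathrm{max}}^R \le 2/2^n$.

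Substituting this bound into the previous inequality gives
\[
\Pr_\pi\!\left[\sponge^\pi[m,n](x)=y\right] \;\le\; 2\,(8q+1)^{2\ell}\left(\frac{2}{2^n} + \frac{(\ell+2)^2}{2^c}\right) \;=\; (8q+1)^{2\ell}\left(\frac{4}{2^n} + \frac{2(\ell+2)^2}{2^c}\right),
\]
which is exactly the claimed bound. There is no real obstacle here: the work has already been done in \Cref{thm:lifting_sponge} and \Cref{lem:upper_bound_P_max}; preimage-resistance is simply the $k=1$ instantiation with a singleton output slice. The only thing to be careful about is matching the constants ($2$ from the lifting theorem multiplying $2/2^n$ to yield the $4/2^n$ term, and multiplying $(\ell+2)^2/2^c$ to yield $2(\ell+2)^2/2^c$), which line up with the statement of the corollary.
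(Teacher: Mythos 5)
Your proposal is correct and follows exactly the paper's own proof: define $R=\{0,1\}^m\times\{y\}$, bound $P_{\mathrm{max}}^R\le 2\cdot 2^{-n}$ via the first item of \Cref{lem:upper_bound_P_max}, and substitute $k=1$ into \Cref{thm:lifting_sponge}. The constant-matching you spell out is the same arithmetic the paper leaves implicit.
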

\begin{proof}
    Define $R:=\{0,1\}^m\times \{y\}$. By the first (or second) item of \Cref{lem:upper_bound_P_max}, we have $P_{\mathrm{max}}^{R}\le 2\cdot 2^{-n}$. Substituting this and $k=1$ to \Cref{thm:lifting_sponge} gives the corollary. 
\end{proof}

For the single-round case where $\ell=1$, 
\Cref{cor:preimage_sponge} gives a bound of 
 $(8q+1)^{2} \left(\frac{4}{2^{n}}
+
\frac{18}{2^c}
\right).$ 
In particular, $\Omega(2^{\min\{n,c\}})$ quantum queries are needed to succeed with constant probability, which is tight due to Grover's algorithm~\cite{grover1996fast}. 
This matches the bound shown in \cite{CP24} up to a constant factor. 
\takashi{Strictly speaking, \cite{CP24} doesn't consider truncation or padding. 
\cite{CPZ24} shows indifferentiability, so it should work with truncation, but only in the case of $r\le c$. 
\cite{MMW24} works for truncated single-round construction, but theirs is non-tight. So perhaps, ours is the first to get a tight bound for the truncated single-round with $r> c$.  (But \cite{CP24} might work with truncation.)
}
More generally, when $\ell=O(1)$, $n=\Omega(\lambda)$,  $c=\Omega(\lambda)$, and $q=\poly(\lambda)$, the RHS is $2^{-\Omega(\lambda)}$.  This is the first non-trivial bound for the preimage-resistance of sponge with more than one round though it is non-tight.

Note that the above corollary does not imply one-wayness of sponge since the target output $y$ is fixed independently of $\pi$ rather than being set to be $\pi(x)$ for a random input $x$. For obtaining a bound for one-wayness, we can invoke the interactive version of the lifting lemma (\Cref{thm:lifting_interactive}), albeit with a looser bound.
\begin{corollary}[One-Wayness of Sponge]\label{cor:oneway_sponge}
Let $A$ be a quantum algorithm that makes $q$  quantum queries to a uniformly random permutation $\pi$ on $\{0,1\}^{r+c}$ and its inverse.  
Then it holds that 
\[
\Pr_\pi\left[\sponge^\pi[m,n](x')=y:
\begin{array}{ll}
x\leftarrow \{0,1\}^{m}\\
y:=\sponge^\pi[m,n](x)\\
x' \leftarrow A^{\pi}(y)
\end{array}
\right]\le (8q+1)^{4\ell} \left(\frac{12}{2^{\min\{m,n\}}}
+
\frac{2(2\ell+3)^2}{2^c}
\right).
\]
\end{corollary}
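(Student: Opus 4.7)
The plan is to recast the one-wayness game as an interactive search game in the QRPM and then invoke an interactive analog of \Cref{thm:lifting_sponge}. A classical challenger samples $x\leftarrow \{0,1\}^m$, spends $\ell$ permutation queries evaluating $y:=\sponge^\pi[m,n](x)$, sends $y$ to the quantum adversary $\cA$, receives $x'$, and then spends $\ell$ more queries evaluating $\sponge^\pi[m,n](x')$ to check equality. Exactly two sponge inputs are touched in the whole protocol (one challenger-side, one adversary-side), so the natural framing is as a ``$k=2$'' instance of an interactive extension of \Cref{thm:lifting_sponge}, obtained by plugging the interactive quantum lifting for permutations (\Cref{thm:lifting_interactive}) into the proof of \Cref{thm:lifting_sponge} in place of the non-interactive \Cref{thm:quantum_lifting}. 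The resulting bound should take the form
\[
\Pr[\cA\text{ wins}]\le 2(8q+1)^{4\ell}\left(P^R_{\max}+\frac{(2\ell+3)^2}{2^c}\right),
\]
with $R=\{(x_1,x_2,y_1,y_2):y_1=y_2\}$, where $(8q+1)^{4\ell}=(8q+1)^{2k\ell}$ and $2\ell+3=k\ell+k+1$ for $k=2$, and the leading factor of $2$ absorbs the $(1-k^2/|X|)^{-1}$ correction from \Cref{lem:bad_prob}.

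The second step is to bound the classical interactive random-function one-wayness advantage $P^R_{\max}$: given a random function $H:\{0,1\}^m\to\{0,1\}^n$, one samples $x$ uniformly, reveals $y:=H(x)$ to a two-query classical adversary $\cB$, and accepts if $\cB$ outputs $x'$ with $H(x')=y$. Using lazy sampling, I will decompose the winning event by a union bound over (i) each of $\cB$'s two queries $a_i$ either coinciding with $x$ (probability $\le 1/2^m$) or having $H(a_i)=y$ by chance (probability $\le 1/2^n$), and (ii) the final output $x'\notin\{a_1,a_2\}$ either equaling $x$ or satisfying $H(x')=y$ by chance. This yields $P^R_{\max}\le 3/2^m+3/2^n\le 6/2^{\min\{m,n\}}$, and substituting back into the displayed inequality produces the claimed $12/2^{\min\{m,n\}}$ constant.

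The main obstacle is clearly the first step: upgrading \Cref{thm:lifting_sponge} to an interactive statement in which one of the two sponge-inputs is sampled uniformly by the challenger rather than freely chosen by the adversary. Concretely, the proof of \Cref{thm:lifting_sponge} first applies the (non-interactive) \Cref{thm:quantum_lifting} to the $k\ell$ permutation queries that collectively realize the $k$ sponge evaluations, reducing to a classical $k\ell$-query adversary, and then replaces $\pi$ by a fresh random function on the rate part while charging a capacity-collision event of probability at most $(k\ell+k+1)^2/2^c$. To make this interactive I must use \Cref{thm:lifting_interactive} in place of \Cref{thm:quantum_lifting}, tracking the $\ell$ ``challenger-side'' reprogramming points (those used to compute $y$) separately from the $\ell$ ``adversary-side'' points (those used in verification), and checking that the same indifferentiability-style analysis still gives a capacity-collision bound of $(k\ell+k+1)^2/2^c$ when one sponge input is a classical random coin of the challenger rather than an output of $\cA$. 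Once this interactive sponge lifting is in place, the remaining steps are essentially the routine bookkeeping sketched above.
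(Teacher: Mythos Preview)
Your proposal is correct and follows precisely the route the paper indicates: the paper itself omits the proof, saying only that it ``can be proven similarly by using \Cref{thm:lifting_interactive} instead of \Cref{thm:quantum_lifting}'' inside the argument for \Cref{thm:lifting_sponge}. You have correctly unpacked this into (i) framing one-wayness as an interactive game where the challenger spends $2\ell$ permutation queries (hence the $(8q+1)^{4\ell}$ loss), (ii) carrying the indifferentiability step over to the interactive setting to get the $(2\ell+3)^2/2^c$ term with $k=2$, and (iii) bounding the resulting classical random-function one-wayness by $6/2^{\min\{m,n\}}$, which after the leading factor of $2$ gives the claimed $12/2^{\min\{m,n\}}$. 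Your union-bound argument for $P^R_{\max}$ is the natural one and recovers the paper's constant.
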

We omit its proof since this can be proven similarly by using \Cref{thm:lifting_interactive} instead of \Cref{thm:quantum_lifting}. Note that this is non-tight even in the single-round case due to the quartic loss in $q$. We note that a tight bound in the single-round setting is  shown in \cite{CP24}. 
\takashi{
Here is an idea for another incomparable bound.  
Observe that the sponge construction is almost pair-wise independent. Thus, we can use the leftover hash lemma to reduce one-wayness to preimage-resistance. That would give something like 
$(8q+1)^{2(\ell_a+\ell_s-1)} \left(\frac{2}{2^{\min\{m,n\} /3}}
+
\frac{6}{2^c}
\right).$
}

Next, we show the collision-resistance. 
\begin{corollary}[Collision-Resistance of Sponge]\label{cor:collision_sponge}
Let $A$ be a quantum algorithm that makes $q$  quantum queries to a uniformly random permutation $\pi$ on $\{0,1\}^{r+c}$ and its inverse. 
Then it holds that 
\[
\Pr_\pi\left[\sponge^\pi[m,n](x_1)=\sponge^\pi[m,n](x_2):
(x_1,x_2) \leftarrow A^{\pi}
\right]\le (8q+1)^{4\ell}\left(\frac{12}{2^{n}}
+
\frac{2(2\ell+3)^2}{2^c}
\right).
\]
\end{corollary}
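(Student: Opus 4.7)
The plan is to derive this corollary as a direct application of the sponge lifting theorem (\Cref{thm:lifting_sponge}) with $k=2$, combined with the classical random oracle bound from \Cref{lem:upper_bound_P_max}. The key observation is that collision-resistance is a search game whose relation does not depend on the inputs, so we are squarely in the regime of the second item of \Cref{lem:upper_bound_P_max}.

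First, I would define the relation $R \subseteq (\{0,1\}^m)^2 \times (\{0,1\}^n)^2$ by
\[
R \;=\; \bigl\{(x_1,x_2,y_1,y_2) : y_1=y_2\bigr\},
\]
and observe that $R = (\{0,1\}^m)^2 \times R^{\mathrm{out}}$ with $R^{\mathrm{out}}=\{(y_1,y_2):y_1=y_2\}$, i.e., $R$ does not depend on the inputs. The event that $\cA^\pi$ outputs a collision $(x_1,x_2)$ with $\sponge^\pi[m,n](x_1)=\sponge^\pi[m,n](x_2)$ is exactly the event that $(x_1,x_2,\sponge^\pi[m,n](x_1),\sponge^\pi[m,n](x_2))\in R$, so \Cref{thm:lifting_sponge} with $k=2$ applies directly.

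Second, I would bound $P_{\mathrm{max}}^R$ using the second item of \Cref{lem:upper_bound_P_max}. With $k=2$ this yields
\[
P_{\mathrm{max}}^R \;\le\; \binom{4}{2}\,\Pr_{y_1,y_2}\!\left[\exists\,\sigma\in S_2:\, y_{\sigma(1)}=y_{\sigma(2)}\right] \;=\; 6\cdot\frac{1}{2^n} \;=\; \frac{6}{2^n},
\]
since the two permutations of $[2]$ both correspond to the event $y_1=y_2$, which occurs with probability $2^{-n}$ for a uniformly random function $H:\{0,1\}^m\to\{0,1\}^n$.

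Finally, plugging $k=2$ and the bound $P_{\mathrm{max}}^R \le 6/2^n$ into \Cref{thm:lifting_sponge} gives
\[
\Pr_\pi\!\left[\cA^\pi\text{ wins}\right] \;\le\; 2\cdot(8q+1)^{4\ell}\left(\frac{6}{2^n}+\frac{(2\ell+3)^2}{2^c}\right) \;=\; (8q+1)^{4\ell}\left(\frac{12}{2^n}+\frac{2(2\ell+3)^2}{2^c}\right),
\]
which is exactly the claimed bound. There is no real obstacle here: all the heavy lifting was done in proving \Cref{thm:lifting_sponge} and \Cref{lem:upper_bound_P_max}, and the corollary amounts to specializing those general statements. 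The only thing worth double-checking is the arithmetic for the $(k\ell+k+1)^2$ term with $k=2$, which gives $(2\ell+3)^2$, matching the statement.
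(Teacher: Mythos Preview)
Your proposal is correct and takes essentially the same approach as the paper: define the collision relation as $(\{0,1\}^m)^2 \times R^{\mathrm{out}}$ with $R^{\mathrm{out}}=\{(y,y)\}$, apply the second item of \Cref{lem:upper_bound_P_max} to get $P_{\mathrm{max}}^R \le 6\cdot 2^{-n}$, and substitute $k=2$ into \Cref{thm:lifting_sponge}. The arithmetic checks out exactly as you wrote.
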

\begin{proof}
    Define $R:=\left(\{0,1\}^{n}\right)^2\times \cup_{y\in \{0,1\}^n}\ \{(y,y)\}$. By the second item of \Cref{lem:upper_bound_P_max}, we have $P_{\mathrm{max}}^{R}\le 6\cdot 2^{-n}$. Substituting this and $k=2$ to \Cref{thm:lifting_sponge} gives the corollary. 
\end{proof}
For the single-round case where $\ell=1$,  
\Cref{cor:collision_sponge} gives a bound of
$(8q+1)^{4} \left(\frac{12}{2^{n}}
+
\frac{50}{2^c}
\right).$ 
This is unlikely to be tight since the BHT algorithm~\cite{BHT98} would find a collision with $O(2^{n/3})$ queries.  
\cite{CPZ24} showed that single-round sponge satisfies quantum indifferentiability (even with precomputation) when $r\le c$, which would give a tight bound for collision-resistance. However, to our knowledge, ours is the first non-trivial bound in the regime of $r>c$ even for the single-round case. 
More generally, when $\ell=O(1)$, $n=\Omega(\lambda)$,  $c=\Omega(\lambda)$, and $q=\poly(\lambda)$, the RHS is $2^{-\Omega(\lambda)}$. 
This is the first non-trivial bound for the collision-resistance of sponge with more than one round though it is non-tight. 

The above corollary can be easily generalized to the multi-collision case. 
\begin{corollary}[Multi-Collision-Resistance of Sponge]\label{cor:multi_collision_sponge}
Let $k$ be a positive integer and  
 $A$ be a quantum algorithm that makes $q$  quantum queries to a uniformly random permutation $\pi$ on $\{0,1\}^{r+c}$ and its inverse.  
Then it holds that 
\begin{align*}
    \Pr_\pi&\left[\sponge^\pi(x_1)=...=\sponge^\pi(x_k):
(x_1,...,x_k) \leftarrow A^{\pi}
\right] \\
& \le 2\cdot (8q+1)^{2k\ell}\left(\frac{{2k \choose k}}{2^{(k-1)n}}
+
\frac{(k\ell+k+1)^2}{2^c}
\right).
\end{align*}
where  
we write $\sponge^\pi$ to mean $\sponge^\pi[m,n]$. 
\end{corollary}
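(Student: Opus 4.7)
The plan is to reduce this to Theorem~\ref{thm:lifting_sponge} by choosing an appropriate relation $R$ that encodes the multi-collision condition, and then to bound the classical success probability $P_{\mathrm{max}}^R$ using the second item of Lemma~\ref{lem:upper_bound_P_max}. This is a direct generalization of the proof of Corollary~\ref{cor:collision_sponge}, which handles the case $k=2$.

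First, I would define
\[
R := \left(\{0,1\}^m\right)^k \times R^{out}, \qquad R^{out} := \{(y_1,\ldots,y_k) \in (\{0,1\}^n)^k : y_1 = y_2 = \cdots = y_k\}.
\]
The event that $A$ produces a $k$-multi-collision for $\sponge^\pi$ is exactly the event $(x_1,\ldots,x_k,\sponge^\pi(x_1),\ldots,\sponge^\pi(x_k)) \in R$. Crucially, $R$ has the product structure required by the second item of Lemma~\ref{lem:upper_bound_P_max}, since membership in $R$ depends only on the outputs, not on the inputs.

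Second, I would compute the relevant classical probability for random $y_1,\ldots,y_k \in \{0,1\}^n$. For any permutation $\sigma$ on $[k]$, the condition $(y_{\sigma(1)},\ldots,y_{\sigma(k)}) \in R^{out}$ is equivalent to $y_1 = \cdots = y_k$, because permuting a constant tuple leaves it constant. Hence
\[
\Pr_{y_1,\ldots,y_k}\left[\exists \sigma \text{ s.t. } (y_{\sigma(1)},\ldots,y_{\sigma(k)}) \in R^{out}\right] = \Pr_{y_1,\ldots,y_k}[y_1 = \cdots = y_k] = \frac{1}{2^{(k-1)n}},
\]
and applying the second item of Lemma~\ref{lem:upper_bound_P_max} gives $P_{\mathrm{max}}^R \le \binom{2k}{k}/2^{(k-1)n}$.

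Finally, I would plug this bound into Theorem~\ref{thm:lifting_sponge} to obtain
\[
\Pr_\pi\left[\sponge^\pi(x_1) = \cdots = \sponge^\pi(x_k) : (x_1,\ldots,x_k) \leftarrow A^\pi\right] \le 2 \cdot (8q+1)^{2k\ell}\left(\frac{\binom{2k}{k}}{2^{(k-1)n}} + \frac{(k\ell+k+1)^2}{2^c}\right),
\]
which is exactly the claimed bound. There is no genuine obstacle here: all the work has been pushed into Theorem~\ref{thm:lifting_sponge} and Lemma~\ref{lem:upper_bound_P_max}, and the only case-specific computation is the trivial observation that permuting an equal-valued tuple preserves equality, so the symmetrization factor $\binom{2k}{k}$ is the only overhead paid for the union bound over possible orderings of the classical queries.
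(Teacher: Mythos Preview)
Your proposal is correct and follows exactly the approach the paper intends: define the output-only relation $R^{out}=\{(y_1,\ldots,y_k):y_1=\cdots=y_k\}$, apply the second item of Lemma~\ref{lem:upper_bound_P_max} (noting that the relation is permutation-invariant, so the inner probability is $2^{-(k-1)n}$), and substitute into Theorem~\ref{thm:lifting_sponge}. This is precisely the generalization of the $k=2$ proof of Corollary~\ref{cor:collision_sponge} that the paper omits as routine.
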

Since its proof is similar to that of \Cref{cor:collision_sponge}, we omit it.

\paragraph{\bf Proof of \Cref{thm:lifting_sponge}}
Finally, we prove \Cref{thm:lifting_sponge}. We prove it by using \Cref{thm:lifting} to reduce quantum security to classical security. To analyze classical security of sponge, we recall the following lemma. 
\begin{lemma}[Classical Indifferentiability of Sponge~\cite{BDP+08}]\label{lem:indiff_sponge}
There exists a stateful algorithm $S$ such that for any algorithm $D$ that makes $q_1$ classical queries to the sponge function $\sponge^\pi[m,n]$ and $q_2$ classical queries to $\pi$ (in either forward or backward direction), it holds that 
\[
\left|\Pr_{\pi}[D^{\sponge^\pi[m,n],\pi}=1]-\Pr_H[D^{H,S^{H}}=1]\right|
\le \frac{N(N+1)}{2^c}
\]
where $\pi$ is a uniformly random permutation on $\{0,1\}^{r+c}$, $H$ is a uniformly random function from $\{0,1\}^m$ to  $\{0,1\}^n$, and $N:=q_1\ell+q_2$. Moreover $S$ makes at most one classical query to $H$ whenever it is queried.
\end{lemma}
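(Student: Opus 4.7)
The plan is to use the standard indifferentiability strategy of Bertoni et al., namely to construct an explicit simulator $S$ that, given oracle access to the random function $H$, maintains a graph of the sponge evaluations that a distinguisher could have induced so far, and responds to $\pi$- and $\pi^{-1}$-queries so as to be jointly consistent with $H$.

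First, I would describe the simulator's internal state. The simulator keeps a table $T$ of input/output pairs $(s,\pi(s))$ of the form $s = r\|c \in \{0,1\}^{r+c}$ that it has already committed to, together with a ``chain graph'' whose nodes are the capacities $c$ that have arisen on the output side of the sponge absorption phase and whose edges are labeled by the rate-block XORs. A node $c'$ is reachable from the all-zero capacity $0^c$ iff there is a sequence of forward queries $s_1 = x_1\|0^c$, $s_2 = (r_1 \oplus x_2)\|c_1$, \dots that traces the absorbing phase of the sponge on some padded message prefix $x_1\|\dots\|x_i$, with $(s_i,\pi(s_i)) \in T$ for all $i$. On a new forward query $s = r\|c$, the simulator checks whether $c$ is currently a chain node $c'$ reached by absorbing a full padded message $x \in \{0,1\}^m$ (i.e.\ whether $r$ is the final padded block of some $x$): if so, it queries $H(x)$ and returns an output whose first $r$ bits are the correct truncation of the sponge output stream (and continues consistently on later queries that ``squeeze'' further blocks); otherwise it samples a uniformly fresh output, conditioned on being a permutation. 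Backward queries to $\pi^{-1}$ are handled analogously, simply by lazy-sampling a preimage from the unused inputs.

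Next, I would argue correctness by a sequence of game hops. \textbf{Game 0} is the real world $D^{\sponge^\pi[m,n],\pi}$, and \textbf{Game 1} replaces $\pi$ by lazy sampling (no loss). \textbf{Game 2} answers a $\sponge$-query by running the sponge on the lazy $\pi$, so that every $\sponge$-query costs at most $\ell$ internal $\pi$-queries and all $N = q_1\ell + q_2$ oracle interactions become primitive queries on the same lazily-sampled $\pi$ (still no loss). \textbf{Game 3} replaces the lazy $\pi$ by the simulator $S^H$ described above; sponge-queries are now answered directly using $H$. The key claim is that Games~2 and~3 are identical unless a \emph{bad} event occurs, namely that two distinct primitive queries collide on their capacity part, or a backward query produces a capacity that already appears as an absorbed-chain node. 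Whenever the bad event does not occur, one checks inductively that every ``chain'' the distinguisher builds with its primitive queries was detected by $S$ at the right moment, so the response from $S^H$ agrees perfectly with what the sponge on the lazy $\pi$ would have output.

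Finally, I would bound $\Pr[\mathrm{bad}]$. With at most $N$ primitive queries outstanding, each new primitive query's capacity part is uniform over a set of size at least $2^c - N$, so it hits any fixed previously-seen capacity (at most $N$ of them, arising from at most $N$ chain nodes plus earlier queries) with probability at most $N/(2^c - N) \le 2N/2^c$ in the relevant parameter regime; a union bound over the $N$ queries gives $\Pr[\mathrm{bad}] \le \binom{N+1}{2}/2^c \cdot O(1)$, which after a careful count matches the stated $N(N+1)/2^c$. The bound $\le 1$ query to $H$ per invocation of $S$ is immediate from the construction: a single primitive query to $S$ triggers at most one ``chain completion'' and hence at most one $H$-query.

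The main obstacle I expect is a bookkeeping one: carefully enumerating all the subtle ways in which a backward query to $\pi^{-1}$, or a forward query whose output the simulator samples randomly, can ``retroactively'' create or extend a chain the simulator did not yet know about. The rigorous treatment requires defining the bad event to cover exactly these accidental chain extensions (collisions in the capacity register between a freshly sampled output and either $0^c$ or any existing chain node) and proving the inductive consistency invariant for the good case, which is where the bulk of the work in the $[BDP^+08]$ proof lies.
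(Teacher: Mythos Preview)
Your sketch correctly outlines the indifferentiability proof of Bertoni et al., and the simulator you describe (chain graph, lazy sampling, at most one $H$-query per invocation) is precisely the standard construction. However, the paper does \emph{not} prove this lemma at all: it is simply quoted from \cite{BDP+08}. The only contribution in the paper is the short remark immediately following the lemma, which takes the bound
\[
1-\prod_{i=1}^N\left(\frac{1-\frac{i+1}{2^c}}{1-\frac{i}{2^r2^c}}\right)
\]
proven in \cite[Theorem~1]{BDP+08} and upper-bounds it algebraically by $N(N+1)/2^c$ (with the trivial observation that the bound also holds when $N\ge 2^c$).

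So the comparison is: you reconstructed the substance of the cited external result, whereas the paper treats it as a black box and only massages the final expression. Your sketch is fine as an outline of \cite{BDP+08}, but for the purposes of this paper no such argument is needed---a citation plus the two-line algebraic simplification in the remark suffices.
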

\begin{remark}
\cite[Theorem 1]{BDP+08} shows a bound
   $1-\prod_{i=1}^N\left(\frac{1-\frac{i+1}{2^c}}{1-\frac{i}{2^r2^c}}\right)$ when $N<2^c$. 
   This can be further upper bounded as follows: 
     \[1-\prod_{i=1}^N\left(\frac{1-\frac{i+1}{2^c}}{1-\frac{i}{2^r2^c}}\right)
     \le 
     1-
\left(1-\frac{N+1}{2^c}\right)^N
\le
1-\left(1-\frac{N(N+1)}{2^c}\right)
=\frac{N(N+1)}{2^c}.
     \]
Moreover, the bound in the lemma  trivially holds when $N\ge 2^c$ since in this case $\frac{N(N+1)}{2^c}\ge 1$. Thus, we obtain the above lemma. 
\end{remark}
Then we prove \Cref{thm:lifting_sponge}.
\begin{proof}[Proof of \Cref{thm:lifting_sponge}]
When $\frac{(k\ell+k+1)^2}{2^c}\ge 1$, the inequality trivially holds. Thus, we only have to prove the theorem assuming $\frac{(k\ell+k+1)^2}{2^c}\le 1$. In this case, we have $\frac{k^2\ell^2}{2^c}\le 1$ and thus $\frac{k^2\ell^2}{2^{r+c}}\le  \frac{1}{2}$. 

Noting that each invocation of $\sponge^\pi[n,m]$ makes $\ell$ queries to $\pi$, \Cref{thm:lifting} directly gives the following:
There exists a $k$-classical-query algorithm $B$ such that 
\begin{align*}
&\Pr_{\pi}\left[
(x_1,...,x_k,\sponge^\pi[m,n](x_1),...,\sponge^\pi[m,n](x_k))\in R:
(x_1,...,x_k) \leftarrow A^{\pi}
\right]\\
&\le 2\cdot (8q+1)^{2k\ell}
\Pr_{\pi}\left[
(x_1,...,x_k,\sponge^\pi[m,n](x_1),...,\sponge^\pi[m,n](x_k))\in R:
(x_1,...,x_k) \leftarrow B^{\pi}
\right]
\end{align*}
where we used $\frac{k^2\ell^2}{2^{r+c}}\le  \frac{1}{2}$. 
Let $P_B$ the probability in the second line. 
To upper bound $P_B$, consider the following algorithm $D$:
\begin{description}
\item[$D^{\sponge^\pi[m,n],\pi}$:]
It runs $B^{\pi}$ to obtain an output $(x_1,...,x_k)$. 
Then it outputs $1$ if and only if $$(x_1,...,x_k,\sponge^\pi[m,n](x_1),...,\sponge^\pi[m,n](x_k))\in R.$$ 
\end{description}
Then we clearly have 
\[
P_B=\Pr[D^{\sponge^\pi[m,n],\pi}=1].
\]
Note that $D$ makes at most $k$ classical queries to $\sponge^{\pi}[m,n]$ and at most $k$ queries to $\pi$ and $\pi^{-1}$. 
Thus, by \Cref{lem:indiff_sponge}, there is $S=(S_0,S_1)$ that makes at most one classical query such that 
\[
\left|\Pr_{\pi}[D^{\sponge^\pi[m,n],\pi}=1]-\Pr_H[D^{H,S^{H}}=1]\right|
\le \frac{(k\ell+k)(k\ell+k+1)}{2^c}\le \frac{(k\ell+k+1)^2}{2^c}.
\] 
By the definition of $D$ and $P_{\mathrm{max}}^R$, we have 
$\Pr_H[D^{H,S^{H}}=1]\le P_{\mathrm{max}}^R$. Combining the above we obtain \Cref{thm:lifting_sponge}.  
\end{proof}
\begin{remark}
The above proof only relies on the fact that the sponge construction satisfies classical indifferentiability. 
Thus, a similar argument works for any construction that satisfies classical indifferentiability in the invertible permutation model or ideal cipher model (e.g., \cite{GBJ+23}). \takashi{I believe there should be more.}
\end{remark}
\subsection{Davies-Meyer and PGV hash functions}\label{sec:davies-meyer}

The Merkle-Damg{\aa}rd hash function is a class of hash functions which uses a `compression function' $f(h, m)$ and use it iteratively to process a long message $M$ block-by-block to compute the hash of the message.  One benefit of the Merkle-Damg{\aa}rd construction is that, as long as $f(h, m)$ is collision-resistant, the Merkle-Damg{\aa}rd hash function constructed from $f$ is also collision-resistant.  Hence, the job of designing a collision-resistant hash function is reduced to designing a secure compression function, which is a fixed-input-length hash function.

The Davies-Meyer construction~\cite{Winternitz84} is a block-cipher-based compression function that underlies SHA-1, SHA-2, and MD5.  When $E$ is a block cipher, then the Davies-Meyer construction turns $E$ into a compression function $f$ by
\[
f(h, m):=E_m(h)\oplus h.
\]

In other words, in the Davies-Meyer construction, the previous state $h$ is first `encrypted' by $E$, using the message block $m$ as the cipher key, and the resulting ciphertext is XORed with $h$ (feed-forward).

Since we cannot expect the cipher key $m$ to be uniform random in this scenario, it would be difficult to prove the security of Davies-Meyer by modeling the block cipher $E$ as a pseudorandom permutation.  But, in the classical setting, it is proven to satisfy one-wayness and collision-resistance in the ideal cipher model~\cite{Winternitz84,BRS02,BRSS2010}. 

In the quantum setting, although it is shown to be one-way in the QICM~\cite{DBLP:conf/asiacrypt/HosoyamadaY18}, its collision-resistance remained an open question. 

In fact, the Davies-Meyer construction can be considered as a special case of a family of block-cipher-based compression functions.  Preneel, Govaerts, and Vandewalle~\cite{PGV93} studied block-cipher-based compression functions of the following form:
\[
f(h, m)=E_k(x)\oplus s,\quad\text{where $k, x, s\in\{c, h, m, h\oplus m\}$.}
\]
Here, $c$ is an arbitrarily fixed constant.  We may call the resulting 64 functions as PGV functions.  PGV studied them from the point of view of cryptanalysis.  They regarded 12 of them as secure, and showed that 39 of them are insecure by exhibiting damaging attacks.  The rest of 13 functions are subject to not very severe potential attacks.

Black, Rogaway, and Shrimpton~\cite{BRS02} studied the classical security of the PGV hash functions from the point of view of provable security.  Let $f_1, \dots, f_{64}$ be the PGV compression functions, and $H_1, \dots, H_{64}$ be the corresponding Merkle-Damg{\aa}rd hash functions.  They classified these hash functions into three groups.  Group-1 consists of 12 hash functions $H_1, \dots, H_{12}$, for which they have proved optimal one-wayness and collision-resistance in the ideal cipher model.  In fact, for group-1, the underlying compression functions $f_{1}, \dots, f_{12}$ have optimal one-wayness and collision-resistance.  Davies-Meyer is $H_5$ in group-1, according to this classification.

Their group-2 consists of 8 hash functions $H_{13}, \dots, H_{20}$, for which they have also proved one-wayness and collision-resistance in the ideal cipher model, but group-2 hash functions have suboptimal one-wayness: the adversarial advantage for breaking one-wayness is bounded by $\Theta(q^2/2^n)$, where $q$ is the total number of queries.  In case of group-2, the underlying compression functions \emph{do not} have one-wayness or collision-resistance: they are easily invertible and also allow quick collision-finding.  But, using the Merkle-Damg{\aa}rd construction, the compression functions can be turned into secure hash functions.

The rest, group-3, consists of 44 hash functions $H_{21}, \dots, H_{64}$, for which there are damaging collision-finding attacks.  

We may lift the security proofs of Black et al.\ on PGV functions to quantum adversaries in the quantum ideal cipher model.  For example, consider the following classical collision-resistance of the group-1 compression functions given in~\cite{BRS02}.

\begin{lemma}
Let $f$ be a group-1 PGV compression function of $n$-bit output.  Then, if $\cA$ is a collision-finding algorithm making $q$ classical queries to the ideal cipher $E$, then the collision-finding probability of $\cA$ is bounded above by $q(q+1)/2^n$.
\end{lemma}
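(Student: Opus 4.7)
The plan is to follow the standard lazy-sampling argument from Black--Rogaway--Shrimpton in the ideal cipher model. First I would normalize the adversary: without loss of generality, $\cA$ makes $q$ distinct cipher queries (no repeated forward or backward queries on the same $(K,x)$ or $(K,y)$), and at the end appends at most two additional ``verification'' queries that make both purported colliding inputs $(h,m)$ and $(h',m')$ appear as transcript entries (these are absorbed into the count $q$ by relabeling). The core observation, which distinguishes the group-1 PGV functions, is that for $f(h,m)=E_K(x)\oplus s$ with $(K,x,s)\in\{c,h,m,h\oplus m\}^3$ belonging to group-1, each cipher query $(K_i,x_i)\mapsto y_i$ determines a unique compression-function input $(h_i,m_i)$ and output $z_i := y_i \oplus s_i$, where $s_i$ is computable from $(K_i,x_i,y_i)$. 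Hence a collision event $(h_i,m_i)\ne (h_j,m_j)$ with $f(h_i,m_i)=f(h_j,m_j)$ is equivalent to a transcript event $z_i=z_j$ for two queries with distinct $(h,m)$.

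Next I would lazily sample the ideal cipher: model each $E_{K}$ as a uniformly random permutation and answer queries consistently with prior answers. For the $i$-th query, suppose first it is a forward query $(K_i,x_i)$. Let $q_i$ be the number of earlier queries sharing key $K_i$; then $y_i$ is uniform in a set of size at least $2^n - q_i \ge 2^n - (i-1)$. Define the bad event $\mathsf{Bad}_i$ as the event that $z_i = z_j$ for some $j<i$ with $(h_j,m_j)\ne(h_i,m_i)$. Because $z_i = y_i \oplus s_i$ and $s_i$ is determined before $y_i$ is drawn, each previous $z_j$ forbids exactly one value of $y_i$. Thus
\begin{equation*}
\Pr[\mathsf{Bad}_i \mid \text{history}] \le \frac{i-1}{2^n - (i-1)}.
\end{equation*}
The same bound applies to backward queries by symmetry of the permutation sampling. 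Here the group-1 property is used crucially: for group-3 PGV forms, an adversary can freely control $z_i$ through a self-collision within a single query (e.g.\ $K=x$ style dependencies), breaking this conditional bound.

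Applying a union bound over $i=1,\ldots,q$ gives
\begin{equation*}
\Pr[\exists\, i : \mathsf{Bad}_i] \;\le\; \sum_{i=1}^{q} \frac{i-1}{2^n-(i-1)} \;\le\; \frac{q(q+1)}{2^n},
\end{equation*}
where the last inequality uses $2^n - (i-1) \ge 2^n/2$ in the nontrivial regime $q \le 2^{n-1}$ (for $q > 2^{n-1}$ the bound is vacuous). Since any adversarial success produces two transcript entries witnessing a collision, this bounds the collision-finding probability by $q(q+1)/2^n$.

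The main obstacle is cleanly justifying the normalization step (that all collision witnesses lie in the transcript) and handling the case analysis across the twelve group-1 variants in a uniform way. Concretely, one must verify for each variant that (i) the map from cipher transcript entry $(K_i,x_i,y_i)$ to $(h_i,m_i,z_i)$ is well-defined and injective modulo the permutation structure, and (ii) two queries produce equal $z$-values only through a genuine collision of the compression function, rather than through an internal degeneracy. This is exactly the case-by-case bookkeeping carried out in~\cite{BRS02}, and once it is in place the probability calculation above is immediate.
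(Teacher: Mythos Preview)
The paper does not give its own proof of this lemma; it simply quotes the bound from \cite{BRS02} and then applies the ideal-cipher lifting theorem to it. Your proposal is essentially a sketch of the Black--Rogaway--Shrimpton argument (lazy sampling of each $E_K$, the group-1 property that a transcript entry $(K_i,x_i,y_i)$ determines a unique $(h_i,m_i,z_i)$, and a union bound over the events $z_i=z_j$), so it is aligned with the cited source and correct in outline. One small wording issue: your sentence ``$s_i$ is determined before $y_i$ is drawn'' is literally true only for forward queries---for a backward query it is the random preimage $x_i$ that plays the role of the fresh uniform value---but you already cover this with the symmetry remark, and the per-query bound $(i-1)/(2^n-(i-1))$ is valid in both directions.
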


Given a potential collsion pair $(h, m)\neq (h', m')$, it takes $k=2$ classical queries to the cipher $E$ to compute $f(h, m)$ and $f(h', m')$ and verify whether it is indeed a collision pair or not.  Hence, using Theorem~\ref{thm:quantum_icm_lifting_informal}, we obtain the following theorem.

\begin{theorem}
Let $f$ be a group-1 PGV compression function of $n$-bit output.  Then, if $\cA$ is a collision-finding algorithm making $q$ quantum queries to the ideal cipher $E$, then the collision-finding probability of $\cA$ is bounded above by
\[
\frac{6(8q+1)^4}{2^n-4}=O\left(\frac{q^4}{2^n}\right).
\]  
\end{theorem}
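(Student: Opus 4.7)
The plan is to invoke \Cref{thm:quantum_icm_lifting_informal} to reduce the quantum collision-finding problem to its classical counterpart, and then apply the cited Black--Rogaway--Shrimpton bound on the resulting classical adversary. I would first recast collision-resistance as an interactive search game $\mathcal{G}$ in the ideal cipher model: the adversary outputs a candidate pair $((h, m), (h', m'))$ with $(h, m) \neq (h', m')$, and the challenger verifies acceptance by computing $f(h, m)$ and $f(h', m')$. For any group-1 PGV form $f(h,m) = E_k(x)\oplus s$ each such evaluation costs exactly one query to $E$, so the challenger makes exactly $k = 2$ classical queries, which is precisely the setting required by the lifting theorem; here $|X| = 2^n$.

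Next, applying \Cref{thm:quantum_icm_lifting_informal} to the $q$-quantum-query adversary $\cA$ yields a classical adversary $\cB$ making at most two classical queries to $E$ and winning $\mathcal{G}$ with probability at least $\frac{1 - 4/2^n}{(8q+1)^4}\,\Pr[\cA\ \text{wins}\ \mathcal{G}]$. Rearranging,
\[
\Pr[\cA\ \text{wins}\ \mathcal{G}] \leq \frac{(8q+1)^4 \cdot 2^n}{2^n - 4}\,\Pr[\cB\ \text{wins}\ \mathcal{G}].
\]
Because $\cB$ is itself a classical two-query collision-finder for $f$, the cited group-1 PGV bound applied with $q = 2$ gives $\Pr[\cB\ \text{wins}\ \mathcal{G}] \leq 2\cdot 3/2^n = 6/2^n$. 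Multiplying the two bounds produces the desired estimate $\frac{6(8q+1)^4}{2^n - 4} = O(q^4/2^n)$.

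There is no genuine technical obstacle: the whole argument is a direct application of two already-available theorems and is the prototypical use case of our lifting framework. The only points requiring attention are (i) making sure collision-verification is ascribed to the challenger rather than absorbed into the adversary (so that the exponent $2k$ in the lifting factor stays at $4$ and is not inflated to $8$), and (ii) confirming that every group-1 PGV form admits a one-query evaluation (immediate from $f(h,m)=E_k(x)\oplus s$). The quartic blowup in $q$ is the expected, inherent loss of the lifting theorem at $k=2$ and matches the $O(q^4/2^n)$ form of the statement.
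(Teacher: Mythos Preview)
Your proposal is correct and matches the paper's approach exactly: the paper also observes that verifying a candidate collision pair costs the challenger $k=2$ cipher queries, invokes \Cref{thm:quantum_icm_lifting_informal} with $k=2$ and $|X|=2^n$, and plugs in the classical BRS bound $q(q+1)/2^n$ at $q=2$ to arrive at $\frac{6(8q+1)^4}{2^n-4}$. Your two ``points requiring attention'' are precisely the ones the paper implicitly relies on, and your arithmetic is correct.
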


While the bound is not tight (when we compare it with the BHT quantum collsion-finding algorithm), still the collsion-resistance of Davies-Meyer or other secure PGV functions in QICM was not known before.

\printbibliography

\appendix

\section{Deferred Proofs} \label{app:deferred}

In this section, we present some proofs deferred for better readability.

\begin{customlemma}{\ref{lem:commutativity_disjoint_pairs}}[Commutativity of reprogramming for disjoint pairs]\label{lem:commutativity_disjoint_pairs_proof}
    For every permutation $\pi: X \to X$, disjoint $\vec{x}, \vec{y} \in X^k \times X^k$, and any permutation $\sigma: [k] \to [k]$, we have:
    \begin{align*}
        \pi [x_1 \to y_1] \ldots [x_k \to y_k] = \pi [x_{\sigma(1)} \to y_{\sigma(1)}] \ldots [x_{\sigma(k)} \to y_{\sigma(k)}].
    \end{align*}
\end{customlemma}
\begin{proof}
We need only to show that
\[
\pi[x\to y][x'\to y']=\pi[x'\to y'][x\to y],
\]
for any $x\neq x'$, $y\neq y'$, and $\pi$.

For this, let us denote by $(a_{1}\ a_{2}\  \dots\, a_{k})$ the cycle, which is the permutation which maps $a_{i}$ to $a_{i+1}$ for $i=1, \dots, k-1$, and also maps $a_{k}$ to $a_{1}$, and leaves everything else.  The special case $(a\ b)$, is the transposition of $a$ and $b$, and $(a\ a)$ is the identity permutation for any $a$.

Now, observe that $\pi[x\to y]=\pi \circ (x\ \pi^{-1}(y))$.  Then, $\pi[x\to y]^{-1}=(x\ \pi^{-1}(y))\circ\pi^{-1}$.

Then,
\[
\begin{aligned}
\pi[x\to y][x'\to y']&=\pi \circ (x\ \pi^{-1}(y))\circ (x'\ \pi[x\to y]^{-1}(y'))\\
&=\pi\circ (x\ \pi^{-1}(y))\circ (x'\ (x\ \pi^{-1}(y))(\pi^{-1}(y'))).
\end{aligned}
\]
Similarly, we have
\[
\pi[x'\to y'][x\to y]=\pi\circ (x'\ \pi^{-1}(y'))\circ (x\ (x'\ \pi^{-1}(y'))(\pi^{-1}(y))).
\]
So, we need only to show that
\[
(x\ \pi^{-1}(y))\circ (x'\ (x\ \pi^{-1}(y))(\pi^{-1}(y')))=(x'\ \pi^{-1}(y'))\circ (x\ (x'\ \pi^{-1}(y'))(\pi^{-1}(y))).
\]
For this, we perform case analysis.  We will consider four cases:

First, when $y'=\pi(x)$ and $y=\pi(x')$, then
\begin{align*}
\text{LHS}&=(x\ x')\circ (x'\ \pi^{-1}(y))=(x\ x')\circ (x'\ x')=(x\ x'),\\
\text{RHS}&=(x'\ x)\circ  (x\ \pi^{-1}(y'))=(x'\ x)\circ (x\ x)=(x'\ x).
\end{align*}
Here, $(x\ \pi^{-1}(y))(\pi^{-1}(y'))=\pi^{-1}(y)=x'$, because $y'=\pi(x) \implies \pi^{-1}(y')=x$.  Similarly we have $(x'\ \pi^{-1}(y'))(\pi^{-1}(y))=\pi^{-1}(y')$.  So we see that the two sides are equal in this case.

Next, consider the case when $y'=\pi(x)$ but $y\neq \pi(x')$.  Then,
\begin{align*}
\text{LHS}&=(x\ \pi^{-1}(y))\circ (x'\ \pi^{-1}(y))=(\pi^{-1}(y)\ x'\ x),\\
\text{RHS}&=(x'\ x)\circ (x\ \pi^{-1}(y))=(\pi^{-1}(y)\ x'\ x).
\end{align*}
Here, $(x'\ \pi^{-1}(y'))(\pi^{-1}(y))=\pi^{-1}(y)$, because $y\neq y'$ and $y\neq \pi(x')$.  Again, the two sides are equal.

The other case when $y'\neq \pi(x)$ but $y=\pi(x')$ can be done symmetrically.

Finally, when $y'\neq \pi(x)$ and $y\neq\pi(x')$, 
\begin{align*}
\text{LHS}&=(x\ \pi^{-1}(y))\circ (x'\ \pi^{-1}(y')),\\
\text{RHS}&=(x'\ \pi^{-1}(y'))\circ (x\ \pi^{-1}(y)).
\end{align*}
Since it is easy to check that $(a\ b)\circ(c\ d)=(c\ d)\circ(a\ b)$, if $\{a, b\}\cap\{c, d\}=\emptyset$, and in this case we have $\{x,\pi^{-1}(y)\}\cap\{x',\pi^{-1}(y')\}=\emptyset$, we see that the two sides agree.

Therefore, we see that $\pi[x\to y][x'\to y']=\pi[x'\to y'][x\to y]$ holds, if $x\neq x'$ and $y\neq y'$.   
\end{proof}

In order to prove Lemma~\ref{lem:reprogram_good_tuples}, we first prove a few simple properties of the reprogramming.

\begin{lemma}\label{lem:good_tuple_properties}
For any permutation $\pi$ and any tuple $(p_1^\ast, \dots, p_k^\ast)$ of pairs $p_i^\ast=(x_i^\ast, y_i^\ast)$ which is good w.r.t.\ $\pi$, the following are all true:
\begin{enumerate}
    \item $\pi[x_1^\ast\to y_1^\ast]\dots[x_k^\ast\to y_k^\ast](x_k^\ast)=y_k^\ast$.
    \item $\pi[x_1^\ast\to y_1^\ast]\dots[x_{k-1}^\ast\to y_{k-1}^\ast](x_k^\ast)=\pi(x_k^\ast)$.
    \item $\pi[x_1^\ast\to y_1^\ast]\dots[x_{k-1}^\ast\to y_{k-1}^\ast]^{-1}(y_k^\ast)=\pi^{-1}(y_k^\ast)$.
    \item $\pi[x_1^\ast\to y_1^\ast]\dots[x_{k}^\ast\to y_{k}^\ast](\pi^{-1}(y_k^\ast))=\pi(x_k^\ast)$.
\end{enumerate}
\end{lemma}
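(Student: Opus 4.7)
The plan is to prove the four items in the order 1, 2, 3, 4, letting $\pi_j := \pi[x_1^\ast\to y_1^\ast]\cdots[x_j^\ast\to y_j^\ast]$ so that the statements become $\pi_k(x_k^\ast) = y_k^\ast$, $\pi_{k-1}(x_k^\ast) = \pi(x_k^\ast)$, $\pi_{k-1}^{-1}(y_k^\ast) = \pi^{-1}(y_k^\ast)$, and $\pi_k(\pi^{-1}(y_k^\ast)) = \pi(x_k^\ast)$. Item 1 is immediate: by construction $\pi_k = \pi_{k-1}[x_k^\ast\to y_k^\ast]$, and the first case of \Cref{def:perm_reprog} directly yields $\pi_k(x_k^\ast)=y_k^\ast$.

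Item 2 is the technical core, and I would prove by induction on $j$ that $\pi_j(x_k^\ast)=\pi(x_k^\ast)$ for all $0\le j\le k-1$. The base $j=0$ is trivial. For the step, examine the three cases in the definition of $\pi_{j-1}[x_j^\ast\to y_j^\ast](x_k^\ast)$: the possibility $x_k^\ast=x_j^\ast$ is excluded by disjointness of $(p_1^\ast,\dots,p_k^\ast)$; the possibility $x_k^\ast=\pi_{j-1}^{-1}(y_j^\ast)$ would force $\pi_{j-1}(x_k^\ast)=y_j^\ast$, and combined with the induction hypothesis this gives $\pi(x_k^\ast)=y_j^\ast$, contradicting the second clause of \Cref{def:good}. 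Hence we are in the third case and $\pi_j(x_k^\ast)=\pi_{j-1}(x_k^\ast)=\pi(x_k^\ast)$. Item 3 follows by the exact same induction applied to the inverse: by \Cref{lem:reprogramming_inverse}, $\pi_{k-1}^{-1}=\pi^{-1}[y_1^\ast\to x_1^\ast]\cdots[y_{k-1}^\ast\to x_{k-1}^\ast]$, so we want $\pi^{-1}$-analogues of the two conditions at input $y_k^\ast$, namely $y_k^\ast\ne y_j^\ast$ (disjointness) and $\pi^{-1}(y_k^\ast)\ne x_j^\ast$, i.e.\ $y_k^\ast\ne \pi(x_j^\ast)$, which is again the goodness condition.

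Item 4 is then a one-line consequence. Setting $z:=\pi^{-1}(y_k^\ast)$ and using $\pi_k=\pi_{k-1}[x_k^\ast\to y_k^\ast]$, item 3 gives $\pi_{k-1}^{-1}(y_k^\ast)=z$, and $z\ne x_k^\ast$ because otherwise $\pi(x_k^\ast)=y_k^\ast$, violating goodness; so the second case of \Cref{def:perm_reprog} applies and $\pi_k(z)=\pi_{k-1}(x_k^\ast)$, which equals $\pi(x_k^\ast)$ by item 2. The main conceptual obstacle, which drives the entire argument, is that each reprogramming step has two effects on the underlying permutation: it plants $y_j^\ast$ at $x_j^\ast$ \emph{and} it re-routes the preimage of $y_j^\ast$ to $\pi_{j-1}(x_j^\ast)$; the goodness hypothesis is exactly what prevents either of these side effects from ever touching $x_k^\ast$ or $y_k^\ast$, which is why both clauses of \Cref{def:good} must be invoked in the induction.
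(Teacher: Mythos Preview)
Your proof is correct and follows essentially the same approach as the paper. The only organizational difference is that for Item~2 you induct on the number $j$ of reprogrammings already applied (with $k$ fixed), whereas the paper inducts on $k$ by peeling off the first reprogramming and checking that the remaining tuple is good with respect to $\pi[x_1^\ast\to y_1^\ast]$; both unwindings rely on exactly the same two observations (disjointness rules out the first case, goodness via the induction hypothesis rules out the second), and Items~3 and~4 are derived identically in both arguments.
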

\begin{proof}
Let us prove the above statements one by one.
\begin{itemize}
    \item Statement~1: It directly follows from the definition of the permutation reprogramming.
    \item Statement~2: It can be proved by mathematical induction.
    
    The base case is where $k=1$, and in that case the equality holds trivially. 
    
    Now, suppose that this statement holds for some $k-1$, and we prove it for $k$.  
    
    Consider the permutation $\pi'=\pi[x_1^\ast\to y_1^\ast]$.  Then, we can see that the tuple $(p_2^\ast, \dots, p_k^\ast)$ is good w.r.t.\ $\pi'$: for any $i, j>1$, since $x_i^\ast\neq x_1^\ast$ and also $x_i^\ast\neq \pi^{-1}(y_1^\ast)$ (due to goodness), we get $\pi'(x_i^\ast)=\pi[x_1^\ast\to y_1^\ast](x_i^\ast)=\pi(x_i^\ast)\neq y_j^\ast$ by the definition of reprogramming and also by goodness.

    Then, according to the induction hypothesis applied to $\pi'$ and $(p_2^\ast, \dots, p_k^\ast)$, a tuple of length $k-1$, we have 
    \[
    \pi'[x_2^\ast\to y_2^\ast]\dots[x_{k-1}^\ast\to y_{k-1}^\ast](x_k^\ast)=\pi'(x_k^\ast).
    \]
    This shows that $\pi[x_1^\ast\to y_1^\ast]\dots[x_{k-1}^\ast\to y_{k-1}^\ast](x_k^\ast)=\pi'(x_k^\ast)$, which we have already seen to be equal to $\pi(x_k^\ast)$, finishing the mathematical induction.
    \item Statement~3: This is a direct consequence of Statement~2, after we invert the reprogramming using Lemma~\ref{lem:reprogramming_inverse}.
    \item Statement~4: Let us define $\pi'=\pi[x_1^\ast\to y_1^\ast]\dots[x_{k-1}^\ast\to y_{k-1}^\ast]$.  Then, Statement~2 says that $\pi'(x_k^\ast)=\pi(x_k^\ast)$, and Statement~3 says that $\pi'^{-1}(y_k^\ast)=\pi^{-1}(y_k^\ast)$.  Then,
    \[
    \begin{aligned}
        \pi[x_1^\ast\to y_1^\ast]\dots[x_{k}^\ast\to y_{k}^\ast](\pi^{-1}(y_k^\ast))
        &=\pi'[x_{k}^\ast\to y_{k}^\ast](\pi^{-1}(y_k^\ast))\\
        &=\pi'[x_{k}^\ast\to y_{k}^\ast](\pi'^{-1}(y_k^\ast))\\
        &=\pi'(x_k^\ast)=\pi(x_k).
    \end{aligned}
    \]
\end{itemize}
\end{proof}

\begin{customlemma}{\ref{lem:reprogram_good_tuples}}[Reprogramming on good tuples]\label{lem:reprogram_good_tuples_proof}
    Consider any permutation $\pi$ and $k$ pairs $p_1^\ast,\dots, p_k^\ast$ with $p_j^\ast=(x_j^\ast, y_j^\ast)$ for $j=1, \dots, k$.  Suppose the tuple of pairs $(p_1^\ast, \dots, p_k^\ast)$ is good w.r.t.\ $\pi$. Then we have: 
    \[
    \pi[\vec{x}^* \to \vec{y}^*](z)=
    \begin{cases}
    y_j^\ast & \text{if $z=x_j^\ast$ for some $j\in[k]$,}\\
    \pi(x_j^\ast) & \text{if $z=\pi^{-1}(y_j^\ast)$ for some $j\in[k]$,}\\
    \pi(z) & \text{otherwise.}
    \end{cases}
    \]
\end{customlemma}
\begin{proof}
We have to prove this case by case.

First, for the case $z=x_j^\ast$ for some $j\in [k]$, we need to show
\[
\pi[x_1^\ast\to y_1^\ast]\dots[x_k^\ast\to y_k^\ast](x_j^\ast)=y_j^\ast.
\]

In fact, Statement~1 of Lemma~\ref{lem:good_tuple_properties} already shows this equation for the case $j=k$, but since we can shuffle the order or reprogramming arbitrarily, thanks to Lemma~\ref{lem:commutativity_disjoint_pairs}, the general case also follows from Statement~1.

Second, for the case $z=\pi^{-1}(y_j^\ast)$ for some $j\in [k]$, we need to show
\[
\pi[x_1^\ast\to y_1^\ast]\dots[x_k^\ast\to y_k^\ast](\pi^{-1}(y_j^\ast))=\pi(x_j^\ast).
\]

Again, Statement~4 of Lemma~\ref{lem:good_tuple_properties}, together with Lemma~\ref{lem:commutativity_disjoint_pairs}, proves this equality.

Finally, for the case $z\neq x_j^\ast$ and $z\neq \pi^{-1}(y_j^\ast)$ for any $j\in[k]$, we need to show
\[
\pi[x_1^\ast\to y_1^\ast]\dots[x_k^\ast\to y_k^\ast](z)=\pi(z).
\]

Observe that we need only to prove that it is possible to get rid of just one reprogramming:
\[
\pi[x_1^\ast\to y_1^\ast]\dots[x_k^\ast\to y_k^\ast](z)=
\pi[x_1^\ast\to y_1^\ast]\dots[x_{k-1}^\ast\to y_{k-1}^\ast](z).
\]

From this, we get the desired equality by repeatedly removing one reprogramming at a time.

Let $\pi'=\pi[x_1^\ast\to y_1^\ast]\dots[x_{k-1}^\ast\to y_{k-1}^\ast]$.  Then we need to show
\[
\pi'[x_k^\ast\to y_k^\ast](z)=\pi'(z).
\]

But since $z\neq x_k^\ast$ and $z\neq \pi'^{-1}(y_k^\ast)$, this just comes from the definition of reprogramming.  (Note that Statement~3 of Lemma~\ref{lem:good_tuple_properties} says that $\pi'^{-1}(y_k^\ast)=\pi^{-1}(y_k^\ast)$.)   
\end{proof}

\section{Handling Interactive Setting} \label{app:interactive}


In this section, we extend our lifting theorem in order to show that the lifting also holds in the interactive setting, when $\cC$ and $\cA$ are allowed send multiple messages and their queries can depend on the interaction. 
At a high level, we can show this by suitably adapting the strategy of \cite{YZ21} to the permutation setting.
More formally, we can show that:
\begin{theorem} \label{thm:lifting_interactive}
    Let $G$ be any search-type classically verifiable game (as defined in \cite{YZ21}) played with a Challenger that performs at most $k$ classical queries to a random permutation $\pi^* : X \rightarrow X$.
    Then for any adversary $\cA$ equipped with $q$ quantum queries against the game $G$, we can construct a simulator $\cB$ performing at most $k$ classical queries such that:
    $$ \Pr_{\pi^*}[\cB^{\pi^*} \text{ wins } G]  \geq \frac{\left(1 - \frac{k^2}{|X|}\right)}{(8q+1)^{2k}}\Pr_{\pi}[\cA^{\pi^*} \text{ wins } G] $$
\end{theorem}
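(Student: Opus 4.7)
The plan is to extend the measure-and-reprogram machinery of \Cref{sec:quantum_lifting} so that the reprogramming points $\vec{x}^*$ are chosen adaptively by the challenger during the interaction, rather than fixed upfront as in \Cref{thm:quantum_lifting}. This follows the same template by which \cite{YZ21} extended its non-interactive lifting theorem for the QROM.

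First, I would construct an interactive simulator $S^{\mathrm{int}}[\cA,\pi,\pi^*]$ that internally samples a uniformly random permutation $\pi$ and simulates the entire execution of the game $G$ with $\cA$, using $\pi$ (and subsequent reprogrammings of it) as the oracle given to $\cA$. Whenever the challenger $\cC$ makes its $j$-th classical query, $S^{\mathrm{int}}$ forwards it to its own oracle $\pi^*$ to obtain the true response, thereby learning the pair $(x_j^*, y_j^*)$ at that moment. In parallel, it samples $(v_j, b_j, c_j)$ exactly as in \Cref{def:quantum_simulator} and attempts to catch and reprogram at the $v_j$-th of $\cA$'s quantum queries. The lifted adversary $\cB^{\pi^*}$ is then simply $S^{\mathrm{int}}[\cA,\pi,\pi^*]$ with $\pi$ sampled freshly and uniformly, which uses at most $k$ classical queries to $\pi^*$ in total (one per challenger query).

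The core analytic step is to adapt the state decomposition from the proof of \Cref{lemma:quantum_measure_reprogram}. The key observation is that this decomposition is \emph{local}: before each of $\cA$'s quantum queries it splits the current state into hit, miss, and untouched branches using only the reprogramming pairs that have already been declared. Because the interactive protocol intersperses classical messages and $\cC$'s queries between $\cA$'s quantum queries without touching $\cA$'s quantum registers, we may apply the decomposition sequentially, refining it as each new $(x_j^*, y_j^*)$ is revealed. Inductively the final joint state decomposes into at most $(8q+1)^{2k}$ subnormalized components indexed by $(\vec v, \vec b, \vec c)$, and the Cauchy--Schwarz argument at the end of the proof of \Cref{lemma:quantum_measure_reprogram} transfers this to the winning probability.

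The main obstacle will be the adaptive nature of the goodness condition. Since $\vec{x}^*$ now depends on $\pi^*$ through the interaction, \Cref{lem:bad_prob} cannot be invoked directly. The plan is to argue that at the moment of $\cC$'s $j$-th query only $O(k)$ points of $\pi^*$ have ever been fixed, so the freshly produced $(x_j^*,y_j^*)$ is near-uniform on the unfixed range; disjointness of the $x_j^*$'s comes for free by assuming $\cC$'s classical queries are distinct, and a union bound over the $k^2$ potential collisions $\pi(x_i^*)=y_j^*$ then yields a total bad-event probability of at most $k^2/|X|$. Combining this with the adaptive decomposition and repeating the expectation argument used at the end of the proof of \Cref{thm:quantum_lifting} (now summing over all possible adaptive transcripts rather than a fixed $\vec{x}^*$) gives the claimed $(1-k^2/|X|)/(8q+1)^{2k}$ lifting factor.
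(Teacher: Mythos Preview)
Your overall plan is in the right direction, but the paper takes a simpler route that sidesteps the adaptivity issues you anticipate. Instead of redoing the state decomposition incrementally, the paper reduces the interactive case \emph{black-box} to the non-interactive \Cref{lemma:quantum_measure_reprogram}. It defines a single $q$-query oracle algorithm $S_{\sf interact}^{\pi'}$ that internally simulates the entire $\cA$--$\cC$ interaction, routing $\cA$'s oracle calls to $\pi'$ while answering $\cC$'s oracle calls with the fixed $\pi^*$, and outputting $\cC$'s query tuple $\vec{x}$ together with the transcript. The winning condition is encoded as a relation $R_{\sf view}$ on accepting challenger views. Applying \Cref{lemma:quantum_measure_reprogram} to $(S_{\sf interact}, R_{\sf view})$ for each fixed $\vec{x}^*$ and $(\pi,\pi^*)\in G[\vec{x}^*]$, and observing that on the event $\vec{x}=\vec{x}^*$ the reprogrammed oracle $\pi[\vec{x}^*\to\vec{y}^*]$ agrees with $\pi^*$ on all of $\cC$'s inputs, yields the desired inequality after summing over $\vec{x}^*$. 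Because $\vec{x}^*$ is a summation index rather than a random variable at the point where \Cref{lem:bad_prob} and \Cref{lem:uniform} are invoked, no adaptive goodness argument is needed; both lemmas apply verbatim.

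Two points in your proposal would need correction. First, $\cB$'s $k$ queries to $\pi^*$ are the measure-and-reprogram queries of \Cref{def:quantum_simulator} (made at the \emph{measured} points $x'_{v_j}$), not forwarded challenger queries; in the actual game the challenger is external and makes its own oracle calls, which are not charged to $\cB$. Second, your plan to ``refine the decomposition as each $(x_j^*,y_j^*)$ is revealed'' runs into an ordering problem: the hit/miss projection for index $j$ may have to be placed at a quantum query $v_j$ that occurs \emph{before} the round in which $\cC$ issues its $j$-th oracle call, so $x_j^*$ cannot be treated as already known at that moment in an incremental analysis. The paper's device of fixing $\vec{x}^*$ upfront in the analysis and then conditioning on the output event $\vec{x}=\vec{x}^*$ is precisely what dissolves this circularity.
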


\begin{proof}
    
The algorithm $\cB^{\pi^*}$ will follow exactly the simulator in the permutation measure-and-reprogram experiment in \Cref{def:quantum_simulator} 
for uniformly chosen $\pi$
, with the only difference that in the interactive setting, $\cB$ will run the algorithm $\cA$ by forwarding all messages supposed to be sent to $\cC$ to the external challenger and forwarding all messages sent
back from the external challenger to $\cA$.

As argued in \cite{YZ21}, we can assume that the challenger $\cC$ is deterministic without loss of generality.
We define $\cC$'s view as consisting of the set of $k$ query response pairs together with the transcript consisting of the messages exchanged with $\cA$. More formally, we will denote the view by: $v := (\vec{x} = (x_1, ..., x_k) , \vec{y} = (y_1, ..., y_k), t)$, where 
$y_i = \pi^*(x_i)$
and $t$ denotes the transcript.

Additionally, we define the relation $R_{\sf view}$ corresponding to all accepting views of $\cC$, in other words, all possible views of the challenger that would result in $\cC$ outputting accept. More concretely, we say $v = (\vec{x}, \vec{y}, t) \in R_{\sf view}$ if and only if the following view verification algorithm ${\sf VerView}(\vec{x}, \vec{y}, t)$ accepts. \\
${\sf VerView}(\vec{x}, \vec{y}, t)$:
\begin{itemize}
    \item Run $\cC$ such that the messages supposed to be sent from $\cA$ and its query responses are consistent with the view $v$;
    \item If $\cC$ cannot be consistent with $v$, then output reject;
    \item Else output $\cC$'s output.
\end{itemize}

The goal is to apply the measure and reprogram lemma with the target relation instantiated as $R_{\sf view}$ and where the target algorithm will be an algorithm $S_{\sf interact}^{\pi'}$ simulating the interaction between $\cA$ and $\cC$. Specifically, for any permutation $\pi': X \rightarrow X$, we define $S_{\sf interact}^{\pi'}$ as follows. \\
$S_{\sf interact}^{\pi'}$:
\begin{itemize}
    \item $\cA$'s queries are forwarded to $\pi'$ and responded using $\pi'$;
    \item For every query $x_i$ of $\cC$ respond with the original 
    $\pi^*(x_i)$
    , for $i \in [k]$;
    \item Output $\cC$'s queries $\vec{x} = (x_1, ..., x_k)$ and the transcript $t'$ between $\cA$ and $\cC$.
\end{itemize}




By \Cref{lemma:quantum_measure_reprogram} where we instantiate
$R$ with $R_{\sf view}$ and the adversary $\cA$ with $S_{\sf interact}$, 
for any distinct $\vec{x}^*$ and $(\pi,\pi^*)\in G[\vec{x}^*]$,
we have:

\begin{align} \label{eq:sim_trans}
    &\Pr\left[
\begin{array}{ll}
(x^*_1,...,x^*_k,y^*_1,...,y^*_k,z)\in R_{\sf view} \\
~\land ~
\forall j\in[k]~x_j=x^*_j\\
\end{array}
:(x_1,...,x_k,z)\leftarrow S[S_{\sf interact}, \pi,\pi^*]\right]\\
&\ge  \nonumber
\frac{1}{(8q+1)^{2k}}
\Pr\left[
\begin{array}{ll}
(x^*_1,...,x^*_k,y^*_1,...,y^*_k,z)\in R_{\sf view} \\
~\land ~
\forall j\in[k]~x_j=x^*_j\\
\end{array}
:(x_1,...,x_k,z)\leftarrow S_{\sf interact}^{\pi[\vec{x}^* \rightarrow \vec{y}^*]}\right] 
\end{align}
where $y_i^* = \pi^*(x_i^*)$ for any $i \in [k]$.

Now let us examine the simulator $S_{\sf interact}^{\pi[\vec{x}^* \rightarrow \vec{y}^*]}$.
By construction, $S_{\sf interact}^{\pi[\vec{x}^* \rightarrow \vec{y}^*]}$ (where we instantiate $\pi' = {\pi[\vec{x}^* \rightarrow \vec{y}^*]}$) will simulate the interaction between the original adversary $\cA$ and the challenger $\cC$ of the game $G$, where now $\cA$'s oracle queries are simulated by $\pi[\vec{x}^* \rightarrow \vec{y}^*]$,
$\cC$'s queries are simulated by the original permutation 
$\pi^*$,
and $S_{\sf interact}^{\pi[\vec{x}^* \rightarrow \vec{y}^*]}$ will output $\cC$'s queries $\vec{x}$ and the transcript $z := t'$ (between $\cA$ and $\cC$).

Additionally, conditioned on $\vec{x} = \vec{x}^*$, $S_{\sf interact}^{\pi[\vec{x}^* \rightarrow \vec{y}^*]}$ will simulate the interaction between $\cA$ and $\cC$ where now $\cA$ and $\cC$ are going to query exactly the same oracle $\pi[\vec{x}^* \rightarrow \vec{y}^*]$, as for any $x^* \in \vec{x}^*$ we have: $\pi[\vec{x}^* \rightarrow \vec{y}^*](x^*) = \pi^*(x^*) = y^*$.
Moreover, conditioned on $\vec{x} = \vec{x}^*$, the condition $(\vec{x}, \pi^*(\vec{x}^*), z) \in  R_{\sf view}$ is equivalent to $\cA^{\pi[\vec{x}^* \rightarrow \pi^*(\vec{x}^*)]}$ wins $\cC^{\pi[\vec{x}^* \rightarrow \pi^*(\vec{x}^*)]}$ in the executed simulation of $S_{\sf interact}^{\pi[\vec{x}^* \rightarrow \pi^*(\vec{x}^*)]}$.

Therefore, we can rewrite \Cref{eq:sim_trans} as: 
\begin{align} \label{eq:sim_trans_rewrite}
    &\Pr\left[
\begin{array}{ll}
(x^*_1,...,x^*_k,y^*_1,...,y^*_k,z)\in R_{\sf view} \\
~\land ~
\forall j\in[k]~x_j=x^*_j\\
\end{array}
:(x_1,...,x_k,z)\leftarrow S[S_{\sf interact}, \pi,\pi^*]\right]\\
&\ge  \nonumber
\frac{1}{(8q+1)^{2k}}
\Pr\left[ \vec{x} = \vec{x}^* ~\land ~ \cA^{\pi[\vec{x}^* \rightarrow \vec{y}^*]} \text{ wins } G \text{ with } \cC^{\pi[\vec{x}^* \rightarrow \vec{y}^*]}\right]
\end{align}
where $\vec{x}$ denotes the queries made by $\cC$.
\if0
\begin{align*}
    \Pr & \left[
\begin{array}{ll}
(x^*_1,...,x^*_k,\pi^*(x^*_1),...,\pi^*(x^*_k),z)\in R_{\sf view} \\
~\land ~
\forall j\in[k]~x_j=x^*_j\\
\end{array}
:(x_1,...,x_k,z)\leftarrow S_{\sf interact}^{\pi[\vec{x}^* \rightarrow \revise{\vec{y}^*}]}\right] \\
&= \Pr\left[ \vec{x} = \vec{x}^* ~\land ~ \cA^{\pi[\vec{x}^* \rightarrow \vec{y}^*]} \text{ wins } G \text{ with } \cC^{\pi[\vec{x}^* \rightarrow \revise{\vec{y}^*}]}\right]
\end{align*}
where \revise{$\vec{x}$} denotes the queries made by $\cC$.
\fi 

By \Cref{lem:uniform}, 
for any fixed $\vec{x}^*$,
if $\pi$ and $\pi^*$ are uniformly picked from $G[\vec{x}^*]$, then $\pi[\vec{x}^* \rightarrow \pi^*(\vec{x}^*)]$ is also uniform over all permutations.
Then, by taking the average over $(\pi, \pi^*) \in G[\vec{x}^*]$ and by summing over all $\vec{x}^*$ in \Cref{eq:sim_trans_rewrite}, we get:
\begin{align}\label{eq:upperbound_A_wins}
    & \sum_{\vec{x}^*} \Pr_{\pi, \pi^* \leftarrow G[\vec{x}^*]}\left[
\begin{array}{ll}
(x^*_1,...,x^*_k,y^*_1,...,y^*_k,z)\in R_{\sf view} \\
~\land ~
\forall j\in[k]~x_j=x^*_j\\
\end{array}
:(x_1,...,x_k,z)\leftarrow S[S_{\sf interact}, \pi,\pi^*]\right]\\
&\ge  \nonumber
\frac{1}{(8q+1)^{2k}}
\Pr_{\pi^*} \left[ \cA^{\pi^*} \text{ wins } G \text{ with } \cC^{\pi^*}\right].
\end{align}

\if0
\begin{align*}
\Pr_{\pi}[\cB^{\pi} \text{ wins } G] 
&\geq 
\left(1 - \frac{k^2}{|X|}\right) \cdot \\
&\sum_{\vec{x}^*} \Pr_{\pi, \pi^* \leftarrow G[\vec{x}^*]}\left[
\begin{array}{ll}
(x^*_1,...,x^*_k,y^*_1,...,y^*_k,z)\in R_{\sf view} \\
~\land ~
\forall j\in[k]~x_j=x^*_j\\
\end{array}
:(x_1,...,x_k,z)\leftarrow S[S_{\sf interact}, \pi,\pi^*]\right]. 
\end{align*}
\fi

On the other hand, we have:
\begin{align*}
&\Pr_{\pi^*}[\cB^{\pi^*} \text{ wins } G]\\
&=
\sum_{(x^*_1,...,x^*_k)}\Pr_{\pi,\pi^*}\left[
\begin{array}{ll}
(x^*_1,...,x^*_k,y^*_1,...,y^*_k,z)\in R_{\sf view} \\
~\land ~
\forall j\in[k]~x_j=x^*_j 
\end{array}
:(x_1,...,x_k,z)\leftarrow S[S_{\sf interact}, \pi, \pi^*]\right]\\
&\ge
\sum_{(x^*_1,...,x^*_k)}
\Pr_{\pi,\pi^*}[(\pi,\pi^*)\in G[\vec{x}^*]]\\
&\cdot 
\Pr_{(\pi,\pi^*)\leftarrow G[\vec{x}^*]}\left[
\begin{array}{ll}
(x^*_1,...,x^*_k,y^*_1,...,y^*_k,z)\in R_{\sf view} \\
~\land ~
\forall j\in[k]~x_j=x^*_j\} 
\end{array}
:(x_1,...,x_k,z)\leftarrow S[S_{\sf interact}, \pi, \pi^*]\right]\\
&\ge
\sum_{(x^*_1,...,x^*_k)}
\left(1-\frac{k^2}{|X|}\right)\\
&\cdot 
\Pr_{(\pi,\pi^*) \leftarrow G[\vec{x}^*]}\left[
\begin{array}{ll}
(x^*_1,...,x^*_k,y^*_1,...,y^*_k,z)\in R_{\sf view} \\
~\land ~
\forall j\in[k]~x_j=x^*_j\} 
\end{array}
:(x_1,...,x_k,z)\leftarrow S[S_{\sf interact}, \pi, \pi^*]\right]
\end{align*}
where the second inequality follows from \Cref{lem:bad_prob}.
Combining the above with \Cref{eq:upperbound_A_wins},
we conclude the proof of \Cref{thm:lifting_interactive}. 
\end{proof}

\section{Proof of \Cref{lem:quantum_correctness_of_reprogramming}\label{correctness_reprogramming} }
\begin{proof}[Proof of \Cref{lem:quantum_correctness_of_reprogramming}.]
By the assumption, $v_j\in [q]$. We analyze the behavior of $S[\cA,\pi,\pi^*]$ at the $v_j$-th query 
for each case: 
\paragraph{First case.}
In this case, $b_j=0$, and the measured query is a forward query $x'_{v_j}=x_j^{\mathsf{hit}}$. 
By the definition of $x_j^{\mathsf{hit}}$ (Definition 6), this means that $x'_{v_j}=x_j^*$. 
By the description of $S[\cA,\pi,\pi^*]$, the simulator queries $x'_{v_j}$ to $\pi^*$ to obtain 
$y'_{v_j}=\pi^*(x'_{v_j})=\pi^*(x_j^*)=y_j^*$.
Then the simulator reprograms $O$ to $O[x'_{v_j}\rightarrow y'_{v_j}]$ (before or after answering  $\cA$'s query according to $c_j$.) As shown above,  $x'_{v_j}=x^*_j$ and  $y'_{v_j}=y^*_j$. Thus, the claim of the lemma is satisfied. 

\paragraph{Second case.}
In this case, $b_j=0$, and the measured query is a backward query $y'_{v_j}=y_j^{\mathsf{hit}}$. 
By the definition of $y_j^{\mathsf{hit}}$ (Definition 6), this means that $y'_{v_j}=y_j^*$. 
By the description of $S[\cA,\pi,\pi^*]$, the simulator queries $y'_{v_j}$ to ${\pi^*}^{-1}$ to obtain 
$x'_{v_j}={\pi^*}^{-1}(y'_{v_j})={\pi^*}^{-1}(y_j^*)=x_j^*$.
Then the simulator reprograms $O$ to $O[x'_{v_j}\rightarrow y'_{v_j}]$ (before or after answering  $\cA$'s query according to $c_j$.) As shown above,  $x'_{v_j}=x^*_j$ and  $y'_{v_j}=y^*_j$. Thus, the claim of the lemma is satisfied. 

\paragraph{Third case.}
In this case, $b_j=1$, and the measured query is a forward query $x'_{v_j}=x_j^{\mathsf{miss}}$. 
By the definition of $x_j^{\mathsf{miss}}$ (Definition 6), this means that $x'_{v_j}=\pi^{-1}(y_j^*)$ and thus $\pi(x'_{v_j})=y^*_j$.  
By the description of $S[\cA,\pi,\pi^*]$, the simulator queries $\pi(x'_{v_j})$ to ${\pi^*}^{-1}$ to obtain 
${\pi^*}^{-1}(\pi(x'_{v_j}))
={\pi^*}^{-1}(y_j^*)=x_j^*$. 
Then the simulator reprograms $O$ to $O[{\pi^*}^{-1}(\pi(x'_{v_j}))\rightarrow \pi(x'_{v_j})]$ (before or after answering  $\cA$'s query according to $c_j$.) As shown above,  ${\pi^*}^{-1}(\pi(x'_{v_j}))=x^*_j$ and  $\pi(x'_{v_j})=y^*_j$. Thus, the claim of the lemma is satisfied. 

\paragraph{Fourth case.}
In this case, $b_j=1$, and the measured query is a backward query $y'_{v_j}=y_j^{\mathsf{miss}}$. 
By the definition of $y_j^{\mathsf{miss}}$ (Definition 6), this means that $y'_{v_j}=\pi(x_j^*)$ and thus $\pi^{-1}(y'_{v_j})=x^*_j$.  
By the description of $S[\cA,\pi,\pi^*]$, the simulator queries $\pi^{-1}(y'_{v_j})$ to $\pi^*$ to obtain 
$\pi^*(\pi^{-1}(y'_{v_j}))
=\pi^*(x_j^*)=y_j^*$. 
Then the simulator reprograms $O$ to $O[\pi^{-1}(y'_{v_j})\rightarrow \pi^*(\pi^{-1}(y'_{v_j}))]$ (before or after answering  $\cA$'s query according to $c_j$.) As shown above,  $\pi^{-1}(y'_{v_j})=x^*_j$ and  $\pi^*(\pi^{-1}(y'_{v_j}))=y^*_j$. Thus, the claim of the lemma is satisfied. 
\end{proof}

\end{document}